\PassOptionsToPackage{hyphens}{url}
\PassOptionsToPackage{dvipsnames,svgnames,x11names}{xcolor}
\documentclass[12pt]{article}

\usepackage[round, authoryear]{natbib}

\usepackage{xr-hyper}
\usepackage{hyperref} 

\usepackage{amsmath}
\usepackage{graphicx, psfrag, epsf, orcidlink, setspace,enumitem}
\usepackage[font=small]{caption}
\usepackage{url} 
\usepackage[margin=1in, showframe=false]{geometry}
\usepackage{float,bm,booktabs,makecell,amsthm,amssymb,amsfonts,dsfont,mathtools,authblk,subcaption}

\usepackage{titlesec}

\titlespacing*{\section}{0pt}{8pt plus 4pt minus 2pt}{6pt plus 2pt}
\titlespacing*{\subsection}{0pt}{6pt plus 3pt minus 2pt}{4pt plus 2pt}
\titlespacing*{\subsubsection}{0pt}{6pt plus 2pt minus 2pt}{4pt plus 1pt}

\usepackage{algorithm}
\usepackage{algorithmic}
\allowdisplaybreaks

\usepackage[font=small]{caption}
\usepackage{tabularx}
\usepackage{ragged2e}
\graphicspath{{plots/}}

\usepackage{amssymb}
\usepackage{hyperref}
\usepackage{multirow}
\usepackage{booktabs}
\usepackage{mathrsfs, xcolor} 
\definecolor{darkblue}{rgb}{0,0,.6}
\hypersetup{colorlinks = true, linkcolor=darkblue, urlcolor=darkblue, citecolor=darkblue}
\usepackage[patch=none]{microtype}

\usepackage{iftex}
\ifPDFTeX
  \usepackage[T1]{fontenc}
  \usepackage[utf8]{inputenc}
  \usepackage{textcomp} 
\else 
  \usepackage{unicode-math}
  \defaultfontfeatures{Scale=MatchLowercase}
  \defaultfontfeatures[\rmfamily]{Ligatures=TeX,Scale=1}
\fi
\usepackage{lmodern}
\ifPDFTeX\else  
\fi
\IfFileExists{upquote.sty}{\usepackage{upquote}}{}
\IfFileExists{microtype.sty}{
  \usepackage[]{microtype}
  \UseMicrotypeSet[protrusion]{basicmath} 
}{}
\makeatletter
\@ifundefined{KOMAClassName}{
  \IfFileExists{parskip.sty}{%
    \usepackage{parskip}
  }{
    \setlength{\parindent}{0pt}
    \setlength{\parskip}{6pt plus 2pt minus 1pt}}
}{
  \KOMAoptions{parskip=half}}
\makeatother
\usepackage{xcolor}
\makeatletter
\ifx\paragraph\undefined\else
  \let\oldparagraph\paragraph
  \renewcommand{\paragraph}{
    \@ifstar
      \xxxParagraphStar
      \xxxParagraphNoStar
  }
  \newcommand{\xxxParagraphStar}[1]{\oldparagraph*{#1}\mbox{}}
  \newcommand{\xxxParagraphNoStar}[1]{\oldparagraph{#1}\mbox{}}
\fi
\ifx\subparagraph\undefined\else
  \let\oldsubparagraph\subparagraph
  \renewcommand{\subparagraph}{
    \@ifstar
      \xxxSubParagraphStar
      \xxxSubParagraphNoStar
  }
  \newcommand{\xxxSubParagraphStar}[1]{\oldsubparagraph*{#1}\mbox{}}
  \newcommand{\xxxSubParagraphNoStar}[1]{\oldsubparagraph{#1}\mbox{}}
\fi
\makeatother

\usepackage{longtable,booktabs,array}
\usepackage{calc} 
\usepackage{etoolbox}
\makeatletter
\patchcmd\longtable{\par}{\if@noskipsec\mbox{}\fi\par}{}{}
\makeatother
\IfFileExists{footnotehyper.sty}{\usepackage{footnotehyper}}{\usepackage{footnote}}
\makesavenoteenv{longtable}
\usepackage{graphicx}
\makeatletter
\def\maxwidth{\ifdim\Gin@nat@width>\linewidth\linewidth\else\Gin@nat@width\fi}
\def\maxheight{\ifdim\Gin@nat@height>\textheight\textheight\else\Gin@nat@height\fi}
\makeatother
\setkeys{Gin}{width=\maxwidth,height=\maxheight,keepaspectratio}
\makeatletter
\def\fps@figure{htbp}
\makeatother

\makeatletter
\@ifpackageloaded{caption}{}{\usepackage{caption}}
\AtBeginDocument{%
\ifdefined\contentsname
  \renewcommand*\contentsname{Table of contents}
\else
  \newcommand\contentsname{Table of contents}
\fi
\ifdefined\listfigurename
  \renewcommand*\listfigurename{List of Figures}
\else
  \newcommand\listfigurename{List of Figures}
\fi
\ifdefined\listtablename
  \renewcommand*\listtablename{List of Tables}
\else
  \newcommand\listtablename{List of Tables}
\fi
\ifdefined\figurename
  \renewcommand*\figurename{Figure}
\else
  \newcommand\figurename{Figure}
\fi
\ifdefined\tablename
  \renewcommand*\tablename{Table}
\else
  \newcommand\tablename{Table}
\fi
}
\@ifpackageloaded{float}{}{\usepackage{float}}
\floatstyle{ruled}
\@ifundefined{c@chapter}{\newfloat{codelisting}{h}{lop}}{\newfloat{codelisting}{h}{lop}[chapter]}
\floatname{codelisting}{Listing}

\makeatother
\makeatletter
\@ifpackageloaded{caption}{}{\usepackage{caption}}
\@ifpackageloaded{subcaption}{}{\usepackage{subcaption}}
\makeatother

\ifLuaTeX
  \usepackage{selnolig}  
\fi
\usepackage[]{natbib}
\usepackage{bookmark}

\IfFileExists{xurl.sty}{\usepackage{xurl}}{} 
\hypersetup{
  pdftitle={Title},
  pdfauthor={Author 1; Author 2},
  pdfkeywords={3 to 6 keywords, that do not appear in the title},
  colorlinks=true,
  linkcolor={blue},
  filecolor={Maroon},
  citecolor={Blue},
  urlcolor={Blue},
  pdfcreator={LaTeX via pandoc}}

\newcommand{\anon}{1}

\DeclareMathOperator{\ran}{ran}

\DeclareMathOperator{\op}{op}

\newcommand{\PP}{\mathcal{P}}

\renewcommand{\hat}{\widehat}

\newcommand{\HX}{\mathcal H_{\mathbf{x}}}
\newcommand{\HY}{\mathcal H_{Y}}
\newcommand{\KX}{d_{\mathbf{x}}}
\renewcommand{\hat}{\widehat}

\numberwithin{equation}{section}

\newcommand{\Bpre}{\check B}
\newcommand{\Upre}{\check U}

\newtheorem{proposition}{Proposition}

\newcommand{\Ito}{It\^{o}}
\newcommand{\ei}{i}

\begin{document}

\def\spacingset#1{\renewcommand{\baselinestretch}%
{#1}\small\normalsize} \spacingset{1}

\def\be{\begin{equation}}
\def\ee{\end{equation}} 
\def\ben{\begin{equation*}}
\def\een{\end{equation*}}
\def\bea{\begin{eqnarray}}
\def\eea{\end{eqnarray}}
\def\bda{\begin{eqnarray*}}
\def\eda{\end{eqnarray*}}
\numberwithin{equation}{section}

\newtheorem{theorem}{Theorem}
\newtheorem{corollary}{Corollary}
\newtheorem{assumption}{Assumption}
\renewcommand\theassumption{A\arabic{assumption}}
\newtheorem{lemma}{Lemma}
\newtheorem{remark}{Remark}
\newtheorem{example}{Example}

\theoremstyle{definition}
\newtheorem{exmp}{Example}[section]
\AtEndDocument{\refstepcounter{theorem}\label{finalthm}}
\AtEndDocument{\refstepcounter{proposition}\label{finalprop}}
\newcommand{\pkg}[1]{{\normalfont\fontseries{b}\selectfont #1}} \let\proglang=\textsf \let\code=\texttt
\setlength{\abovedisplayskip}{4pt}
\setlength{\belowdisplayskip}{4pt}

\date{}

\if1\anon
{
  \title{\bf Anthropogenic Forcing, Climate Change, and the Shape of Warming: Statistical Inference for Distributional Cointegration}
  \author{\normalsize
    Kyungsik Nam\\ \vspace{-0.8em}    Division of Climate Change, Hankuk University of Foreign Studies \\ \vspace{1.2em}    
    Won-Ki Seo\thanks{The data and computational code used to reproduce the reported results are available in the public repository at \url{https://anonymous.4open.science/r/FRSTAT_Program-5F12/}.} \\
    School of Economics, University of Sydney
  }
  \maketitle
}
\fi

\if0\anon
{
  \bigskip
  \bigskip
  \bigskip
  \begin{center}
    {\LARGE\bf Anthropogenic Forcing, Climate Change, and the Shape of Warming:\\ Statistical Inference for Distributional Cointegration}
\end{center}
  \medskip
} \fi

\begin{abstract}
Anthropogenic forcing components follow different long-run paths, while persistent temperature change can involve distributional changes beyond the mean. Scalar regressions aggregate these components and retain only mean temperature, obscuring how distinct forcing paths relate to persistent distributional change. We develop new testing, estimation, and inference methods for long-run relations between an integrated predictor vector and a density-valued response. These comprise a residual-based test of between-cointegration (whether predictor trends account for all stochastic trends in the response density), a fully modified least-squares estimator of predictor-specific functional responses, and simulation-based inference for interpretable projections. We apply the methods to densities of observed local temperature anomalies and anthropogenic effective radiative forcing divided into CO$_2$ and non-CO$_2$ portfolios. The test results are consistent with persistent movements in these portfolios statistically accounting for the persistent evolution of the anomaly distribution, with no additional stochastic trend detected in the residual. A joint test rejects the common-response restriction imposed by aggregating the two portfolios. The fitted CO$_2$ response mainly shifts mass toward warmer anomalies and increases central concentration, whereas the non-CO$_2$ response produces a smaller shift but greater dispersion and off-center reshaping. Positive fitted mean responses for both portfolios conceal these contrasts, demonstrating the information lost through scalar aggregation.
\end{abstract}

\noindent%
{\it Keywords: density-valued time series; fully modified least-squares estimator; cointegration;  functional regression; radiative forcing; temperature-anomaly distribution.} 
\vfill

\newpage
\spacingset{1.7} 

\section{Introduction}\label{sec:intro}

\noindent Standard climate econometric models relate the global mean temperature anomaly to aggregate radiative forcing, reducing both sides of the long-run relation to scalars. This scalar-on-scalar formulation targets persistent movements in mean temperature but cannot recover distributional shape on the response side or forcing composition on the predictor side. Similar mean movements may coexist with changes in dispersion, asymmetry, and tail mass, while forcing components combined in an anthropogenic forcing index may have distinct distributional response profiles. Our analysis therefore focuses on stochastic trends, by which we mean persistent nonstationary components arising from the accumulation of stationary increments. We ask whether the stochastic trends in an anthropogenic forcing vector account for all persistent nonstationary variation in annual distributions of observed local temperature anomalies and how the component-specific response profiles vary across anomaly states.

\indent The scalar time-series literature provides a natural starting point for this analysis. Research on temperature persistence examines whether global and hemispheric anomaly series contain stochastic trends or are stationary around deterministic trends subject to structural breaks \citep{estrada2017extracting,kim2020inference}. Related work studies long-run relations between aggregate temperature and radiative forcing \citep{dergiades2016long}. Despite differing in their representations of persistence, these approaches retain a scalar temperature response. Specifications with multiple forcing variables relax aggregation on the predictor side, but their conclusions remain confined to persistent movements in the mean anomaly.

\indent Distributional evidence shows why the scalar response is restrictive. \citet{rivas2020trends} document changes across moments and quantiles of temperature distributions. Treating cross-sectional anomaly densities as functional time series, \citet{chang2020evaluating} find that stochastic trends operate through location and dispersion and differ across hemispheric distributions. Persistent temperature change may involve changes in shape as well as location. A scalar response cannot determine whether a long-run forcing relation operates through the center of the distribution or through dispersion, asymmetry, and tail mass.

\indent
Aggregation on the predictor side imposes a restriction. Effective radiative forcing accounts distinguish anthropogenic components with different signs and historical paths \citep{IPCC_AR6_WGI_AnnexIII_2021}. Unit-root evidence for related greenhouse-gas, sulfur-forcing, and total-forcing measures motivates an integrated long-run treatment of radiative forcing \citep{kaufmann2002cointegration,pretis2020econometric}. Components may differ in spatial incidence: using station-level data, \citet{magnus2011global} distinguish broadly distributed warming associated with greenhouse gases from more localized aerosol-related cooling. These differences make a common distributional response an empirical restriction rather than a consequence of forcing accounting. Once components are combined into an index, component responses cannot be identified.

\indent
The response-side and predictor-side extensions are therefore complements
rather than substitutes. Retaining the anomaly distribution preserves
information about shape, while retaining a forcing vector preserves information
about composition. Our empirical specification takes the annual distribution
of observed local temperature anomalies as a density-valued response and
separates anthropogenic forcing into carbon dioxide and the net contribution of
the remaining components. The conventional aggregate-mean regression serves as
the scalar benchmark, allowing the information recovered from distributional
shape and forcing composition to be assessed separately and jointly.

\indent We formulate this relation as distributional cointegration between an integrated predictor vector and a density-valued response, which we call \emph{between-cointegration}. The centered log-ratio (CLR) transformation represents the anomaly densities in a linear Hilbert space using the geometry of density-valued functional data \citep{Egozcue2006,petersen2016}. Between-cointegration holds when a vector-to-function long-run relation between the forcing variables and the transformed density process leaves a stationary functional residual. The restriction therefore asks whether persistent variation in the anomaly distribution is spanned by the forcing trends through a stable long-run operator. By contrast, we use the term \emph{within-cointegration} for cointegration arising from stationary linear combinations within a single process. This is the principal object of the existing Hilbert-space cointegration literature \citep{Chang2016152,BSS2017,NSS}, which does not directly address the cross-process testing, operator estimation, and response-function inference required here.

\indent The procedure follows the order in which the long-run relation is assessed and interpreted. We first construct residual-based tests for a functional response and an integrated vector predictor by extending the null-of-cointegration approach of \citet{shin1994residual}. Because their null distributions are nonpivotal, critical values are obtained by plug-in Monte Carlo. We then estimate the vector-to-function operator by fully modified least squares, extending \citet{phillips1995fully} to correct for long-run endogeneity and serial dependence. The limit theory yields simulation-based marginal inference for forcing-specific responses averaged over local anomaly intervals, following the projection approach of \citet{seong2021functional} and \citet{Nam2025}.

\indent 
We apply the approach to anomaly densities from the HadCRUT5 monthly gridded temperature record and historical effective radiative forcing reconstructions split into CO$_2$ and non-CO$_2$ portfolios. A forcing-block diagnostic supports their treatment as distinct persistent coordinates. At the 5\% level, the between-cointegration results are consistent with a long-run relationship in which the distinct forcing trends account for the persistent evolution of the anomaly distribution. We then ask how strongly the distribution responds to each portfolio across anomaly states. The estimated response functions differ substantially in magnitude and shape, and inference on their difference rejects the common-response restriction imposed by aggregation, revealing heterogeneity concealed by scalar aggregation.

Substantively, an increase in the CO$_2$ forcing portfolio is associated mainly with mean and location shifts and greater central concentration, whereas an increase in the non-CO$_2$ portfolio is associated with greater dispersion. For CO$_2$, the cold- and warm-tail probability changes nearly offset; for non-CO$_2$, combined extreme-state probability increases primarily on the warm side. The mean response alone therefore misses these contrasting changes in dispersion and the allocation of probability across extreme anomaly states.

\indent The paper proceeds as follows. Section~\ref{sec:empirical_motivation} describes data, documents the distributional evolution of temperature anomalies, and characterizes the information lost through scalar aggregation. Section~\ref{sec:Metric_model} develops testing, estimation, and inference methods. Section~\ref{sec:distributional_responses} presents specification evidence, forcing-specific responses, fixed-basis common-response comparison, and implications for distributional shape and tail mass. Section~\ref{conclude} concludes. The Supplement contains proofs, technical assumptions, implementation details, and robustness checks.

\section{Data and the cost of scalar aggregation}
\label{sec:empirical_motivation}

\noindent This section specifies the two sides of the long-run relation and formalizes what is assumed away when either side is reduced to a scalar. The benchmark throughout is the scalar climate-econometric specification linking the global mean temperature anomaly to aggregate radiative forcing \citep[e.g.,][]{dergiades2016long,pretis2020econometric}. Section~\ref{subsec:response_data} constructs the density-valued response and forcing predictor vector, while Section~\ref{subsec:aggregation_cost} discusses the restrictions implicit in scalar reductions of the response and predictor.

\subsection{Response and predictor construction}
\label{subsec:response_data}

\noindent\textbf{The density-valued response.} We construct annual densities of temperature anomalies from observed grid-cell-month data using the ensemble mean of the non-infilled HadCRUT5 record for 1850--2024 \citep{morice2021updated}. The anomalies are measured in degrees Celsius relative to the 1961--1990 climatology. For each year $t$, we pool the available anomalies with equal weights and estimate $f_t$ by kernel smoothing. Each estimate is restricted to and renormalized on a common, tail-trimmed support. Section~\ref{sec_app_density_error} of the Supplement provides the data and numerical details.

\indent Following \citet{chang2020evaluating}, we use non-infilled observations because our estimand is the observed cell-month anomaly distribution; infilling would make dispersion and tail behavior reflect the spatial reconstruction model as well as the observations \citep{morice2021updated}. Observed spatial coverage expands over time, however, changing the locations represented in the annual densities. Section~\ref{sec_app_coverage} of the Supplement therefore examines sensitivity to this changing coverage.

\indent To use these densities in a functional linear model, we map them into a Hilbert space by the CLR transformation. For any positive density $f$ on the common support $[a,b]$, define
\begin{equation}\label{eq_clr_response} \operatorname{clr}(f)(s)=\log f(s)-\frac{1}{b-a}\int_a^b\log f(u)\,du,\qquad f(s)=\frac{\exp{\operatorname{clr}(f)(s)}}{\int_a^b\exp{\operatorname{clr}(f)(u)}\,du}. \end{equation}
For each year, the empirical functional response is $Y_t=\operatorname{clr}(f_t)\in\HY$. The CLR transformation is an isometry from the Bayes Hilbert space of densities onto the closed linear subspace $\{g\in L^2[a,b]:\int_a^b g(s)\,ds=0\}$ \citep{Egozcue2006}. The inverse in \eqref{eq_clr_response} shows that this representation retains all distributional information.
\begin{remark}[Density-estimation error]\label{rem_density_error}
Let $f_t^\circ$ denote the underlying density of temperature anomalies and $Y_t^\circ=\operatorname{clr}(f_t^\circ)$. We represent density-estimation error multiplicatively as $f_t(s)\propto f_t^\circ(s)\exp\{\eta_t(s)\}$, which is general for strictly positive densities. Equation~\eqref{eq_clr_response} then gives $Y_t=Y_t^\circ+e_t$, where $e_t(s)=\eta_t(s)-(b-a)^{-1}\int_a^b\eta_t(u)\,du$. Thus, the error enters additively on the CLR scale and is absorbed into the regression disturbance under the conditions in Section~\ref{sec_app_density_error} of the Supplement.
\end{remark}
\indent Panels~(a) and~(c)--(e) of Figure~\ref{Fig:Data_Desc} show that the evolution of the anomaly distribution is not a pure translation. The densities shift persistently to the right, but re-centering each year at its own cross-sectional mean does not collapse them onto a common shape: later densities are less concentrated than mid-sample densities and carry more mass on the warm shoulder. The difference panel makes the reallocation explicit, with a long-run loss of mass over moderately negative anomalies and a gain over near-zero, positive, and upper-tail states. Location, dispersion, and tail mass therefore all move over the sample, and a specification that tracks only the first moment cannot represent the last two.

\begin{figure}[t]
\centering
\captionsetup[subfigure]{font=small,skip=2pt}
\begin{subfigure}[b]{0.48\textwidth}
\centering
\includegraphics[width=\linewidth,trim={1.2cm 0.1cm 0.2cm 0.1cm},clip]{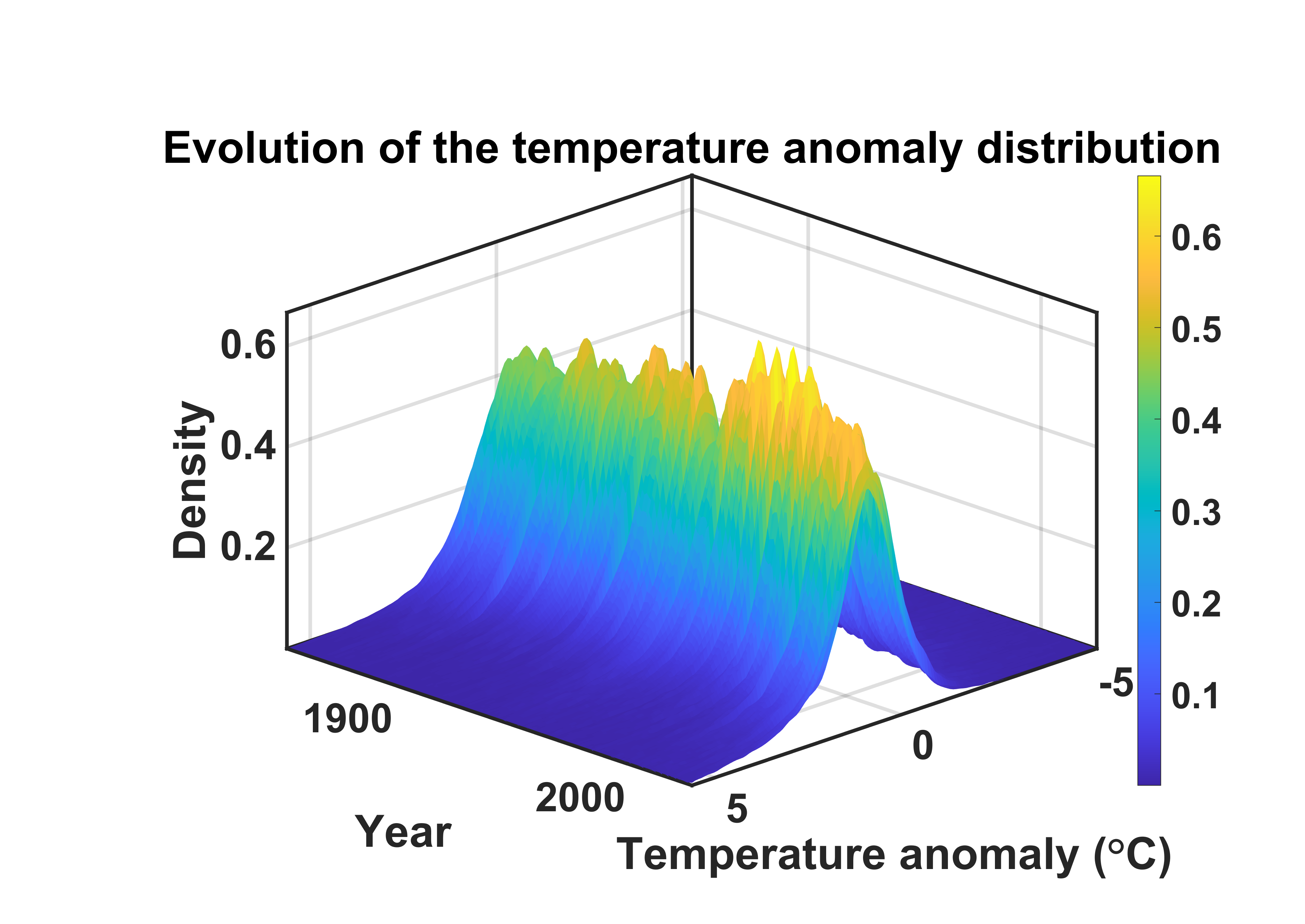}
\caption{Annual density surface}
\end{subfigure}
\hfill
\begin{subfigure}[b]{0.48\textwidth}
\centering
\includegraphics[width=\linewidth,trim={1.2cm 0.1cm 0.2cm 0.4cm},clip]{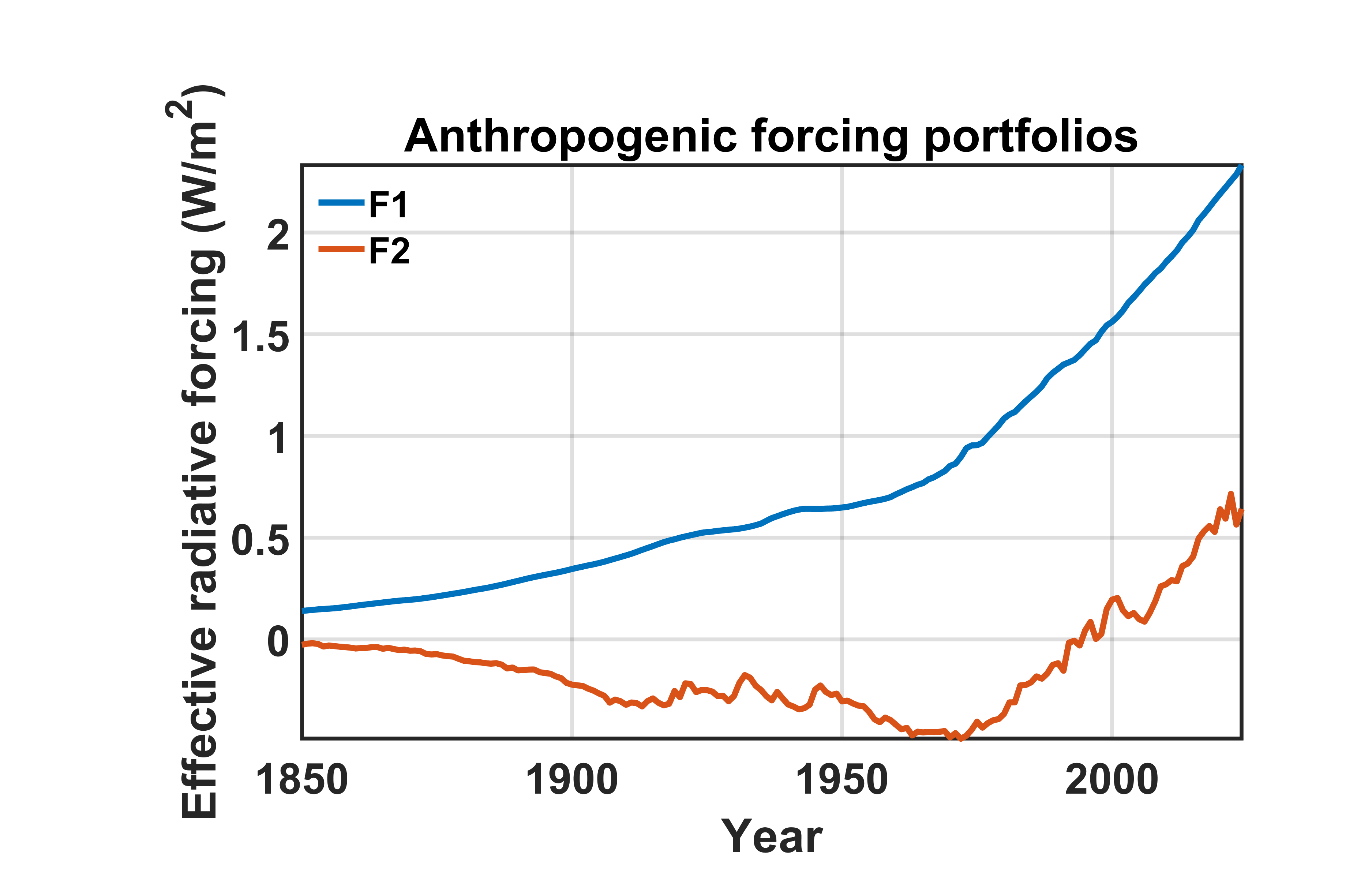}
\caption{Grouped anthropogenic ERF}
\end{subfigure}

\vspace{0.4em}

\begin{subfigure}[b]{0.327\textwidth}
\centering
\includegraphics[width=\linewidth,trim={1.3cm 0.1cm 0.3cm 0.1cm},clip]{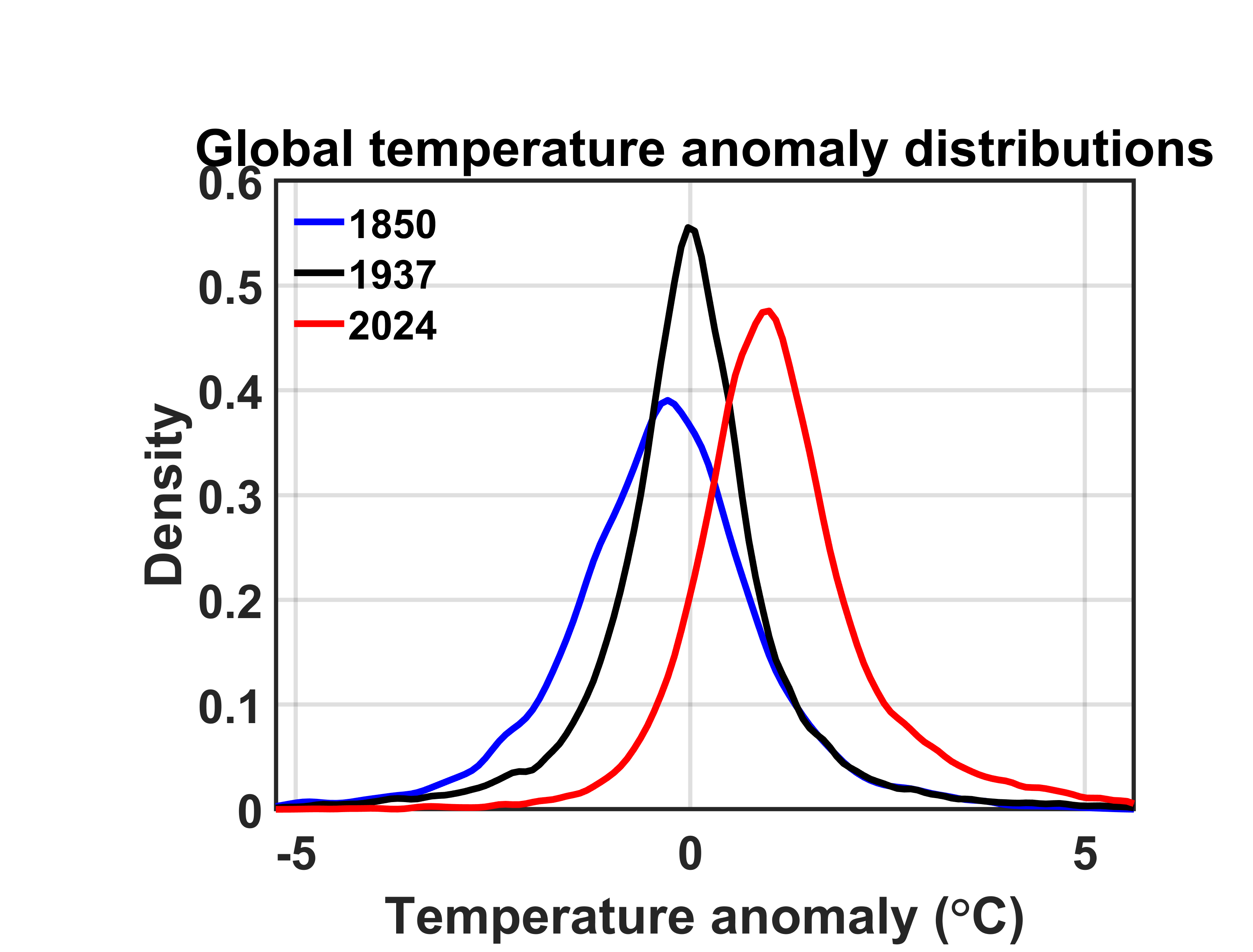}
\caption{Representative years}
\end{subfigure}
\hfill
\begin{subfigure}[b]{0.327\textwidth}
\centering
\includegraphics[width=\linewidth,trim={1.3cm 0.1cm 0.3cm 0.1cm},clip]{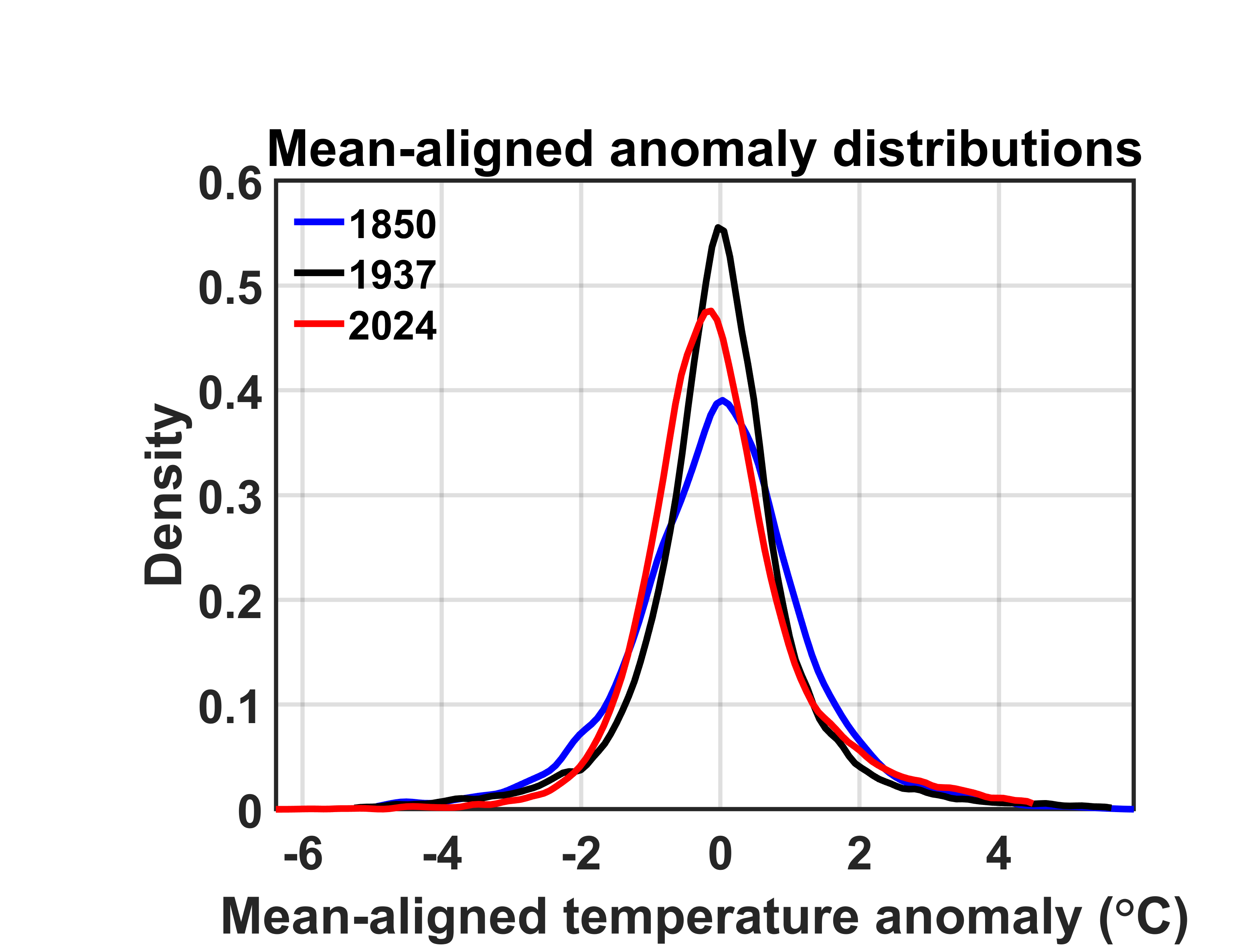}
\caption{Mean-aligned densities}
\end{subfigure}
\hfill
\begin{subfigure}[b]{0.327\textwidth}
\centering
\includegraphics[width=\linewidth,trim={1.3cm 0.1cm 0.3cm 0.1cm},clip]{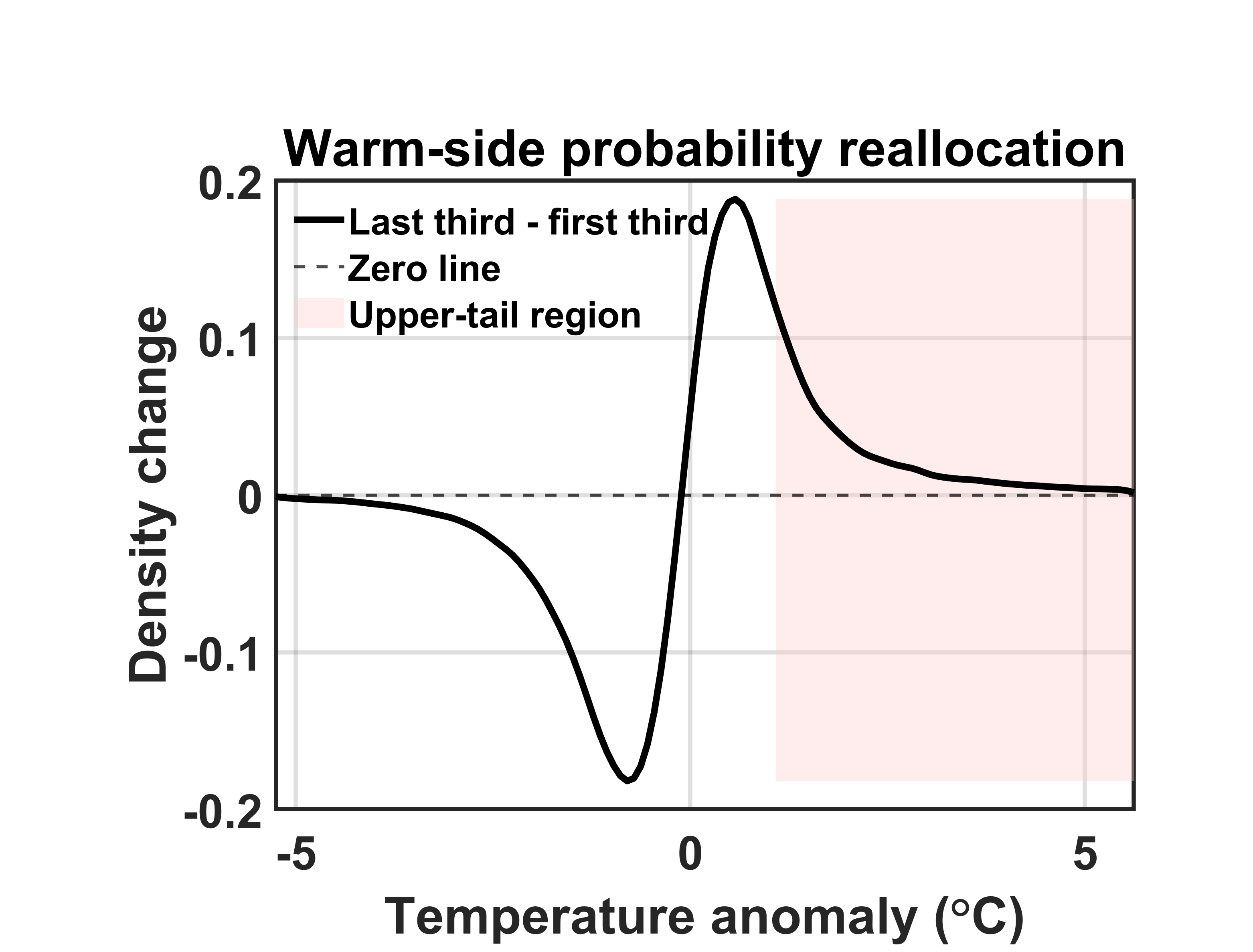}
\caption{Late-minus-early difference}
\end{subfigure}
\caption{Temperature-anomaly densities and anthropogenic forcing, 1850--2024. F1 is CO$_2$ forcing; F2 is the sum of the non-CO$_2$ anthropogenic components listed in Table~\ref{Tab:ForcingAggregation}.}
\label{Fig:Data_Desc}
\end{figure}

\medskip
\noindent\textbf{The forcing predictor vector.} The forcing series are annual best-estimate reconstructions of Effective Radiative Forcing (ERF) from \citet{forster2024indicators}, as updated by \citet{Smith2025IndicatorsGCC} (v2025.06.25), over 1850--2024. The series follow the disaggregated accounting of IPCC AR6 \citep{IPCC_AR6_WGI_AnnexIII_2021} and are measured in W/m$^2$ relative to the 1750 baseline. Solar variability and volcanic aerosol forcing are excluded, so the predictor vector isolates persistent anthropogenic forcing rather than externally driven or episodic natural variation.

\begin{table}[t]
\centering
\caption{Anthropogenic forcing portfolios}
\label{Tab:ForcingAggregation}
\renewcommand{\arraystretch}{1.18}
\setlength{\tabcolsep}{5pt}
\small
\begin{tabularx}{\textwidth}{
>{\RaggedRight\arraybackslash}p{3.3cm}
>{\RaggedRight\arraybackslash}p{4.9cm}
>{\RaggedRight\arraybackslash\footnotesize}X}
\toprule
Group & Forcing components & Statistical role in the empirical design \\
\midrule
\makecell[tl]{F1: CO$_2$\\ forcing}
&
\makecell[tl]{CO$_2$}
&
The dominant anthropogenic greenhouse-gas component in the IPCC effective radiative forcing accounting. This portfolio provides the main low-frequency CO$_2$ forcing benchmark.
\\
\midrule
\makecell[tl]{F2: Non-CO$_2$\\ anthropogenic\\ forcing}
&
\makecell[tl]{CH$_4$, N$_2$O,\\ aerosol--radiation interactions,\\ aerosol--cloud interactions,\\ O$_3$, contrails, land-use\\ change, BC on snow,\\ H$_2$O$_{\text{strat}}$, halogenated species}
&
Anthropogenic forcing outside the CO$_2$ block. This portfolio collects non-CO$_2$ greenhouse gases, reactive-chemistry, aerosol, cloud-adjustment, surface-albedo, snow-albedo, and aviation-related forcing channels.
\\
\bottomrule
\end{tabularx}
\end{table}

With $T=175$ annual time points, estimating eleven component-specific functional responses would be imprecise because the forcing series share substantial low-frequency variation. We therefore group the components into the two portfolios in Table~\ref{Tab:ForcingAggregation}. F1 contains CO$_2$ forcing, whereas F2 sums the remaining ten anthropogenic components. Keeping CO$_2$ separate also provides a forcing-based counterpart to the CO$_2$-only greenhouse-gas specification of \citet{magnus2011global}, while F2 preserves a separate non-CO$_2$ forcing margin. This specification allows F1 and F2 to have distinct response functions; it does not identify component-specific responses within F2.

\indent Panel~(b) of Figure~\ref{Fig:Data_Desc} shows that F1 and F2 contain distinct low-frequency information. F1 remains positive and follows a smooth, nearly monotone upward path, whereas F2 is nonmonotone: negative aerosol forcing dominates much of the twentieth century, while non-CO$_2$ greenhouse gases and other positive components dominate in the recent period. Thus, despite a sample correlation in levels of $0.6109$, their historical paths differ substantially. The no-within-cointegration diagnostic reported in Section~\ref{sec_test_stat} formally examines whether F1 and F2 carry two stochastic trends. Aggregating them into an aggregate anthropogenic forcing index would collapse the two coordinates into their sum, netting their paths where their signs differ and imposing a single response function on the temperature density.

\subsection{What scalar aggregation misses: a functional alternative}
\label{subsec:aggregation_cost}

\noindent The matched scalar benchmark applies the aggregate anthropogenic forcing regression to the mean of the same observed temperature anomaly density:
\begin{equation} \bar y_t=\alpha+\beta^{\mathrm{agg}}F_t^{\mathrm{agg}}+u_t,\label{eq:linear_regression_sec6} \end{equation}
where $\alpha$ is an intercept, $u_t$ is a scalar disturbance, $\bar y_t=\int_a^b s f_t(s)\,ds$, $F_t^{\mathrm{agg}}$ is aggregate anthropogenic radiative forcing, and $\beta^{\mathrm{agg}}$ is the long-run scalar slope, measured in $^\circ$C per W/m$^2$, associated with a 1 W/m$^2$ increase in aggregate anthropogenic forcing.

\indent Our formulation generalizes both sides of \eqref{eq:linear_regression_sec6} by replacing the scalar forcing index with a $\KX$-dimensional vector of forcing portfolios and the mean response with the full anomaly density in its CLR representation. Let $Y_t$ denote the CLR-transformed density and $\mathbf{x}_t=(x_{1,t},\ldots,x_{\KX,t})'$ the forcing vector, with $\KX=2$ in our application. The model developed in Section~\ref{sec:Metric_model} takes the form
\begin{equation} Y_t(s)=\beta_0(s)+\sum_{k=1}^{\KX}\beta_k(s)x_{k,t}+U_t(s),\qquad s\in[a,b].\label{eq:scalar_to_function_sec6} 
\end{equation} 
Here $\beta_0$ is a functional intercept, $\beta_k$ is the CLR response function of forcing portfolio $k$, and $U_t$ collects distributional variation not explained by the reconstructed forcing series. 


\indent Aggregating the forcing vector replaces $\sum_k\beta_k(s)x_{k,t}$ with $\beta^{\mathrm{clr}}(s)F_t^{\mathrm{agg}}$, where $F_t^{\mathrm{agg}}=\sum_{k=1}^{\KX}x_{k,t}$. The aggregate specification reproduces the vector model for every forcing path if and only if $\beta_k=\beta^{\mathrm{clr}}$ in $\HY$ for $k=1,\ldots,\KX$, with equality understood almost everywhere. Hence, predictor aggregation is lossless only if CO$_2$ and non-CO$_2$ forcing share a common CLR response; otherwise, the distributional response depends on forcing composition. Reducing the response to its mean entails a separate loss, retaining only the first-moment response and leaving distributional reshaping unidentified. Section~\ref{sec:distributional_responses} examines this restriction empirically.


The central statistical question is whether the stochastic trends in the CLR-transformed anomaly distribution are spanned by those in the forcing vector, leaving a stationary functional residual. The next section formalizes this long-run restriction as between-cointegration and develops procedures for testing it, estimating the long-run response operator, and conducting inference on the forcing-specific responses.

\section{Statistical Framework}\label{sec:Metric_model}
\noindent We begin from the possibility that both the transformed anomaly distribution and the forcing vector contain stochastic trends, a feature examined empirically in Section~\ref{sec_test_stat}. In that case, the regression relation in \eqref{eq:scalar_to_function_sec6} has a long-run interpretation only if its residual is stationary. In the present application, this condition means that no persistent movement in the anomaly distribution remains after accounting for the anthropogenic forcing vector. We formalize this restriction as \emph{between-cointegration}. The statistical analysis follows the same logic: we first test whether between-cointegration holds, then estimate the long-run response operator under that restriction, and finally conduct inference on the forcing-specific response functions.

\indent Let $Y_t$ denote the CLR-transformed anomaly density, taking values in the Hilbert space $\HY\subseteq L^2[a,b]$, and let $\mathbf{x}_t=(x_{1,t},\ldots,x_{\KX,t})'$ denote the forcing vector, taking values in the $\KX$-dimensional Euclidean space $\HX=\mathbb{R}^{\KX}$. For the theoretical development, we write the empirical specification in \eqref{eq:scalar_to_function_sec6} in operator form as
\begin{equation}\label{eqmodel1}
Y_t=\beta_0+B(\mathbf{x}_t)+U_t,\quad B(\mathbf{v})=\sum_{j=1}^{\KX}\beta_jv_j,\quad \mathbf{v}=(v_1,\ldots,v_{\KX})' \in\HX,
\end{equation}
where $B:\HX\rightarrow\HY$ is a linear response operator, $\beta_0\in\HY$ is a functional intercept, $U_t\in\HY$ is the unexplained component, and $\beta_j\in\HY$ is the CLR response function associated with forcing portfolio $j$. Between-cointegration holds when $U_t$ is stationary. Under this restriction, the stochastic trends in the anomaly distribution are fully spanned by those in the forcing vector; otherwise, a persistent distributional component remains unexplained and $B$ cannot be interpreted as a long-run response operator.

\indent The remainder of this section is organized as follows. Section~\ref{sec_model1} states the assumptions and formally defines between-cointegration; Section~\ref{sec_betcointeg} develops the residual-based test; and Section~\ref{sec_esti} presents estimation and inference under the maintained long-run relation. The paper then applies these procedures to the climate data in Section~\ref{sec:distributional_responses}.

\subsection{Model and assumption}\label{sec_model1}

\noindent Recall that $Y_t$ takes values in $\HY\subseteq L^2[a,b]$ and $\mathbf{x}_t$ in $\HX=\mathbb{R}^{\KX}$, equipped with the $L^2$ and standard Euclidean inner products, respectively. We write $\langle\cdot,\cdot\rangle$ for either inner product. For $v_1\in\mathcal H_1$ and $v_2\in\mathcal H_2$, with $\mathcal H_1,\mathcal H_2\in\{\HX,\HY\}$, define the rank-one operator $v_1\otimes v_2:\mathcal H_1\rightarrow\mathcal H_2$ by $(v_1\otimes v_2)(\cdot)=\langle v_1,\cdot\rangle v_2$. The identity operator on the relevant space is denoted by $I$. For any bounded linear operator $A:\mathcal H_1\rightarrow\mathcal H_2$, let $A^\ast:\mathcal H_2\rightarrow\mathcal H_1$ denote its adjoint, defined by $\langle Au,v\rangle=\langle u,A^\ast v\rangle$ for all $u\in\mathcal H_1$ and $v\in\mathcal H_2$. Further notation and mathematical preliminaries are collected in Section~\ref{Sec_prelim}.

\subsubsection{Stochastic trends and within-cointegration}\label{sec_basic}
Following work on nonstationary functional time series \citep[e.g.,][]{Chang2016152,BSS2017}, we assume a finite-dimensional stochastic-trend decomposition for the CLR-transformed anomaly-density process; Section~\ref{AP_FTS} of the Supplement gives formal conditions. Specifically, let $P^N$ and $P^S=I-P^N$ denote orthogonal projections on $\HY$, with $P^N$ of finite rank. We write
\begin{equation}\label{eqtimedecom}
Y_t=\mu_Y+Y_t^N+Y_t^S,\qquad Y_t^N=P^N(Y_t-\mu_Y),\qquad Y_t^S=P^S(Y_t-\mu_Y).
\end{equation}
Here, $\mu_Y$ is a deterministic functional level, representing the common or initial shape of the functional observations. The component $Y_t^N$ captures the persistent, nonstationary dynamics driven by stochastic trends arising from the accumulation of stationary random elements, while $Y_t^S$ denotes the stationary, mean-reverting component. Throughout this paper, we use the terms nonstationarity and persistence interchangeably. Although $Y_t$ resides in an infinite-dimensional space $\HY$, following the literature, we assume that its nonstationary component $Y_t^N$ is finite-dimensional (i.e., $P^N$ is a finite-rank projection). This assumption is not only empirically relevant (see e.g., \citealp{NSS}), but theoretically necessary to ensure feasible statistical inference with finite samples based on eigenanalysis to be discussed. Note that for any $v \in \ran P^S$, the inner product $\langle Y_t, v \rangle = \int Y_t(s)v(s)ds$, which can be viewed as a continuous linear combination, consists of a stationary sequence. In this context, $Y_t$ is said to be \textit{within-cointegrated}, reflecting an internal synchronization where individual nonstationary behaviors cancel each other out to reveal a stable long-run component.

\indent For the predictor $\mathbf{x}_t$, we assume that it is an $I(1)$ process of full nonstationary rank with a possibly nonzero deterministic level $\mu_X\in\HX$:
\begin{equation}\label{eqtimedecomx}
\mathbf{x}_t=\mu_X+\mathbf{x}_t^N.
\end{equation}
Here $\mathbf{x}_t^N:=\mathbf{x}_t-\mu_X$ denotes its nonstationary component, generated by the accumulation of stationary increments. Specifically, we assume that $\mathbf{x}_t$ admits no stationary linear combination $\langle\mathbf{x}_t,v\rangle$ for any nonzero $v\in\HX$. This implies that $\mathbf{x}_t$ is characterized by $\KX$-dimensional persistent dynamics that are potentially aligned with the nonstationary behavior of $Y_t$. We focus on this purely nonstationary case because it matches our empirical application, where the no-within-cointegration diagnostic in Section~\ref{sec_test_stat} supports treating the forcing portfolios as distinct persistent coordinates.


\subsubsection{Statistical formulation of between-cointegration}\label{sec_between}

\noindent We next define the long-run relation between the density-valued response and the forcing vector. Within-cointegration concerns stationary linear combinations within a single multivariate or functional process. By contrast, between-cointegration concerns whether the stochastic trends in one process are accounted for by those in another. In the present setting, this means that the persistent component of $Y_t$ is spanned by the persistent components of $\mathbf{x}_t$.

Formally, we say that $Y_t$ and $\mathbf{x}_t=(x_{1,t},\ldots,x_{\KX,t})'$ are between-cointegrated if there exist nonrandom elements $\gamma_1,\ldots,\gamma_{\KX}\in\HY$ such that \begin{equation}\label{eq:bet_cointeg} Y_t^N-\sum_{j=1}^{\KX}\gamma_jx_{j,t} \end{equation} is stationary in $\HY$. Let $x_{j,t}^N$ denote the $j$th coordinate of $\mathbf{x}_t^N$. Since $\mu_X$ is deterministic, between-cointegration can equivalently be defined using $x_{j,t}^N$ in place of $x_{j,t}$ in \eqref{eq:bet_cointeg}. Under this condition, the long-run relation takes the form \eqref{eqmodel1}, with $\beta_j=\gamma_j$ for $j=1,\ldots,\KX$ and stationary $U_t$. In the climate application, this means that after accounting for the anthropogenic forcing vector, no unexplained persistent component remains in the CLR-transformed temperature-anomaly distribution. Thus, testing between-cointegration provides a specification check for the long-run relation before the forcing-specific response functions are estimated.

\subsection{Statistical test for between-cointegration}\label{sec_betcointeg}

\noindent We develop a residual-based test of between-cointegration. The central idea is to determine whether the stochastic trends in $Y_t$ are fully accounted for by projecting $Y_t$ on the nonstationary forcing vector $\mathbf{x}_t$. Under between-cointegration, the resulting residuals should contain only stationary variation; otherwise, they should retain an unexplained stochastic trend. Section~\ref{sec_app_test} of the Supplement presents the asymptotic theory and examines finite-sample properties by simulation (Section~\ref{sec_sim_between}); here we focus on practical implementation in three steps.
\begin{description}[style=nextline, leftmargin=1em, font=\bfseries,topsep=-2pt, itemsep=0pt]
\item[Step 1: Computing residuals from the least-squares projection] \hfill
We first compute the least-squares projection of $Y_t$ on $\mathbf{x}_t$, using demeaned variables to allow for a nonzero intercept. Let $\bar Y_T=T^{-1}\sum_{t=1}^T Y_t$ and $\bar{\mathbf{x}}_T=T^{-1}\sum_{t=1}^T\mathbf{x}_t$. Define $\widehat{C}_{\mathbf{x}\mathbf{x}}=T^{-1}\sum_{t=1}^T(\mathbf{x}_t-\bar{\mathbf{x}}_T)\otimes(\mathbf{x}_t-\bar{\mathbf{x}}_T)$ and $\widehat{C}_{Y\mathbf{x}}=T^{-1}\sum_{t=1}^T(\mathbf{x}_t-\bar{\mathbf{x}}_T)\otimes(Y_t-\bar Y_T)$. The least-squares projection map is $\Bpre=\widehat C_{Y\mathbf{x}}\widehat C_{\mathbf{x}\mathbf{x}}^{-1}$. Let $(\hat\lambda_{\mathbf{x},j},\hat v_{\mathbf{x},j})$, $j=1,\ldots,\KX$, denote the eigenvalue--eigenvector pairs of $\widehat C_{\mathbf{x}\mathbf{x}}$. The fitted component can then be computed as
\begin{equation}\label{eqprelimest}
\Bpre(\mathbf{x}_t-\bar{\mathbf{x}}_T)=\frac{1}{T}\sum_{s=1}^T\sum_{j=1}^{\KX}\hat\lambda_{\mathbf{x},j}^{-1}\langle \mathbf{x}_t-\bar{\mathbf{x}}_T,\hat v_{\mathbf{x},j}\rangle\langle \mathbf{x}_s-\bar{\mathbf{x}}_T,\hat v_{\mathbf{x},j}\rangle(Y_s-\bar Y_T).
\end{equation}
Intuitively, the fitted component represents the part of the transformed anomaly distribution that is accounted for by the forcing vector; the projection residuals $\Upre_t=(Y_t-\bar Y_T)-\Bpre(\mathbf{x}_t-\bar{\mathbf{x}}_T)$ summarize the remaining variation. Under between-cointegration, the disturbance $U_t$ is stationary, so $\Upre_t$ should contain no unexplained stochastic trend. The fitted residuals need not themselves be stationary in finite samples because they depend on an estimated projection and sample means; the plug-in calibration accounts for these effects.

\item[Step 2: Constructing residual-persistence diagnostics] \hfill
To measure the persistence remaining in $\Upre_t$, define the residual partial-sum operator
\[ \widehat{\mathcal K}=\frac{1}{T}\sum_{t=1}^T\left(\sum_{s=1}^t\Upre_s\right)\otimes\left(\sum_{s=1}^t\Upre_s\right). \]
This operator accumulates residual variation over time. Under between-cointegration, $T^{-1}\widehat{\mathcal K}$ remains stochastically bounded despite the estimation and demeaning effects in $\Upre_t$; under an alternative with a remaining stochastic trend, it diverges. We first consider the unnormalized diagnostic
\begin{equation}\label{eqteststat_K_main} \widehat{\mathcal T}_K=\Lambda_{\max}\left(T^{-1}\widehat{\mathcal K}\right), \end{equation}
where $\Lambda_{\max}(\cdot)$ denotes the largest eigenvalue. For a covariance-normalized diagnostic, we use the residual covariance operator $\widehat{\mathcal V}=T^{-1}\sum_{t=1}^T\Upre_t\otimes\Upre_t$ both to identify the direction of greatest contemporaneous residual variation and to normalize the residual scale. Let $\hat v_V$ be the unit eigenvector associated with the largest eigenvalue of $\widehat{\mathcal V}$. We define
\begin{equation}\label{eqteststat_V_main} \widehat{\mathcal T}_V=T^{-1}{\langle\widehat{\mathcal K}\hat v_V,\hat v_V\rangle}/{\langle\widehat{\mathcal V}\hat v_V,\hat v_V\rangle}. \end{equation}
This statistic measures accumulated residual persistence along the direction of greatest contemporaneous residual variation, normalized by the residual variance in that direction. The statistic $\widehat{\mathcal T}_K$ depends on the scale of $\Upre_t$. By contrast, $\widehat{\mathcal T}_V$ is invariant to rescaling of the residual process: replacing $\Upre_t$ by $c\Upre_t$ for any $c>0$ leaves it unchanged. Although the scale invariance of $\widehat{\mathcal T}_V$ can be convenient, both statistics are calibrated against their respective plug-in Monte Carlo null distributions and are reported as complementary diagnostics.
\item[Step 3: Interpreting the diagnostics and implementing the tests] \hfill
For practical interpretation, both diagnostics are right-tailed: large values indicate residual persistence inconsistent with between-cointegration. Under the null, $\widehat{\mathcal T}_K$ and $\widehat{\mathcal T}_V$ converge to finite limits; under the residual $I(1)$ alternative, they diverge at rates $T^2$ and $T$, respectively. The null limits are nonpivotal because they depend on the unknown long-run covariance structure and the estimation effect induced by projecting $Y_t$ on the integrated forcing vector. We therefore use the plug-in Monte Carlo procedure described in Section~\ref{sec_app_test_implementation} of the Supplement. At significance level $\alpha$, this procedure yields $\widehat q_{K,\alpha}$ and $\widehat q_{V,\alpha}$, the simulated $(1-\alpha)$ critical values for $\widehat{\mathcal T}_K$ and $\widehat{\mathcal T}_V$, respectively, together with the corresponding upper-tail $p$-values. The procedure uses kernel estimates of the relevant long-run and one-sided covariance operators, constructed from $\Delta\mathbf{x}_t$ and $\Upre_t$, to simulate the joint Brownian processes entering the null limits. The tests reject when $\widehat{\mathcal T}_K>\widehat q_{K,\alpha}$ and $\widehat{\mathcal T}_V>\widehat q_{V,\alpha}$, respectively. Section~\ref{sec_app_test} of the Supplement establishes the validity of this approximation and the consistency of both tests.

\end{description}

\noindent Although these diagnostics use the partial-sum logic familiar from existing stationarity tests for functional time series, their target differs. They do not test whether $Y_t$ is stationary; rather, they test whether any stochastic trend remains after projecting $Y_t$ on the forcing vector. They thus provide a specification check for the long-run relation. When between-cointegration is not rejected, we estimate and conduct inference on the forcing-specific response functions under this maintained restriction. Rejection would indicate that the forcing vector does not account for all persistent variation in the temperature-anomaly distribution.

\subsection{Estimation under between-cointegration}\label{sec_esti}
\noindent We now turn to estimation under the maintained hypothesis of between-cointegration, continuing to assume that $\mathbf{x}_t$ has full nonstationary rank and hence no within-cointegration. This condition is assessed for the forcing data in Section~\ref{sec_test_stat}. Under between-cointegration, $U_t$ in \eqref{eqmodel1} is stationary and $B$ represents the long-run response operator. To remove the functional intercept $\beta_0$, write the model in demeaned form as
\begin{equation}\label{eqmodel1a}
Y_t-\bar Y_T=B(\mathbf{x}_t-\bar{\mathbf{x}}_T)+(U_t-\bar U_T),
\end{equation}
where $\bar Y_T$, $\bar{\mathbf{x}}_T$, and $\bar U_T$ denote the corresponding temporal sample means.
A natural starting point is the least-squares projection estimator $\Bpre=\widehat C_{Y\mathbf{x}}\widehat C_{\mathbf{x}\mathbf{x}}^{-1}$. The centered analogue of the result in Remark~\ref{rem_ls_limit} of the Supplement gives $\|\Bpre-B\|_{\op}=O_p(T^{-1})$. The corresponding asymptotic expansion of $T(\Bpre-B)$ contains nuisance-dependent bias terms arising from endogeneity between $U_t$ and $\Delta\mathbf{x}_t$ and serial dependence in their joint process. We therefore extend the fully modified least-squares estimator of \citet{phillips1995fully} to this setting, removing these terms and obtaining a limit suitable for feasible simulation-based inference.

\subsubsection{Computation of the proposed estimator}\label{sec_compest}

Formal assumptions, consistency results, and the limiting distribution of the estimator are established in Section~\ref{sec_app_est1} of the Supplement, while Section~\ref{sec_app_det2} details the intercept extension used here. We focus on its practical computation, which proceeds in three steps:

\begin{description}[style=nextline, leftmargin=1em, font=\bfseries,topsep=-2pt, itemsep=0pt]
\item[Step 1: Preliminary estimation and residuals] \hfill
We begin with the preliminary least-squares estimator $\Bpre$ and compute the projection residuals $\Upre_t=(Y_t-\bar Y_T)-\Bpre(\mathbf{x}_t-\bar{\mathbf{x}}_T)$. These residuals are used in place of the unobserved disturbance $U_t$ when estimating the covariance operators for the fully modified correction.

\item[Step 2: Estimation of long-run covariance operators] \hfill
The fully modified estimator uses the long-run and one-sided long-run covariance operators $\Omega_{\mathbf{x}\mathbf{x}}=\sum_{j=-\infty}^{\infty}\mathbb E[\Delta\mathbf{x}_t\otimes\Delta\mathbf{x}_{t+j}]$, $\Omega_{U\mathbf{x}}=\sum_{j=-\infty}^{\infty}\mathbb E[\Delta\mathbf{x}_t\otimes U_{t+j}]$, $\Omega_{\mathbf{x}\mathbf{x}}^+=\sum_{j=0}^{\infty}\mathbb E[\Delta\mathbf{x}_t\otimes\Delta\mathbf{x}_{t+j}]$, and $\Omega_{U\mathbf{x}}^+=\sum_{j=0}^{\infty}\mathbb E[\Delta\mathbf{x}_t\otimes U_{t+j}]$. These operators determine the nuisance-dependent terms in the limiting distribution of $\Bpre$ and the corrections used by the fully modified estimator, as developed in Sections~\ref{sec_app_est1} and~\ref{sec_app_det2} of the Supplement. We estimate them using
\begin{align}
\widehat{\Omega}_{\mathbf{x}\mathbf{x}}&=\sum_{|j|\leq h}\mathrm{k}(j/h)\widehat{\Gamma}_{\mathbf{x}\mathbf{x}}^{(j)},&
\widehat{\Omega}_{U\mathbf{x}}&=\sum_{|j|\leq h}\mathrm{k}(j/h)\widehat{\Gamma}_{U\mathbf{x}}^{(j)},\label{eqsample1}\\
\widehat{\Omega}_{\mathbf{x}\mathbf{x}}^+&=\sum_{j=0}^{h}\mathrm{k}(j/h)\widehat{\Gamma}_{\mathbf{x}\mathbf{x}}^{(j)},&
\widehat{\Omega}_{U\mathbf{x}}^+&=\sum_{j=0}^{h}\mathrm{k}(j/h)\widehat{\Gamma}_{U\mathbf{x}}^{(j)}.\label{eqsample2}
\end{align}
Under our rank-one-operator convention, the lag-$j$ sample autocovariance and cross-covariance operators are $\widehat{\Gamma}_{\mathbf{x}\mathbf{x}}^{(j)}=T^{-1}\sum_{t=(1-j)\vee1}^{T\wedge(T-j)}\Delta\mathbf{x}_t\otimes\Delta\mathbf{x}_{t+j}$ and $\widehat{\Gamma}_{U\mathbf{x}}^{(j)}=T^{-1}\sum_{t=(1-j)\vee1}^{T\wedge(T-j)}\Delta\mathbf{x}_t\otimes\Upre_{t+j}$, respectively, so $\widehat{\Gamma}_{U\mathbf{x}}^{(j)}$ maps $\HX$ to $\HY$. The function $\mathrm{k}(\cdot)$ is a kernel and $h$ is its bandwidth. Assumption~\ref{assum_test_kernel} of the Supplement gives the corresponding conditions. Proposition~\ref{prop1} establishes operator-norm consistency for the baseline specification, and Section~\ref{sec_app_det2} provides the demeaned extension used here. We use the Parzen kernel in the empirical analysis.

\item[Step 3: Computation of the proposed estimator] \hfill
Using the covariance estimators from Step~2, define the modified response and bias-correction operator as
\begin{equation}\label{eqest01_main}
Z_{1,t}=(Y_t-\bar Y_T)-\widehat{\Omega}_{U\mathbf{x}}\widehat{\Omega}_{\mathbf{x}\mathbf{x}}^{-1}\Delta\mathbf{x}_t,\qquad \widehat{\Upsilon}=\widehat{\Omega}_{U\mathbf{x}}^+-\widehat{\Omega}_{U\mathbf{x}}\widehat{\Omega}_{\mathbf{x}\mathbf{x}}^{-1}\widehat{\Omega}_{\mathbf{x}\mathbf{x}}^+.
\end{equation}
The modification of $Y_t-\bar Y_T$ removes the estimated long-run covariance between the disturbance component and the predictor innovations, while $\widehat{\Upsilon}$ corrects the remaining bias. Under the maintained full-rank condition, $\widehat{\Omega}_{\mathbf{x}\mathbf{x}}^{-1}$ is computed as an ordinary matrix inverse on the finite-dimensional space $\HX$. The proposed fully modified estimator is
\begin{equation}\label{eq_B_hat}
\widehat B=(\widehat C_{Z_1\mathbf{x}}-\widehat{\Upsilon})\widehat C_{\mathbf{x}\mathbf{x}}^{-1},\qquad \widehat C_{Z_1\mathbf{x}}=\frac{1}{T}\sum_{t=1}^T(\mathbf{x}_t-\bar{\mathbf{x}}_T)\otimes Z_{1,t}.
\end{equation}
For explicit computation, let $\widehat C_{\mathbf{x}\mathbf{x}}=\sum_{r=1}^{\KX}\widehat\lambda_{\mathbf{x},r}\widehat v_{\mathbf{x},r}\otimes\widehat v_{\mathbf{x},r}$ be the eigendecomposition used in Section~\ref{sec_betcointeg}. Then, for any $v\in\HX$,
\begin{equation}\label{eq_B_hat_explicit}
\widehat B(v)=\sum_{r=1}^{\KX}\widehat\lambda_{\mathbf{x},r}^{-1}\langle v,\widehat v_{\mathbf{x},r}\rangle\left\{\frac{1}{T}\sum_{t=1}^T\langle\mathbf{x}_t-\bar{\mathbf{x}}_T,\widehat v_{\mathbf{x},r}\rangle Z_{1,t}-\widehat{\Upsilon}(\widehat v_{\mathbf{x},r})\right\}.
\end{equation}
This expression parallels the preliminary least-squares projection in \eqref{eqprelimest}, with the modified response and bias correction incorporated explicitly. The forcing-specific response functions are obtained as $\widehat\beta_j=\widehat B(e_j)$, where $e_j\in\HX$ is the $j$th standard basis vector, with one in its $j$th coordinate and zeros elsewhere.
\end{description}

\noindent Theorem~\ref{thmapp2} in the Supplement establishes that $\|\widehat B-B\|_{\op}=O_p(T^{-1})$ under the specification with an intercept and derives the corresponding nonstandard limiting distribution. The next subsection uses a feasible simulation of this limit to conduct inference on local averages of the forcing-specific response functions.



\subsubsection{Simulation-based inference for local average responses}\label{sec_inference}
\noindent A direct confidence region for the full response operator $B$ would consist of maps from the forcing space $\HX$ into the CLR-response space $\HY$, while a coefficient-specific region for $\beta_j=B(e_j)$ lies in the infinite-dimensional space $\HY$. Neither is readily interpretable in applied work. Constructing a conventional uniform confidence band for $\beta_j$ would require additional mathematical structure and stronger assumptions. We therefore conduct inference on average CLR responses over selected regions of the temperature-anomaly support, which provide directly interpretable scalar summaries of the forcing-specific response functions.

\indent For a subinterval $[a_k,b_k]$ of the full temperature-anomaly support $[a,b]$, define the rectangular weight
\begin{equation}\label{eq_local_weight} w_k(s)=(b_k-a_k)^{-1}\mathbf{1}\{s\in[a_k,b_k]\}. \end{equation}
The weight itself need not be centered: for any $g\in\HY$, $\langle g,w_k\rangle=\langle g,w_k-(b-a)^{-1}\rangle$, so only its centered projection enters the inner product. Hence, $\langle\beta_j,w_k\rangle=(b_k-a_k)^{-1}\int_{a_k}^{b_k}\beta_j(s)\,ds$ is the average CLR response to forcing portfolio $j$ over the anomaly range $[a_k,b_k]$ and has a direct interpretation. Its estimator is $\langle\widehat\beta_j,w_k\rangle$. Let $e_j$ be the $j$th standard basis vector of $\HX$, so that $\beta_j=B(e_j)$. Applying Theorem~\ref{thmapp2} in the Supplement with $v=e_j$ and the centered projection of $w_k$, and using the preceding equality, gives
\begin{equation}\label{eq_local_limit} T\langle\widehat\beta_j-\beta_j,w_k\rangle\to_d\left\langle\left(\int_0^1 W^c_{\mathbf{x}}(r)\otimes dW_{U|\mathbf{x}}(r)\right)\left(\int_0^1 W^c_{\mathbf{x}}(r)\otimes W^c_{\mathbf{x}}(r)\,dr\right)^{-1}e_j,w_k\right\rangle. \end{equation}
Here $\to_d$ denotes convergence in distribution. The processes $W_{\mathbf{x}}$ and $W_U$ are the joint Brownian limits of the $T^{-1/2}$-scaled partial sums of $\Delta\mathbf{x}_t$ and $U_t$, taking values in $\HX$ and $\HY$, respectively. The demeaned path is $W^c_{\mathbf{x}}(r)=W_{\mathbf{x}}(r)-\int_0^1W_{\mathbf{x}}(u)\,du$, whereas $W_{U|\mathbf{x}}(r)=W_U(r)-\Omega_{U\mathbf{x}}\Omega_{\mathbf{x}\mathbf{x}}^{-1}W_{\mathbf{x}}(r)$ is the uncentered error Brownian motion independent of $W_{\mathbf{x}}$.

\indent Although the limiting distribution in \eqref{eq_local_limit} does not have closed-form quantiles, Theorem~\ref{thmapp3} in the Supplement shows that it can be consistently approximated using estimated covariance eigenelements and simulated scalar Brownian motions. Let $\widehat q_{\tau}(e_j,w_k)$ denote the simulated $\tau$ quantile of this approximation. An asymptotic $(1-\alpha)$ equal-tailed confidence interval for $\langle\beta_j,w_k\rangle$ is
\begin{equation}\label{eqlocalci} \operatorname{CI}_j(1-\alpha,w_k)=\left[\langle\widehat\beta_j,w_k\rangle-{\widehat q_{1-\alpha/2}(e_j,w_k)}/T,\ \langle\widehat\beta_j,w_k\rangle-{\widehat q_{\alpha/2}(e_j,w_k)}/{T}\right]. \end{equation}
\indent Repeating this calculation over a collection of subintervals produces a local-average confidence display for the forcing-specific response function \citep{seong2021functional,Nam2025}. It shows where the average CLR response over an anomaly range is individually distinguishable from zero at the specified confidence level. Each interval has asymptotic marginal coverage for its corresponding local average; without a multiplicity adjustment, the collection should not be interpreted as a simultaneous confidence band for the entire function.

\smallskip
\noindent\textbf{Monte Carlo implementation.} The feasible quantiles $\widehat q_{\tau}(e_j,w_k)$ are computed using the following three-step simulation. The asymptotic justification for this procedure is provided in Section~\ref{sec_app_det3} of the Supplement.
\begin{description}[style=nextline, leftmargin=1em, font=\bfseries,topsep=-2pt, itemsep=0pt]
\item[Step 1: Spectral decomposition of covariance operators] \hfill
Using the residuals $\Upre_t$ defined above, we estimate their long-run covariance operator by
\begin{equation}\label{eq_lrv_UU} \widehat{\Omega}_{UU}=\sum_{|j|\leq h}\mathrm{k}(j/h)\widehat{\Gamma}_{UU}^{(j)},\qquad \widehat{\Gamma}_{UU}^{(j)}=\frac{1}{T}\sum_{t=(1-j)\vee1}^{T\wedge(T-j)}\Upre_t\otimes\Upre_{t+j}. \end{equation}
We then construct the conditional long-run covariance operator $\widehat{\Omega}_{U|\mathbf{x}}=\widehat{\Omega}_{UU}-\widehat{\Omega}_{U\mathbf{x}}\widehat{\Omega}_{\mathbf{x}\mathbf{x}}^{-1}\widehat{\Omega}_{\mathbf{x}U}$, where $\widehat{\Omega}_{\mathbf{x}U}=\widehat{\Omega}_{U\mathbf{x}}^*$. Let $(\widehat\lambda_j,\widehat v_j)$ and $(\widehat\mu_j,\widehat w_j)$ denote the eigenpairs of $\widehat{\Omega}_{U|\mathbf{x}}$ and $\widehat{\Omega}_{\mathbf{x}\mathbf{x}}$, respectively. We use the decompositions
\begin{equation}\label{eqdecomstep} \widehat{\Omega}_{U|\mathbf{x}}\approx\sum_{j=1}^{M}\widehat\lambda_j\widehat v_j\otimes\widehat v_j,\qquad \widehat{\Omega}_{\mathbf{x}\mathbf{x}}=\sum_{j=1}^{\KX}\widehat\mu_j\widehat w_j\otimes\widehat w_j. \end{equation}
The first decomposition retains the leading $M$ eigenpairs of an operator acting on the infinite-dimensional space $\HY$, whereas the second requires no truncation because $\HX$ has dimension $\KX$. These eigenelements are obtained by functional eigenanalysis and matrix eigendecomposition, respectively. In finite samples, $M$ is chosen to capture the dominant variation. Asymptotically, $M=M_T$ increases slowly enough with $T$ to control eigenelement estimation error, as required by Theorem~\ref{thmapp3} of the Supplement.

\item[Step 2: Generation of synthetic Brownian paths] \hfill
For each Monte Carlo replication $\ei=1,\ldots,R_{\mathrm{MC}}$, generate standard scalar Brownian motions $\{W_{1,j,(\ei)}\}_{j=1}^{M}$ and $\{W_{2,j,(\ei)}\}_{j=1}^{\KX}$, independently across the two families and across replications, on a fine grid over $[0,1]$. To reproduce the demeaning of the integrated forcing variables, define $W^c_{2,j,(\ei)}(r)=W_{2,j,(\ei)}(r)-\int_0^1W_{2,j,(\ei)}(u)\,du$. The required synthetic paths are
\begin{equation}\label{eq_synthetic_BM} \widehat W_{U|\mathbf{x},(\ei)}(r)=\sum_{j=1}^{M}\widehat\lambda_j^{1/2}\widehat v_jW_{1,j,(\ei)}(r),\qquad \widehat W^c_{\mathbf{x},(\ei)}(r)=\sum_{j=1}^{\KX}\widehat\mu_j^{1/2}\widehat w_jW^c_{2,j,(\ei)}(r). \end{equation}
Only the forcing path $\widehat W^c_{\mathbf{x},(\ei)}$ is centered. No analogous centering is applied to $\widehat W_{U|\mathbf{x},(\ei)}$.

\item[Step 3: Monte Carlo approximation of the quantiles] \hfill
For a given anomaly range with weight $w_k$, define the scalar Brownian process $\widehat Z_{k,(\ei)}(r)=\langle\widehat W_{U|\mathbf{x},(\ei)}(r),w_k\rangle$, whose increments satisfy
\begin{equation}\label{eq_simulated_Z} d\widehat Z_{k,(\ei)}(r)=\sum_{\ell=1}^{M}\widehat\lambda_\ell^{1/2}\langle\widehat v_\ell,w_k\rangle\,dW_{1,\ell,(\ei)}(r). \end{equation}
Because $\HX=\mathbb R^{\KX}$, the operator $\int_0^1\widehat W^c_{\mathbf{x},(\ei)}(r)\otimes\widehat W^c_{\mathbf{x},(\ei)}(r)\,dr$ is represented by an ordinary $\KX\times\KX$ matrix. The simulated realization of the limiting random variable in \eqref{eq_local_limit} is therefore computed directly as
\begin{equation}\label{eq_simulated_local_limit} e_j'\left\{\int_0^1\widehat W^c_{\mathbf{x},(\ei)}(r)\widehat W^c_{\mathbf{x},(\ei)}(r)'\,dr\right\}^{-1}\int_0^1\widehat W^c_{\mathbf{x},(\ei)}(r)\,d\widehat Z_{k,(\ei)}(r). \end{equation}
The ordinary integrals are evaluated by Riemann sums on the simulation grid, and the stochastic integral is evaluated using Brownian increments and left-endpoint values of $\widehat W^c_{\mathbf{x},(\ei)}$. Repeating Steps~2 and~3 for $\ei=1,\ldots,R_{\mathrm{MC}}$ gives the empirical distribution of \eqref{eq_simulated_local_limit}. Its empirical $\tau$ quantile is $\widehat q_{\tau}(e_j,w_k)$; in particular, $\widehat q_{\alpha/2}(e_j,w_k)$ and $\widehat q_{1-\alpha/2}(e_j,w_k)$ are the two quantiles used in \eqref{eqlocalci}.

\end{description}


\section{Empirical Analysis of Distributional Responses}
\label{sec:distributional_responses}
\noindent This section first examines whether the CO$_2$ and non-CO$_2$ forcing portfolios carry distinct stochastic trends (no within-cointegration) and then whether their trends account for all stochastic trends in the anomaly-density process (between-cointegration). It then estimates the forcing-specific responses, uses their implied densities to describe changes in location, dispersion, and tail mass, examines the common-response restriction imposed by an aggregate anthropogenic forcing index, compares the fitted responses with matched scalar benchmarks, and localizes probability-mass reallocations across the anomaly support. Section~\ref{sec_app_coverage} of the Supplement examines whether the results are sensitive to changes in the spatial coverage of the temperature data over time.

\subsection{Anthropogenic forcing and persistent climate change}
\label{sec_test_stat}
The empirical specification involves two distinct questions, assessed in sequence. We first examine whether the forcing vector $\mathbf{x}_t=(\mathrm{F1}_t,\mathrm{F2}_t)'$ contains two distinct stochastic trends. If F1 and F2 are within-cointegrated, their persistent variation is driven by fewer than two stochastic trends, and they cannot be treated as distinct persistent coordinates. Given full nonstationary rank of $\mathbf{x}_t$, we then test whether its stochastic trends span all stochastic trends in the density process $Y_t$, leaving the disturbance $U_t$ in \eqref{eqmodel1} stationary.

Let $d_N(\mathbf{x}_t)$ denote the number of stochastic trends in $\mathbf{x}_t$, equivalently, the dimension of its nonstationary subspace. Since $\mathbf{x}_t$ is a bivariate vector with $\KX=2$, we apply the variance-ratio diagnostic of \citet{Breitung2002} directly to its demeaned observations. We test the null of no within-cointegration against the alternative of fewer than two stochastic trends: 
\begin{equation*} H_0:\ d_N(\mathbf{x}_t)=2,\qquad H_1:\ d_N(\mathbf{x}_t)<2. 
\end{equation*}
Under the null, no nontrivial linear combination of F1 and F2 is stationary. The variance-ratio statistic is $63.2050$, with a $p$-value of approximately $0.95$, far above conventional significance levels. This supports treating F1 and F2 as distinct persistent coordinates. The maintained full-nonstationary-rank condition also ensures that $\widehat C_{\mathbf{x}\mathbf{x}}$ and $\widehat\Omega_{\mathbf{x}\mathbf{x}}$ are invertible with probability approaching one, as required to define $\Bpre$ and $\widehat B$. It does not, however, imply that F1 and F2 have different distributional response functions; the corresponding common-response restriction is examined below.

Having retained the specification in which the CO$_2$ and non-CO$_2$ forcing portfolios carry linearly independent stochastic trends, we next ask whether these trends account for the persistent evolution of the density process. In our framework, this is the between-cointegration condition: after projecting the density process on the forcing vector, the remaining disturbance $U_t$ in \eqref{eqmodel1} must be stationary. We thus construct the residuals $\Upre_t=(Y_t-\bar Y_T)-\Bpre(\mathbf{x}_t-\bar{\mathbf{x}}_T)$ from the centered least-squares projection, where $\Bpre=\widehat C_{Y\mathbf{x}}\widehat C_{\mathbf{x}\mathbf{x}}^{-1}$, and first evaluate the unnormalized diagnostic $\widehat{\mathcal T}_K$ defined in \eqref{eqteststat_K_main}. We also use the covariance-normalized diagnostic $\widehat{\mathcal T}_V$ in \eqref{eqteststat_V_main} as a complementary check. Critical values are obtained from the plug-in Monte Carlo procedure described in Section~\ref{sec_betcointeg}. The relevant long-run and one-sided covariance operators are estimated from $(\Delta\mathbf{x}_t,\Upre_t)$ using the Parzen kernel with the bandwidth chosen as the nearest integer to $T^{1/4}$, and the two null distributions are approximated using 5{,}000 replications on a grid of 499 subintervals. The unnormalized statistic is $\widehat{\mathcal T}_K=0.0110$, with a simulated 5\% critical value of $0.0165$ and a Monte Carlo $p$-value of $0.1398$. The complementary covariance-normalized check gives $\widehat{\mathcal T}_V=0.2651$, with a simulated 5\% critical value of $0.3114$ and a Monte Carlo $p$-value of $0.0748$. Both statistics fall below their respective critical values, so neither test rejects between-cointegration at the 5\% level. The response analysis thus proceeds under the specification in \eqref{eqmodel1}.

\noindent Within the stochastic-trend framework adopted here, this finding also has a substantive climate interpretation. Our empirical formulation represents climate change as persistent evolution in the cross-sectional distribution of observed temperature anomalies. At the 5\% level, the tests do not detect an additional stochastic trend in the residual after accounting for the two anthropogenic forcing portfolios. Thus, conditional on the constructed forcing measures and the maintained long-run specification, the results are consistent with their trends spanning the persistent distributional component of observed climate change. This interpretation concerns only the persistent component; stationary natural variability and transitory shocks may remain in the residual.

\subsection{Forcing-specific responses and what aggregation misses}
\label{subsec:responses_and_margins}

\noindent Having retained F1 and F2 as distinct persistent coordinates and found no evidence against between-cointegration, we estimate the long-run response operator using the fully modified estimator in Section~\ref{sec_esti}. The forcing-specific estimates are $\widehat\beta_k=\widehat B(e_k)$ for $k\in\{1,2\}$. For illustration, Figure~\ref{Fig:RF_Response} reports $\Delta x_k\widehat\beta_k$, where $\Delta x_k$ is set equal to the sample standard deviation of forcing portfolio $k$ and serves only as a convenient reporting scale, together with marginal confidence intervals for the corresponding local-average CLR responses.

Because $\widehat\beta_k$ belongs to the CLR space, it describes a response on the centered log-density scale rather than directly on the density scale. To illustrate the corresponding change in density, we apply the fitted CLR response for the chosen increment $\Delta x_k$ to a common reference density and map the result back through the inverse CLR transformation. Following \citet{Nam2025}, we take the arithmetic average $f_0(s)=T^{-1}\sum_{t=1}^Tf_t(s)$ on $[a,b]$, where $a=-5.2527$ and $b=5.6224$, as the reference. Because the density process is nonstationary, $f_0$ is not interpreted as a stationary population mean or long-run equilibrium, but only as a representative data-based reference. Its choice affects the density-scale illustration and descriptive margins, but not the CLR response estimates or their local-average inference. The implied probability density following a change $\Delta x_k$ is
\begin{equation}\label{eq:clr_pushforward}
f_k(\,\cdot\,;\Delta x_k)=\operatorname{clr}^{-1}\!\left[\operatorname{clr}(f_0)+\Delta x_k\widehat\beta_k\right].
\end{equation}
In the finite-basis implementation used here, the inverse-CLR normalization ensures that $f_k(s;\Delta x_k)>0$ and $\int_a^b f_k(s;\Delta x_k)\,ds=1$, so $f_k(\cdot;\Delta x_k)$ is a proper probability density on $[a,b]$.

\begin{figure}[t]
\centering
\includegraphics[height=0.27\textwidth,trim={0.1cm 0.1cm 0.1cm 0.1cm},clip]{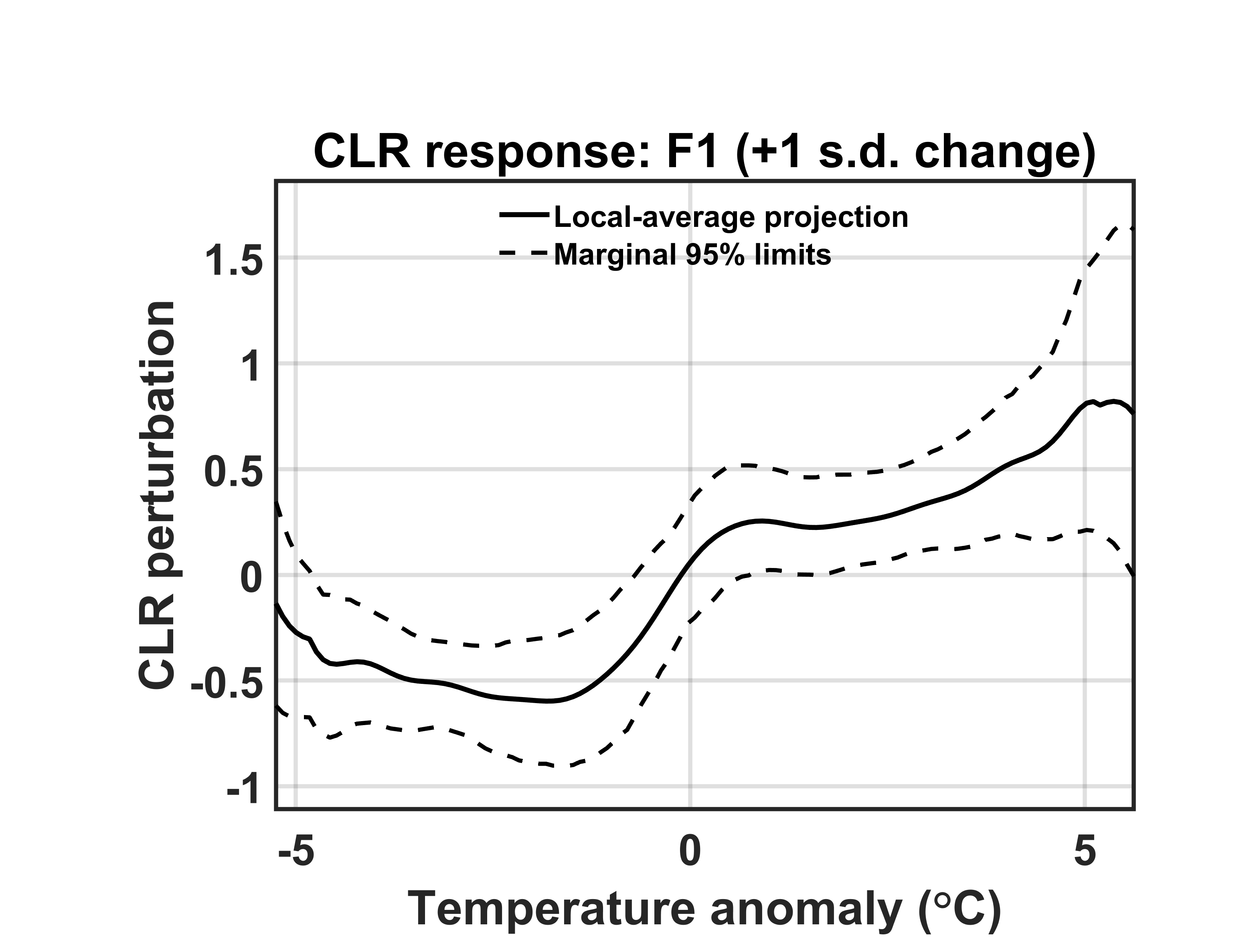}
\includegraphics[height=0.24\textwidth,trim={0.0cm 0.0cm 0.0cm 0.0cm},clip]{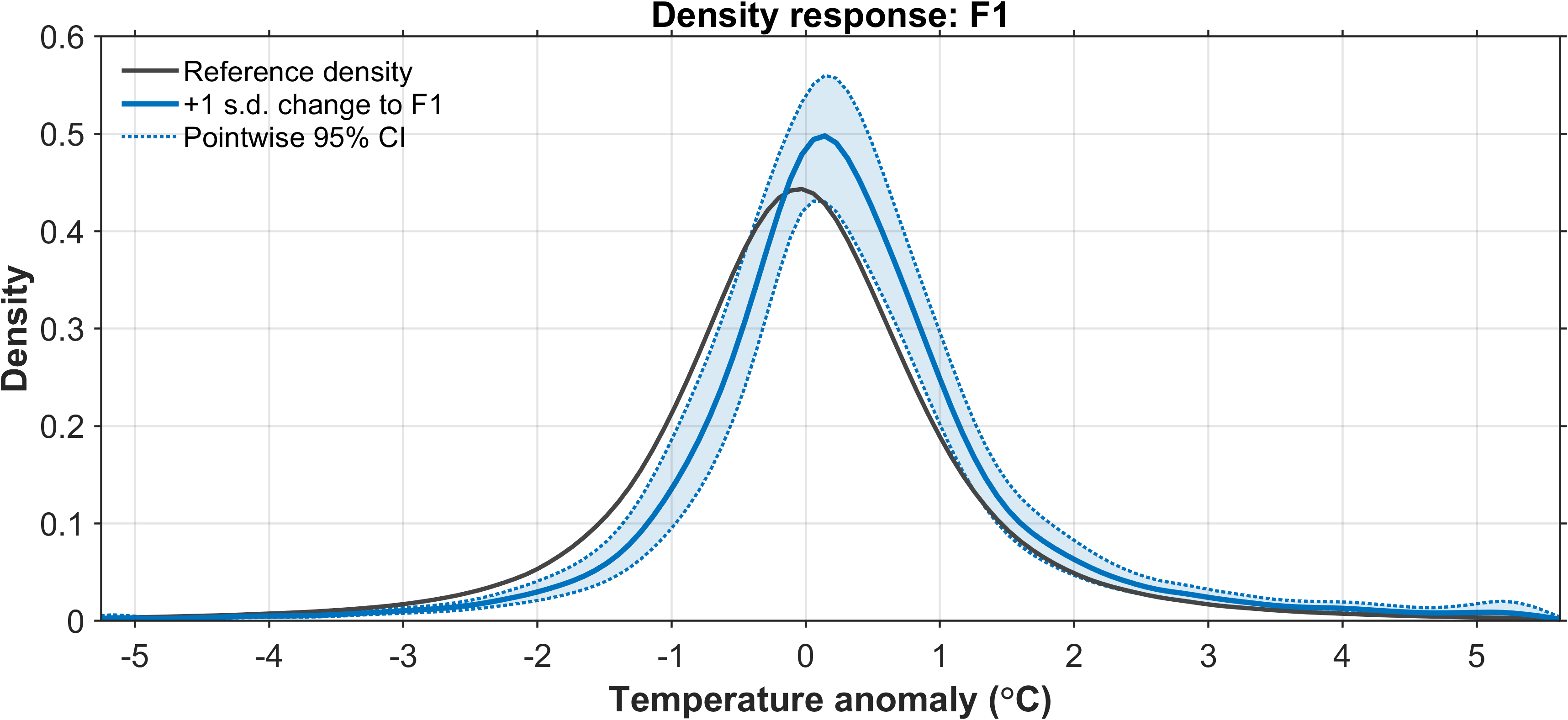}
\includegraphics[height=0.27\textwidth,trim={0.1cm 0.1cm 0.1cm 0.1cm},clip]{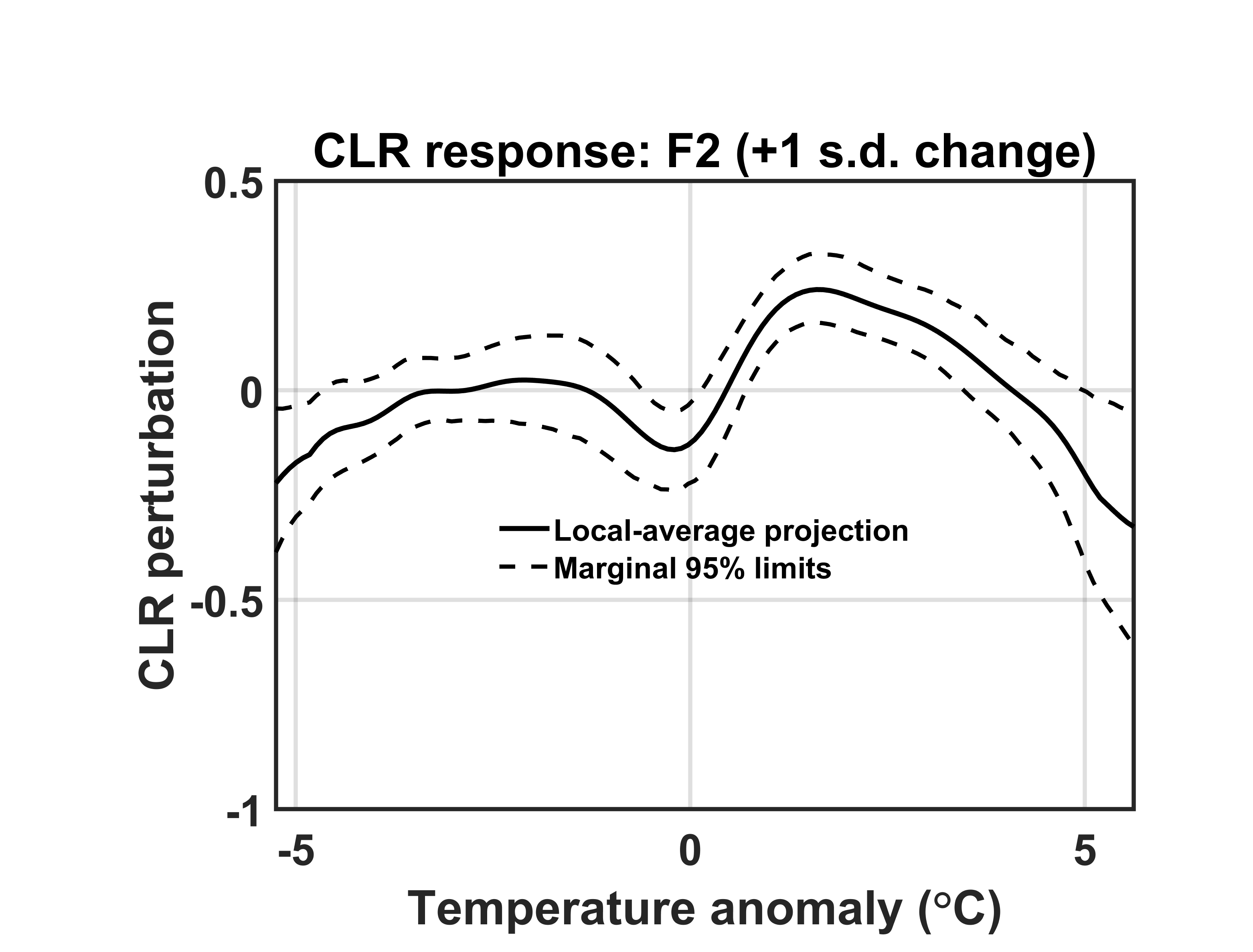}
\includegraphics[height=0.24\textwidth,trim={0.0cm 0.0cm 0.0cm 0.0cm},clip]{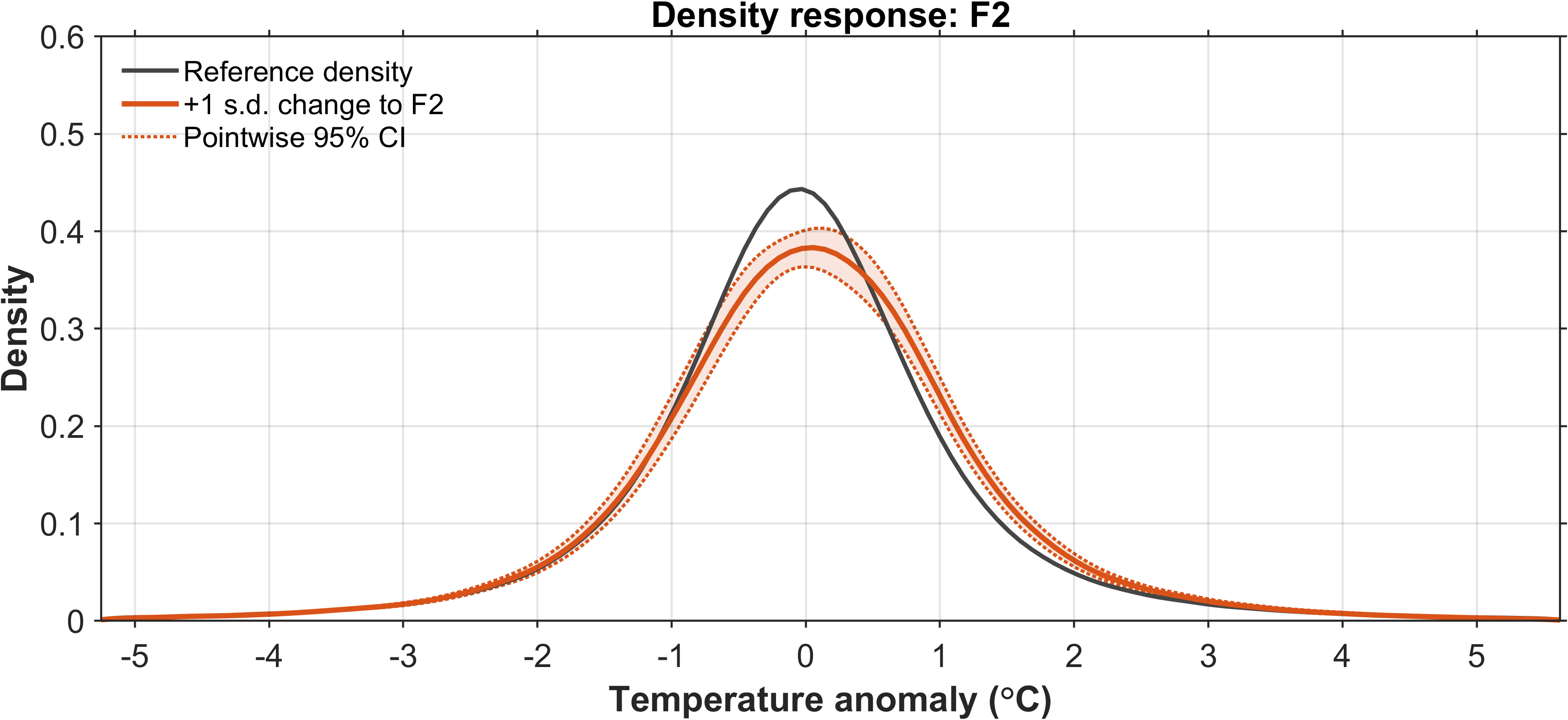}
\caption{Estimated responses to the specified increases in F1 (top) and F2 (bottom): local-average CLR responses with marginal 95\% confidence intervals (left), and the common reference density and fitted implied density (right). Shaded bands show central 95\% pointwise simulation envelopes for the implied densities, based on 50{,}000 joint draws; see Section~\ref{sec_app_density_ci} of the Supplement.}
\label{Fig:RF_Response}
\end{figure}

Figure~\ref{Fig:RF_Response} reports the local-average CLR responses scaled by the increments $\Delta x_k$ in the left panels and the implied densities alongside the reference density in the right panels.\footnote{We use an 11-grid-point window, spanning approximately $0.86^\circ\mathrm{C}$ at interior points, to smooth grid-level variation while retaining local response features.} The CLR panels use different vertical scales. At this scale, the implied F1 density shifts probability mass from colder to warmer anomalies while increasing central concentration and upper-tail mass relative to the reference. The implied F2 density shows a smaller warmward shift, greater dispersion, and a modest increase in upper-tail mass. For F1, the marginal intervals for the local-average CLR response lie entirely below zero over a broad cold-anomaly range and entirely above zero over a warm-anomaly range. For F2, they lie entirely below zero over a central range and entirely above zero over a broad range on the warm side.


\indent We next report several descriptive summaries of the implied density changes shown in the right panels of Figure~\ref{Fig:RF_Response}. These quantities summarize location, dispersion, and tail features and are not treated as separate inferential targets. The distinction between location and dispersion is empirically relevant and is consistent with \citet{chang2020evaluating}, who document stochastic trends in both features.

Let $Q_k(u)$ denote the quantile function of $f_k(\cdot;\Delta x_k)$, and let $Q_0(u)$ denote that of the reference density.\footnote{In practice, $Q_k(u)$ is obtained from the CDF evaluated on the discretized anomaly grid. We numerically enforce monotonicity, retain unique CDF values, and use linear interpolation for inversion.} The implied change in the mean is $\Delta\mu_k=\int_a^b s\{f_k(s;\Delta x_k)-f_0(s)\}\,ds$, and the central-half displacement is $\delta_k^{\mathrm{loc}}=(u_H-u_L)^{-1}\int_{u_L}^{u_H}\{Q_k(u)-Q_0(u)\}\,du$, where $(u_L,u_H)=(0.25,0.75)$. This quantity provides a location benchmark for typical anomaly states that is less sensitive to tail behavior than the mean. We also report the median shift $\Delta\mathrm{Median}_k=Q_k(0.5)-Q_0(0.5)$, the interquartile-range change $\Delta\mathrm{IQR}_k=\{Q_k(0.75)-Q_k(0.25)\}-\{Q_0(0.75)-Q_0(0.25)\}$, and the mean--location gap $\Delta\mu_k-\delta_k^{\mathrm{loc}}$, which compares the overall mean change with the central-half displacement.

For the tail summaries, fix the reference thresholds $r_q=Q_0(q)$ for $q\in\{0.05,0.95\}$. To distinguish reallocation between the two tails from a change in combined tail probability, define the cold-tail change as $\Delta\mathrm{ColdMass}_k=\int_a^{r_{0.05}}\{f_k(s;\Delta x_k)-f_0(s)\}\,ds$, the warm-tail change as $\Delta\mathrm{WarmMass}_k=\int_{r_{0.95}}^b\{f_k(s;\Delta x_k)-f_0(s)\}\,ds$, and their sum as $\Delta\mathrm{Extreme}_k=\Delta\mathrm{ColdMass}_k+\Delta\mathrm{WarmMass}_k$.

\begin{table}[t]
\centering
\caption{Descriptive summaries of the fitted density changes under one-standard-deviation increases. Temperature margins are in $^\circ$C; tail margins are probability changes.}
\label{Tab:MarginDecomp}
\renewcommand{\arraystretch}{1.15}
\setlength{\tabcolsep}{3pt}
\resizebox{\textwidth}{!}{
\begin{tabular}{lrrrrrrrr}
\toprule
Forcing
& $\Delta\mu$
& $\delta^{\mathrm{loc}}$
& $\Delta\mu-\delta^{\mathrm{loc}}$
& $\Delta\mathrm{Median}$
& $\Delta\mathrm{IQR}$
& $\Delta\mathrm{ColdMass}$
& $\Delta\mathrm{WarmMass}$
& $\Delta\mathrm{Extreme}$ \\
\midrule
F1 & 0.3125 & 0.2646 & 0.0479 & 0.2562 & $-$0.1120
& $-$0.0205 & 0.0223 & 0.0018 \\
F2 & 0.0789 & 0.0883 & $-$0.0094 & 0.0894 & 0.1500
& 0.0003 & 0.0080 & 0.0083 \\
\bottomrule
\end{tabular}}
\end{table}

\indent Table~\ref{Tab:MarginDecomp} provides numerical summaries of the changes in the implied densities shown in Figure~\ref{Fig:RF_Response}. We use their signs and relative magnitudes descriptively, not as separately tested quantities. At the chosen reporting scale, the larger central-half displacement and negative IQR point estimate for F1 correspond to its stronger warmward shift and greater central concentration, whereas the smaller displacement and positive IQR point estimate for F2 correspond to its weaker location shift and wider central range.

The table shows two further distinctions. First, the implied mean--location gap is positive for F1 but close to zero for F2. Thus, F1's mean change is not fully summarized by its central displacement, whereas the implied mean and central shifts for F2 are closely aligned. Second, the sum of the two reference-tail changes distinguishes changes in extreme-state composition from changes in combined tail probability. For F1, the cold-tail decrease nearly offsets the warm-tail increase, leaving $\Delta\mathrm{Extreme}_1=0.0018$; the implied tail composition therefore shifts from the cold to the warm tail with little change in combined tail probability. For F2, the cold tail is nearly unchanged, so the warm-tail increase carries through to $\Delta\mathrm{Extreme}_2=0.0083$ and is concentrated on the warm side.

\indent The estimated CLR profiles in Figure~\ref{Fig:RF_Response} differ visibly in shape and sign patterns, suggesting that aggregation may conceal substantial response heterogeneity. As a formal check using the projection inference developed above, we examine the common-response restriction $\beta_1=\beta_2$ introduced in Section~\ref{subsec:aggregation_cost} within the fixed response space used for estimation. This comparison differs from the no-within-cointegration diagnostic in Section~\ref{sec_test_stat}: that diagnostic assesses the number of stochastic trends in the forcing block, whereas the present comparison asks whether F1 and F2 have the same distributional response.

Because F1 and F2 are measured in the same $\mathrm{W\,m^{-2}}$ units, equality is assessed using the unscaled coefficient functions. Let $\mathbf c=e_1-e_2$ denote the contrast vector, so that $B\mathbf c=\beta_1-\beta_2$. Section~\ref{sec_app_common_response} of the Supplement uses the Cram\'{e}r--Wold device and Theorems~\ref{thmapp2}--\ref{thmapp3} to establish joint validity of the feasible approximation for any fixed finite collection of response directions. Let $\{\phi_m\}_{m=1}^J$, with $J=20$, denote the orthonormal nonconstant Fourier directions spanning the empirical response space, and let $P_J$ denote the associated projection. We test
\begin{equation*}
H_{0,J}:\ \langle B\mathbf c,\phi_m\rangle=0\ \text{for }m=1,\ldots,J,\qquad
\widehat{\mathcal S}_{\mathrm{eq},J}
=T\left\{\sum_{m=1}^J\langle\widehat B\mathbf c,\phi_m\rangle^2\right\}^{1/2}
=T\|P_J\widehat B\mathbf c\|_{L^2}.
\end{equation*}
Here $\|g\|_{L^2}=(\int_a^b |g(s)|^2\,ds)^{1/2}$ denotes the $L^2[a,b]$ norm. The null distribution is approximated using 50{,}000 joint draws from the feasible approximation to the intercept-model limit, with the same Brownian paths used across all $J$ coordinates within each replication and with $M=7$. The same section reports the numerical implementation, retained-variation shares, and sensitivity to $J$ and $M$. The observed statistic is 557.48, exceeding the simulated 95th percentile of 463.85; the corresponding add-one Monte Carlo $p$-value is 0.0199. Thus, the common-response restriction is rejected at the 5\% level within the fixed response space.

\indent We next use two scalar-response benchmarks to distinguish the information lost by reducing the density response to its mean from that lost by aggregating the forcing coordinates. The aggregate scalar specification regresses the density mean on total anthropogenic forcing, thereby collapsing both the response and predictor. The vector-to-mean specification retains F1 and F2 separately but models only the mean response. All three specifications use the same fully modified estimation approach.

For Table~\ref{Tab:BenchmarkComparison}, we construct a matched reporting scenario by setting the increase in aggregate anthropogenic forcing (F1+F2) equal to its sample standard deviation and allocating it between F1 and F2 in proportion to their sample standard deviations. This scaling serves only as a convenient empirical normalization.\footnote{Let $s_1$ and $s_2$ denote the sample standard deviations of F1 and F2, and let $s_{\mathrm{agg}}$ denote that of F1+F2. We set $\Delta_j=s_{\mathrm{agg}}s_j/(s_1+s_2)$ for $j=1,2$, so that $\Delta_1+\Delta_2=s_{\mathrm{agg}}$. The F1, F2, and joint rows apply $(\Delta_1,0)$, $(0,\Delta_2)$, and $(\Delta_1,\Delta_2)$, respectively.} The component rows isolate the two allocated changes, whereas the joint row combines them to represent the matched aggregate increase. This normalization differs from the coordinate-specific scenarios in Figure~\ref{Fig:RF_Response} and Table~\ref{Tab:MarginDecomp}, which change F1 and F2 separately by their respective sample standard deviations. For the matched joint change under the vector-to-density specification, the implied probability density is $f(\,\cdot\,;\Delta_1,\Delta_2)=\operatorname{clr}^{-1}[\operatorname{clr}(f_0)+\Delta_1\widehat\beta_1+\Delta_2\widehat\beta_2]$. Because the inverse CLR mapping is nonlinear, the density-derived component margins need not sum to the joint margin.

\begin{table}[t]
\centering
\caption{Matched responses to a one-standard-deviation increase in aggregate
anthropogenic forcing. Temperature margins are in $^\circ$C; WarmMass is a probability change.}
\label{Tab:BenchmarkComparison}
\renewcommand{\arraystretch}{1.15}
\setlength{\tabcolsep}{7pt}
\begin{tabular}{llrrr}
\toprule
Specification & Channel & $\Delta\mu$ & $\Delta\mathrm{IQR}$ & $\Delta\mathrm{WarmMass}$ \\
\midrule
Aggregate scalar  & F1+F2  & 0.3430 & \multicolumn{1}{c}{--} & \multicolumn{1}{c}{--} \\
\midrule
Vector-to-mean    & Matched F1 component & 0.2785 & \multicolumn{1}{c}{--} & \multicolumn{1}{c}{--} \\
                  & Matched F2 component & 0.0604 & \multicolumn{1}{c}{--} & \multicolumn{1}{c}{--} \\
                  & Matched joint change & 0.3389 & \multicolumn{1}{c}{--} & \multicolumn{1}{c}{--} \\
\midrule
Vector-to-density & Matched F1 component & 0.2874 & $-$0.1030 & 0.0202 \\
                  & Matched F2 component & 0.0716 & 0.1371 & 0.0073 \\
                  & Matched joint change & 0.3678 & $-$0.0098 & 0.0283 \\
\bottomrule
\end{tabular}
\end{table}

\indent
Table~\ref{Tab:BenchmarkComparison} provides a descriptive comparison of fitted values; no separate inference is conducted for its nonlinear margins. The $\Delta\mathrm{IQR}$ and $\Delta\mathrm{WarmMass}$ entries summarize features of the implied densities. The matched mean responses are $0.3430^\circ$C under the aggregate scalar specification and $0.3389^\circ$C under the vector-to-mean specification, compared with $0.3678^\circ$C under the vector-to-density specification. Thus, broadly similar fitted mean responses coexist with IQR and tail changes observable only under the density specification. The matched F1 and F2 components have opposite-signed IQR changes. When the two changes are applied jointly, the resulting density exhibits little change in central spread, with $\Delta\mathrm{IQR}=-0.0098^\circ\mathrm{C}$, while upper-tail probability increases by 2.83 percentage points. The joint implied density exhibits a meaningful change in tail behavior despite its small IQR change.

\subsection{Distributional displacement and reshaping}
\label{subsec:loss_decomposition_interpretation}

\noindent The global margins in Table~\ref{Tab:MarginDecomp} summarize changes in location, dispersion, and tail mass, but not where offsetting gains and losses of probability mass occur. Similar mean or IQR changes can reflect quite different reallocations across the anomaly support. Pointwise density differences are also less directly interpretable because they measure density height rather than probability and may be sensitive to grid-level variation. We therefore examine probability changes over fixed-width neighborhoods, providing an intermediate view between global summaries and pointwise comparisons.

We return to the coordinate-specific reporting scenarios and write $f_k^{\mathrm{full}}(s)\equiv f_k(s;\Delta x_k)$ for the full implied density in \eqref{eq:clr_pushforward}. To distinguish displacement from reshaping, we construct a location-only benchmark by shifting the reference density by the central-half displacement:
\begin{equation}\label{eq:loc_counterfactual_sec6}
f_k^{\mathrm{loc}}(s)\propto f_0\!\left(s-\delta_k^{\mathrm{loc}}\right),\qquad s\in[a,b].
\end{equation}
After extending $f_0$ by zero outside $[a,b]$, the shifted density is restricted to $[a,b]$ and renormalized. Apart from this boundary adjustment, the benchmark preserves the reference shape. The difference between $f_k^{\mathrm{full}}$ and $f_k^{\mathrm{loc}}$ therefore captures implied changes in dispersion, asymmetry, and mass allocation beyond a pure displacement.

For window width $\ell>0$ and center $r\in[a+\ell/2,b-\ell/2]$, let $I_\ell(r)=(r-\ell/2,r+\ell/2]$ and define
\begin{equation}\label{eq:local_loss_functional_sec6}
L_\ell(r;f)=\int_{I_\ell(r)}f(s)\,ds.
\end{equation}
Unlike density height, $L_\ell(r;f)$ is the probability assigned to a neighborhood of $r$. Define $\Delta L_{k,\ell}^{\mathrm{full}}(r)=L_\ell(r;f_k^{\mathrm{full}})-L_\ell(r;f_0)$, $\Delta L_{k,\ell}^{\mathrm{loc}}(r)=L_\ell(r;f_k^{\mathrm{loc}})-L_\ell(r;f_0)$, and $\Delta L_{k,\ell}^{\mathrm{dist}}(r)=\Delta L_{k,\ell}^{\mathrm{full}}(r)-\Delta L_{k,\ell}^{\mathrm{loc}}(r)$. Thus, the full local response is decomposed exactly into displacement and residual reshaping at each $r$. This decomposition is descriptive and benchmark-dependent rather than a unique structural separation. We set $\ell=0.50^\circ\mathrm{C}$ and evaluate the profiles on an overlapping grid with a $0.05^\circ\mathrm{C}$ step. Positive values indicate gains in local probability mass and negative values indicate losses. Because the windows overlap, the profiles are localized diagnostics rather than an additive probability partition.

\begin{figure}[t]
\centering
\includegraphics[height=0.35\textwidth, width=0.49\textwidth]{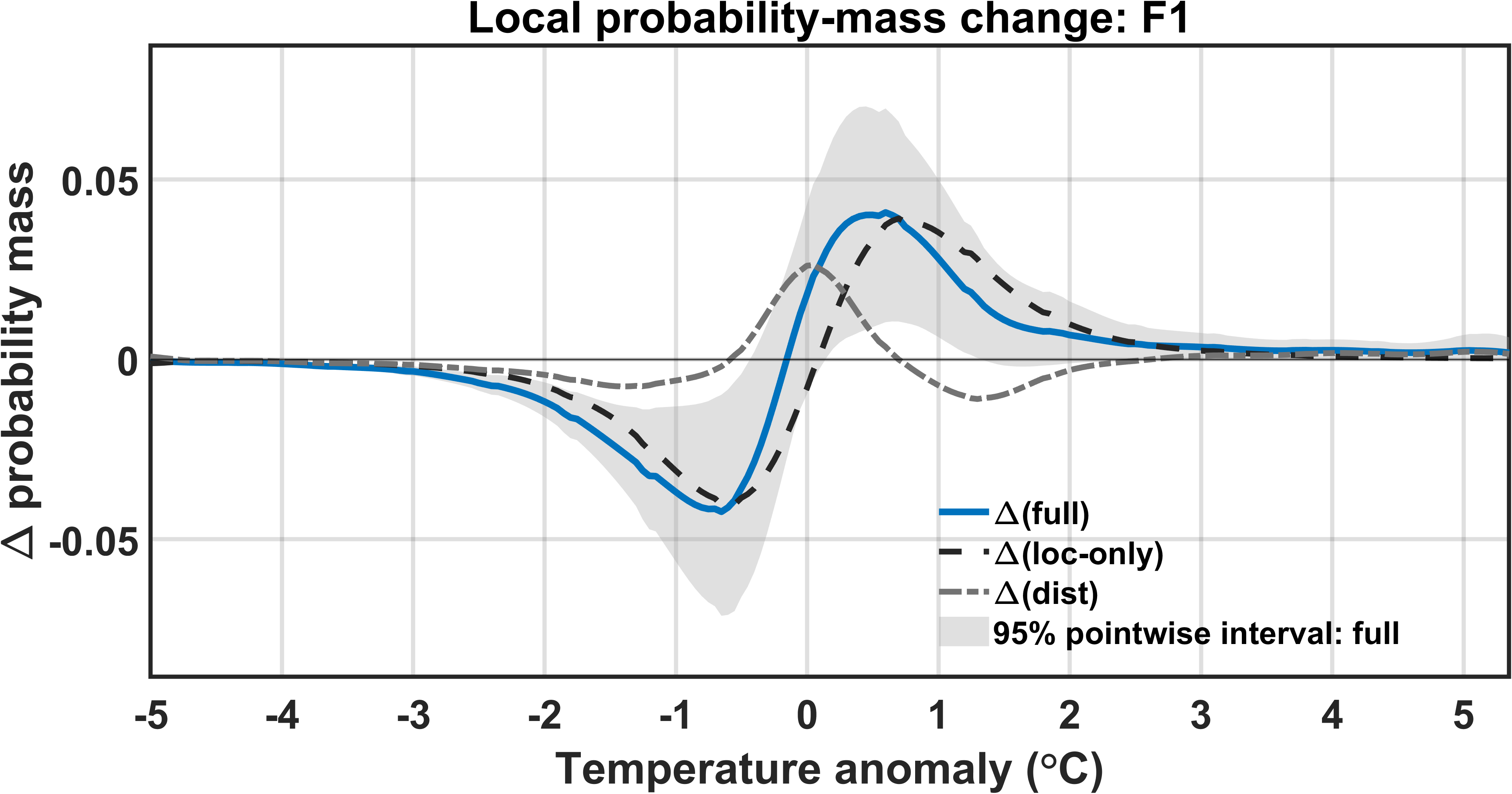}
\includegraphics[height=0.35\textwidth, width=0.49\textwidth]{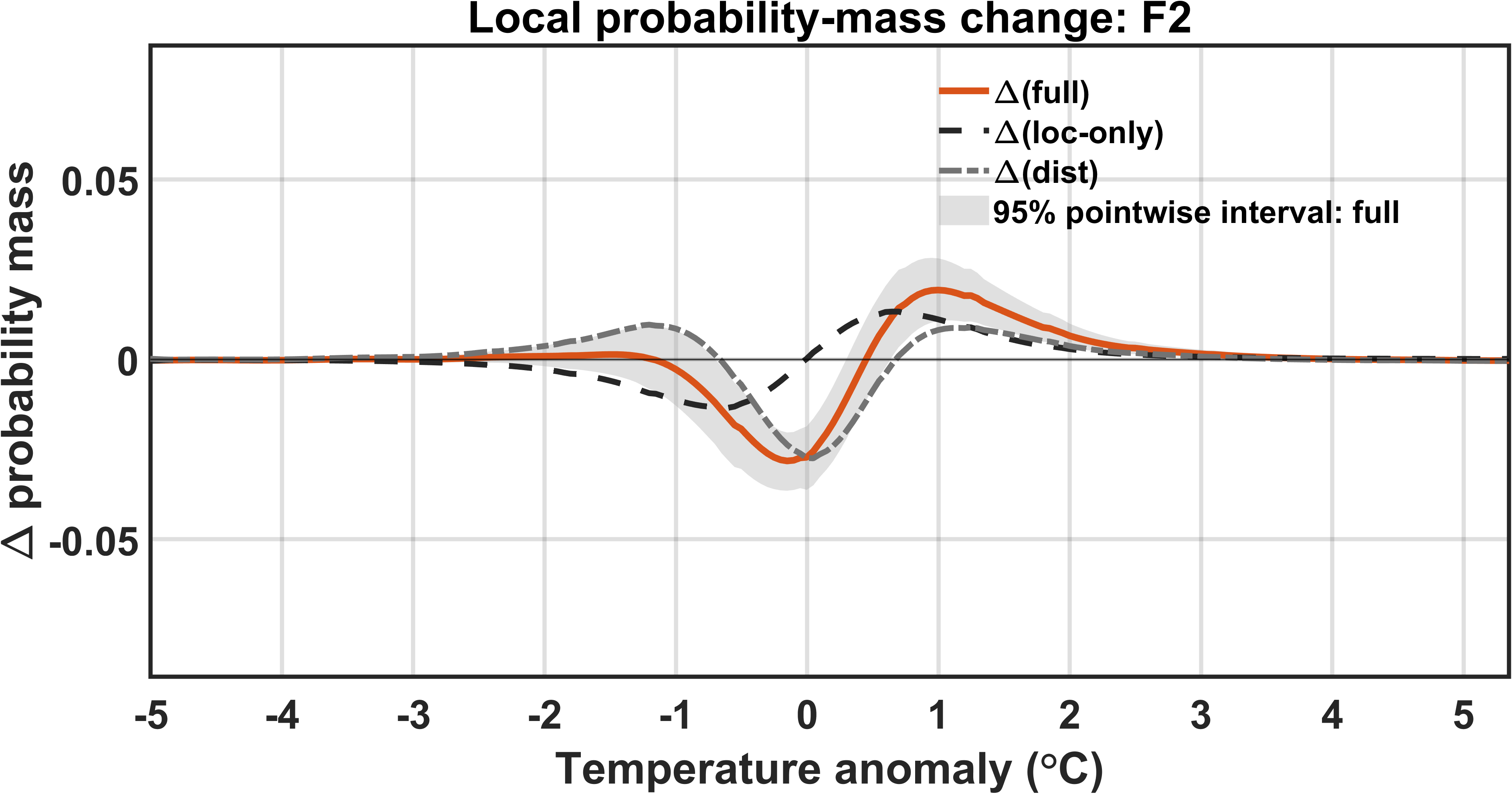}
\caption{Local probability-mass responses for F1 (left) and F2 (right): full fitted responses, location-only components, and residual reshaping components. Shaded bands are central 95\% pointwise simulation envelopes for the full responses, based on the same 50{,}000 joint draws as the implied-density panels of Figure~\ref{Fig:RF_Response}; see Section~\ref{sec_app_density_ci} of the Supplement.}
\label{Fig:BinLossDecomp}
\end{figure}

\indent
Figure~\ref{Fig:BinLossDecomp} reveals different mixtures of location-only displacement and residual reshaping. For F1, the location-only component determines the broad sign pattern: fitted probability mass falls over cold and near-central states and rises on the warm side. The reshaping component adds mass around $r\approx0$ and removes it near $r\approx1.5$, consistent with the negative IQR change reported in Table~\ref{Tab:MarginDecomp}. For F2, the location-only component is relatively small, whereas the reshaping component removes mass near the center and adds it on both flanks, around $r\approx-1$ and $r\approx1.5$, providing the localized counterpart of its positive IQR change. In distributional terms, the fitted CO$_2$ response is therefore dominated by a broad warmward displacement of the cross-sectional distribution of observed local temperature anomalies. By contrast, the fitted non-CO$_2$ response exhibits more pronounced reshaping, with greater dispersion and a higher relative prevalence of both cool and warm off-center states, consistent with the heterogeneous signs and spatial incidence of the components in this portfolio.

\section{The shape of warming: concluding remarks}\label{conclude}

\noindent By linking separate anthropogenic forcing portfolios to the anomaly density rather than its mean alone, our analysis gives the \emph{shape of warming} a specific empirical meaning. The specification evidence supports retaining the CO$_2$ and non-CO$_2$ portfolios as distinct persistent coordinates and is consistent with their trends jointly spanning the persistent evolution of the observed temperature-anomaly distribution. Within the fixed response space, rejection of the common-response restriction further indicates that the distributional response depends on the composition, not merely the aggregate level, of anthropogenic forcing.

\indent On the density scale, warming is not simply a uniform shift to the right. For the CO$_2$ portfolio, the fitted density moves warmward and becomes more concentrated around its center: colder states lose probability, the central peak sharpens, and warmer states gain probability, while the two tail changes nearly offset. For the non-CO$_2$ portfolio, the density moves less but spreads outward: central states lose mass, both flanks gain it, and additional extreme-state probability appears mainly in the warm tail. The two portfolios therefore trace distinct shapes of warming: displacement with concentration versus weaker displacement with widening and reshaping. Scalar benchmarks yield similar aggregate mean responses but compress these patterns into a single number. These fitted patterns represent persistent reduced-form associations and do not establish causality.

\bibliography{VtD_biblio}
\appendix

\counterwithin{assumption}{section}
\counterwithin{theorem}{section}
\renewcommand{\theassumption}{\thesection.\arabic{assumption}}
\renewcommand{\thetheorem}{\thesection.\arabic{theorem}}

\newpage 

\LARGE{\textbf{Supplementary Appendix}} \normalsize

\section{Additional details on the density-valued response}
\label{sec_app_density_error}
\noindent The empirical analysis in Section~\ref{subsec:response_data} first estimates an annual density of observed grid-cell-month temperature anomalies and then applies the CLR transformation in \eqref{eq_clr_response} to obtain the functional response. This section provides additional data and numerical details and clarifies how density-estimation error is accommodated by the regression model.

\indent We use the non-infilled HadCRUT.5.1.0.0 product downloaded from \url{https://www.metoffice.gov.uk/hadobs/hadcrut5/} on January~15,~2026. HadCRUT5 combines land-surface air temperatures from CRUTEM5 and sea-surface temperatures from HadSST4 on a $5^\circ\times5^\circ$ latitude--longitude grid \citep{morice2021updated}. The ensemble-mean observations are pooled within each year with equal weights, as described in the main article.

\indent Annual densities are estimated by Gaussian kernel smoothing using Silverman's rule-of-thumb bandwidth. The common support is $[-5.2527,5.6224]$, obtained by trimming 0.5\% from each tail of the pooled anomaly distribution. Each kernel estimate is restricted to this interval, renormalized to integrate to one, and evaluated on a common grid of 128 points before applying the CLR transformation.

\indent To distinguish the underlying and estimated objects, let $f_t^\circ$ denote the underlying annual density and $\widehat f_t$ its kernel estimator. The corresponding latent and empirical CLR responses are $Y_t^\circ=\operatorname{clr}(f_t^\circ)$ and $Y_t=\operatorname{clr}(\widehat f_t)$, respectively, so that
\begin{equation*}
Y_t=Y_t^\circ+e_t,\qquad
e_t:=\operatorname{clr}(\widehat f_t)-\operatorname{clr}(f_t^\circ).
\end{equation*}
If the latent response satisfies $Y_t^\circ(s)=\beta_0(s)+\sum_{k=1}^{\KX}\beta_k(s)x_{k,t}+U_t^\circ(s)$ for $s$ in the common support, then the empirical response satisfies
\begin{equation*}
Y_t(s)=\beta_0(s)+\sum_{k=1}^{\KX}\beta_k(s)x_{k,t}+U_t(s),
\qquad
U_t(s):=U_t^\circ(s)+e_t(s).
\end{equation*}
Thus, density-estimation error on the CLR scale is absorbed into the composite regression disturbance $U_t$. The asymptotic theory does not require $e_t$ to vanish separately as $T\to\infty$; rather, the assumptions and results are imposed directly on $U_t$. Any contribution of $e_t$ to temporal dependence or uncertainty is therefore incorporated into the long-run covariance of $U_t$, provided that the composite process satisfies the maintained stationarity, moment, and weak-dependence conditions. Section~\ref{sec_app_coverage} examines sensitivity to changing spatial coverage, a potential source of variation in $e_t$.

\section{Implementation of the between-cointegration tests}
\label{sec_app_test_implementation}

\noindent This section presents the plug-in Monte Carlo procedure used to compute feasible critical values and $p$-values for the between-cointegration tests introduced in Section~\ref{sec_betcointeg}. The required inputs are the projection residuals $\Upre_t$, the observed statistics $\widehat{\mathcal T}_K$ and $\widehat{\mathcal T}_V$, the residual covariance operator $\widehat{\mathcal V}$, and its dominant eigenvector $\hat v_V$, all computed as described there. We state the procedure for the specification with an intercept used in the empirical application. Section~\ref{sec_app_test} provides its theoretical justification, with the intercept modification established in Section~\ref{sec_app_test_1}. Section~\ref{sec_sim_between} reports simulation results on finite-sample size and power.

\begin{description}[style=nextline, leftmargin=1em, font=\bfseries,topsep=-2pt, itemsep=0pt]

\item[Step 1: Estimating the covariance inputs] \hfill
Using $\Delta\mathbf{x}_t$ and the projection residuals $\Upre_t$, compute the kernel estimators $\widehat{\Omega}_{\mathbf{x}\mathbf{x}}$, $\widehat{\Omega}_{U\mathbf{x}}$, and $\widehat{\Omega}_{U\mathbf{x}}^+$ defined in \eqref{eqsample1}--\eqref{eqsample2}, together with
\begin{equation}\label{eqlrveq01} \widehat{\Omega}_{UU}=\sum_{|j|\leq h}\mathrm{k}(j/h)\widehat{\Gamma}_{UU}^{(j)}. \end{equation}
Here $\widehat{\Gamma}_{UU}^{(j)}$ is the lag-$j$ sample covariance operator computed from $\Upre_t$, that is, $\widehat{\Gamma}_{UU}^{(j)}=T^{-1}\sum_{t=(1-j)\vee1}^{T\wedge(T-j)}\Upre_t\otimes\Upre_{t+j}$. The kernel and bandwidth conditions are stated later in Assumption~\ref{assum_test_kernel}. Consider the joint long-run covariance estimator
\begin{equation}\label{eq_app_test_omegaZZ} \widehat{\Omega}_{ZZ}=\begin{pmatrix}\widehat{\Omega}_{\mathbf{x}\mathbf{x}}&\widehat{\Omega}_{U\mathbf{x}}^{*}\\ \widehat{\Omega}_{U\mathbf{x}}&\widehat{\Omega}_{UU}\end{pmatrix} \end{equation}
on $\HX\times\HY$, where $\widehat{\Omega}_{U\mathbf{x}}^{*}$ denotes the adjoint of $\widehat{\Omega}_{U\mathbf{x}}$. Under Assumption~\ref{assum_test_kernel}, the common nonnegative-definite lag window makes $\widehat{\Omega}_{ZZ}$ a self-adjoint, nonnegative, finite-rank operator, so it can be used as the covariance operator for the joint Brownian paths simulated in Step~2.

\item[Step 2: Simulating the limiting null statistics] \hfill
Conditionally on the data, generate a jointly distributed pair of Brownian paths $(\widehat W_{\mathbf{x},(\ei)},\widehat W_{U,(\ei)})$ with joint covariance operator $\widehat{\Omega}_{ZZ}$, independently for each $\ei=1,\ldots,R_{\mathrm{MC}}$. Remark~\ref{rem_prac} describes their numerical construction. Because the statistics are computed from demeaned observations, form the centered paths
\begin{equation}\label{eq_app_test_centered_paths} \widehat W^c_{\mathbf{x},(\ei)}(r)=\widehat W_{\mathbf{x},(\ei)}(r)-\int_0^1\widehat W_{\mathbf{x},(\ei)}(u)\,du,\qquad \widehat W^c_{U,(\ei)}(r)=\widehat W_{U,(\ei)}(r)-r\widehat W_{U,(\ei)}(1). \end{equation}
For each replication, define
\begin{equation}\label{eq_app_test_sim_AJ} \widehat{\mathcal A}^c_{(\ei)}=\int_0^1\widehat W^c_{\mathbf{x},(\ei)}(r)\otimes d\widehat W^c_{U,(\ei)}(r)+\widehat{\Omega}_{U\mathbf{x}}^+,\qquad \widehat J^c_{(\ei)}=\int_0^1\widehat W^c_{\mathbf{x},(\ei)}(r)\otimes\widehat W^c_{\mathbf{x},(\ei)}(r)\,dr, \end{equation}
and
\begin{equation}\label{eq_app_test_sim_QR} \widehat Q^c_{(\ei)}(r)=\int_0^r\widehat W^c_{\mathbf{x},(\ei)}(u)\,du,\qquad \widehat{\mathcal R}^c_{(\ei)}(r)=\widehat W^c_{U,(\ei)}(r)-\widehat{\mathcal A}^c_{(\ei)}\{\widehat J^c_{(\ei)}\}^{-1}\widehat Q^c_{(\ei)}(r). \end{equation}
The corresponding simulated residual partial-sum operator is
\begin{equation}\label{eq_app_test_sim_KR} \widehat{\mathcal K}^c_{R,(\ei)}=\int_0^1\widehat{\mathcal R}^c_{(\ei)}(r)\otimes\widehat{\mathcal R}^c_{(\ei)}(r)\,dr. \end{equation}
The simulated null statistics are
\begin{equation}\label{eq_app_test_null_draws} \widehat{\mathcal T}_{K,0,(\ei)}=\Lambda_{\max}\left(\widehat{\mathcal K}^c_{R,(\ei)}\right),\qquad \widehat{\mathcal T}_{V,0,(\ei)}=\frac{\langle\widehat{\mathcal K}^c_{R,(\ei)}\hat v_V,\hat v_V\rangle}{\langle\widehat{\mathcal V}\hat v_V,\hat v_V\rangle}. \end{equation}
The sample quantities $\widehat{\mathcal V}$ and $\hat v_V$ are held fixed across Monte Carlo replications. All integrals are evaluated numerically on a sufficiently fine grid over $[0,1]$, using left-endpoint sums for the stochastic integral. The limiting interpretation and validity of this centered construction are established in Section~\ref{sec_app_test_1}.

\item[Step 3: Computing critical values and $p$-values] \hfill
Repeat the simulation in Step~2 for $\ei=1,\ldots,R_{\mathrm{MC}}$. At significance level $\alpha$, let $\widehat q_{K,\alpha}$ and $\widehat q_{V,\alpha}$ be the empirical $(1-\alpha)$ quantiles of $\{\widehat{\mathcal T}_{K,0,(\ei)}\}_{\ei=1}^{R_{\mathrm{MC}}}$ and $\{\widehat{\mathcal T}_{V,0,(\ei)}\}_{\ei=1}^{R_{\mathrm{MC}}}$, respectively. The corresponding upper-tail Monte Carlo $p$-values are
\begin{equation}\label{eq_app_test_mc_pvalues} \widehat p_K=\frac{1+\sum_{\ei=1}^{R_{\mathrm{MC}}}\mathbf{1}\{\widehat{\mathcal T}_{K,0,(\ei)}\geq\widehat{\mathcal T}_K\}}{R_{\mathrm{MC}}+1},\qquad \widehat p_V=\frac{1+\sum_{\ei=1}^{R_{\mathrm{MC}}}\mathbf{1}\{\widehat{\mathcal T}_{V,0,(\ei)}\geq\widehat{\mathcal T}_V\}}{R_{\mathrm{MC}}+1}. \end{equation}
At level $\alpha$, the unnormalized and normalized tests reject between-cointegration when $\widehat{\mathcal T}_K>\widehat q_{K,\alpha}$ and $\widehat{\mathcal T}_V>\widehat q_{V,\alpha}$, respectively. In the empirical application, we use the Parzen kernel with $h=4$, $R_{\mathrm{MC}}=5{,}000$ replications, and a simulation grid with 499 subintervals.
\end{description}

\noindent The theoretical justification is provided later in Section~\ref{sec_app_test}, which derives the limiting residual process simulated in Step~2 and establishes the validity of the plug-in critical values and consistency of the tests. Section~\ref{sec_app_test_1} provides the centering argument for the intercept specification used here.

\section{Mathematical preliminaries}\label{Sec_prelim}
\subsection{Basic notation}\label{Sec_prelim1}
We collect notation and basic concepts for generic Hilbert spaces $\mathcal H_1$ and $\mathcal H_2$, each of which represents either the function space $\HY$ or the Euclidean space $\HX=\mathbb R^{\KX}$ in our application. We use $\langle\cdot,\cdot\rangle$ and $\|\cdot\|$ for the inner product and norm on all relevant spaces. In particular, $\langle v_1,v_2\rangle=\int_a^bv_1(s)v_2(s)\,ds$ on $\HY$ and $\langle v_1,v_2\rangle=v_1' v_2$ on $\HX$. The symbol $I$ denotes the identity operator on the relevant space.

Let $\mathcal L_{\mathcal H_1,\mathcal H_2}$ denote the space of bounded linear operators from $\mathcal H_1$ to $\mathcal H_2$, equipped with the operator norm $\|A\|_{\op}=\sup_{\|x\|\leq1}\|A(x)\|$. We write $\mathcal L_{\mathcal H}=\mathcal L_{\mathcal H,\mathcal H}$. For $\zeta_1\in\mathcal H_1$ and $\zeta_2\in\mathcal H_2$, the rank-one operator $\zeta_1\otimes\zeta_2\in\mathcal L_{\mathcal H_1,\mathcal H_2}$ is defined by $(\zeta_1\otimes\zeta_2)(x)=\langle\zeta_1,x\rangle\zeta_2$. For $A\in\mathcal L_{\mathcal H_1,\mathcal H_2}$, its range, kernel, and adjoint are denoted by $\ran A$, $\ker A$, and $A^*\in\mathcal L_{\mathcal H_2,\mathcal H_1}$, respectively. An operator $A\in\mathcal L_{\mathcal H}$ is self-adjoint if $A=A^*$ and nonnegative if $\langle A\zeta,\zeta\rangle\geq0$ for every $\zeta\in\mathcal H$. An operator $A\in\mathcal L_{\mathcal H_1,\mathcal H_2}$ is compact if it maps bounded subsets of $\mathcal H_1$ into relatively compact subsets of $\mathcal H_2$. Equivalently, a compact operator between separable Hilbert spaces admits a singular-value representation $A=\sum_{j=1}^{r}a_je_j\otimes f_j$, where $r\in\mathbb N\cup\{\infty\}$, $\{e_j\}\subset\mathcal H_1$ and $\{f_j\}\subset\mathcal H_2$ are orthonormal systems, and $a_j>0$ are the singular values. If $r=\infty$, then $a_j\to0$, and the series converges in operator norm.


Let $X\in\mathcal H_1$ and $Y\in\mathcal H_2$ be random elements with finite second moments. Their expectations are characterized by $\langle\mathbb E[X],\zeta\rangle=\mathbb E[\langle X,\zeta\rangle]$ for every $\zeta\in\mathcal H_1$, and analogously for $Y$. Their covariance and cross-covariance operators are
$C_X=\mathbb E[(X-\mathbb E[X])\otimes(X-\mathbb E[X])]\in\mathcal L_{\mathcal H_1}$ and $C_{YX}=\mathbb E[(X-\mathbb E[X])\otimes(Y-\mathbb E[Y])]\in\mathcal L_{\mathcal H_1,\mathcal H_2}$, respectively. Under this convention, $C_{XY}=C_{YX}^*$. The covariance operator $C_X$ is self-adjoint, nonnegative, and trace class, with $\operatorname{tr}(C_X)=\mathbb E\|X-\mathbb E[X]\|^2$.


We employ several additional notational conventions throughout this supplement. We use the same symbol $\to_p$ for convergence in probability in both finite-dimensional Euclidean spaces and infinite-dimensional Hilbert spaces, with convergence understood under the norm of the relevant space. As will be detailed later, $X_T\Rightarrow X$ denotes weak convergence, or convergence in distribution, in the relevant space. Following standard conventions in the functional data literature \citep[see, e.g.,][]{seo2020functional}, stochastic convergence and bounds for bounded linear operators are evaluated directly in terms of the operator norm. Specifically, for a sequence of operators $A_T$, $A_T\to_p A$ means $\|A_T-A\|_{\op}\to_p0$. Similarly, for a deterministic sequence $a_T>0$, we write $A_T-A=O_p(a_T)$ or $A_T-A=o_p(a_T)$ to mean $\|A_T-A\|_{\op}=O_p(a_T)$ or $\|A_T-A\|_{\op}=o_p(a_T)$, respectively.

\subsection{Cointegrated linear processes in Hilbert space}\label{AP_FTS}
To establish the asymptotic properties of the system in \eqref{eqmodel1}, we first characterize the stochastic trends of $Y_t\in\HY$ and $\mathbf{x}_t\in\HX$ within a unified framework. Let $Z_t$ be a generic time series taking values in a Hilbert space $\mathcal H$. Following the literature on functional nonstationarity \citep[e.g.,][]{Phillips1992,Chang2016152,seo_2022}, suppose that
\begin{equation*} \Delta Z_t=\sum_{j=0}^{\infty}\psi_j\varepsilon_{t-j}, \end{equation*}
where $\psi_j\in\mathcal L_{\mathcal H}$ and $\{\varepsilon_t\}$ is an i.i.d. innovation sequence satisfying $\mathbb E[\varepsilon_t]=0$ and $\mathbb E\|\varepsilon_t\|^4<\infty$. Under the summability condition $\sum_{j=0}^{\infty}j\|\psi_j\|_{\op}<\infty$, define $\psi(1)=\sum_{j=0}^{\infty}\psi_j$ and $\widetilde\psi_j=-\sum_{k=j+1}^{\infty}\psi_k$. The Phillips--Solo decomposition then gives $\Delta Z_t=\psi(1)\varepsilon_t+\eta_t-\eta_{t-1}$, where $\eta_t=\sum_{j=0}^{\infty}\widetilde\psi_j\varepsilon_{t-j}$ is a mean-zero stationary process. Consequently, $Z_t=Z_0+\psi(1)\sum_{s=1}^t\varepsilon_s+\eta_t-\eta_0$.

To accommodate the deterministic common shape or functional level typically present in empirical functional time series, let $Z_0=\mu_Z+\eta_0$ for some deterministic $\mu_Z\in\mathcal H$. The preceding decomposition then becomes
\begin{equation*} Z_t=\mu_Z+\psi(1)\sum_{s=1}^t\varepsilon_s+\eta_t. \end{equation*}
Since $\mathbb E[\varepsilon_t]=0$ and $\mathbb E[\eta_t]=0$, we have $\mathbb E[Z_t]=\mu_Z$; hence, $\mu_Z=0$ gives the zero-mean normalization.

The operator $\psi(1)$ induces the orthogonal decomposition
\begin{equation*} \mathcal H=\mathcal H^N\oplus\mathcal H^S,\qquad \mathcal H^N=\overline{\ran\psi(1)},\qquad \mathcal H^S=(\mathcal H^N)^\perp=\ker\psi(1)^*, \end{equation*}
where $\overline{\ran\psi(1)}$ denotes the closure of $\ran\psi(1)$. Let $\PP^N$ and $\PP^S=I-\PP^N$ denote the orthogonal projections onto $\mathcal H^N$ and $\mathcal H^S$, respectively. Then $\PP^N(Z_t-\mu_Z)$ contains all stochastic trends, whereas $\PP^S(Z_t-\mu_Z)=\PP^S\eta_t$ is stationary. If $\psi(1)$ has finite rank, then $\mathcal H^N$ is finite-dimensional, and this representation yields the decomposition in \eqref{eqtimedecom} when $Z_t=Y_t$. Let $C_\varepsilon$ denote the covariance operator of $\varepsilon_t$. If $\langle C_\varepsilon h,h\rangle>0$ for every nonzero $h\in\ran\{\psi(1)^*\}$, then the scalar coordinate $\langle Z_t,v\rangle$ is stationary if and only if $v\in\mathcal H^S$ \citep{BSS2017}.

\subsection{$L^p$-$m$-approximability of Hilbert-valued time series}\label{sec_app_math_s1b}

Let $\mathcal H$ denote a Hilbert space equipped with the norm $\|\cdot\|$. For a positive integer $p$, we use the following centered and slightly strengthened version of $L^p$-$m$-approximability. A random sequence $\{\xi_t\}$ in $\mathcal H$ is said to be $L^p$-$m$-approximable if it satisfies the following conditions:
\begin{enumerate}[label=(\roman*)]
\item The sequence admits the Bernoulli-shift representation $\xi_t=\mathfrak F(\epsilon_t,\epsilon_{t-1},\ldots)$ for some measurable mapping $\mathfrak F$ and an i.i.d. innovation sequence $\{\epsilon_t\}$.
\item For each $t$ and $m$, define the coupled process $\xi_{t,m}=\mathfrak F(\epsilon_t,\ldots,\epsilon_{t-m+1},\epsilon_{t,t-m}^{(m)},\epsilon_{t,t-m-1}^{(m)},\ldots)$, where the replacement innovations are independent copies of the original innovations, independent across $t$ and independent of $\{\epsilon_t\}$. Let $\widetilde\xi_{t,m}=\xi_t-\xi_{t,m}$ and $a_m=\{\mathbb E\|\widetilde\xi_{t,m}\|_{\mathcal H}^{p+\delta}\}^{1/(p+\delta)}$. Then $\mathbb E[\xi_t]=0$ and, for some $\delta\in(0,1)$,
\begin{equation*} \mathbb E\|\xi_t\|_{\mathcal H}^{p+\delta}<\infty,\quad ma_m\to0,\quad \sum_{m=1}^{\infty}a_m<\infty. \end{equation*}
\end{enumerate}
This definition strengthens the usual notion of $L^p$-$m$-approximability by imposing the approximation conditions in $L^{p+\delta}$ for some $\delta\in(0,1)$ and by requiring the additional decay condition $ma_m\to0$. Such strengthened $m$-approximation conditions are useful in deriving convergence rates for kernel long-run covariance estimators; see, for example, \citet{berkes2013weak}, \citet{BERKES2016150}, and \citet{horvath2013estimation}.

\section{Testing between-cointegration}\label{sec_app_test}
\noindent The structural validity of model \eqref{eqmodel1} is grounded in the concept of between-cointegration between the processes \(Y_t\) and \(\mathbf{x}_t\). Unlike within-cointegration, which focuses on the internal dynamics of a single process, between-cointegration requires that the nonstationary stochastic trends in \(Y_t\) are entirely spanned by a linear transformation of the nonstationary components in \(\mathbf{x}_t\). In model \eqref{eqmodel1}, this requires that the disturbance term \(U_t\) be stationary in \(\HY\). In our climate application, between-cointegration implies that persistent, long-run shifts in the distribution of temperature anomalies are accounted for by radiative forcings, with only transitory fluctuations remaining as residuals. Verification of this relationship is a crucial prerequisite for our subsequent estimation.

\noindent To keep the baseline asymptotic arguments transparent, we first develop the testing, estimation, and inference procedures for the model without an intercept. Setting $\beta_0=0$ in \eqref{eqmodel1} gives
\begin{equation}\label{eqmodel1add}
Y_t=B(\mathbf{x}_t)+U_t.
\end{equation}
For this baseline specification, we impose the zero-mean normalization
\begin{equation}\label{eqzeromean}
\mathbb{E}[Y_t]=\mu_Y=0
\qquad\text{and}\qquad
\mathbb{E}[\mathbf{x}_t]=\mu_X=0,
\end{equation}
so that the deterministic levels in \eqref{eqtimedecom} and \eqref{eqtimedecomx} are zero. The corresponding results for the model with an intercept are obtained by applying the procedures to the demeaned variables $Y_t-\bar Y_T$ and $\mathbf{x}_t-\bar{\mathbf{x}}_T$. The extension of the between-cointegration test is provided in Section~\ref{sec_app_test_1}.

\subsection{Test statistics}

\noindent Under the zero-intercept specification \eqref{eqmodel1add} and the zero-mean normalization \eqref{eqzeromean}, we define the sample cross-covariance and covariance operators as
\begin{equation}\label{eqcov1}
\widehat C_{Y\mathbf{x}}=\frac{1}{T}\sum_{t=1}^T\mathbf{x}_t\otimes Y_t,\qquad
\widehat C_{\mathbf{x}\mathbf{x}}=\frac{1}{T}\sum_{t=1}^T\mathbf{x}_t\otimes\mathbf{x}_t.
\end{equation}
Since $\widehat C_{\mathbf{x}\mathbf{x}}$ is a nonnegative self-adjoint compact operator on the finite-dimensional space $\HX$, it admits the eigendecomposition
$\widehat C_{\mathbf{x}\mathbf{x}}=\sum_{j=1}^{\KX}\widehat\lambda_{\mathbf{x},j}\widehat v_{\mathbf{x},j}\otimes\widehat v_{\mathbf{x},j}$, where $\widehat\lambda_{\mathbf{x},1}\geq\cdots\geq\widehat\lambda_{\mathbf{x},\KX}\geq0$ \citep[see, e.g.,][]{Bosq2000}. Under the full nonstationary-rank condition imposed below, $\widehat C_{\mathbf{x}\mathbf{x}}$ is invertible with probability approaching one. We define the preliminary least-squares projection map $\Bpre:\HX\rightarrow\HY$ and the corresponding residuals by
\begin{equation}\label{eqls0}
\Bpre=\widehat C_{Y\mathbf{x}}\widehat C_{\mathbf{x}\mathbf{x}}^{-1},\qquad
\Upre_t=Y_t-\Bpre(\mathbf{x}_t).
\end{equation}
Equivalently, the fitted component can be computed as
\begin{equation}
\Bpre(\mathbf{x}_t)=\frac{1}{T}\sum_{s=1}^T\sum_{j=1}^{\KX}
\widehat\lambda_{\mathbf{x},j}^{-1}
\langle\mathbf{x}_t,\widehat v_{\mathbf{x},j}\rangle
\langle\mathbf{x}_s,\widehat v_{\mathbf{x},j}\rangle Y_s.
\end{equation}

Using the residuals $\Upre_t$, we construct the residual partial-sum operator and the residual covariance operator
\begin{equation}
\widehat{\mathcal K}=\frac{1}{T}\sum_{t=1}^T\left(\sum_{s=1}^t\Upre_s\right)\otimes\left(\sum_{s=1}^t\Upre_s\right),
\qquad
\widehat{\mathcal V}
=\frac{1}{T}\sum_{t=1}^T\Upre_t\otimes\Upre_t.
\end{equation}
The operator $\widehat{\mathcal K}$ measures accumulated residual persistence, whereas $\widehat{\mathcal V}$ measures the contemporaneous scale of the fitted residuals. Under between-cointegration, the fitted residuals have the partial-sum order associated with a stationary process, although estimation of $B$ affects their limiting process. Under the nondegenerate alternative considered below, the residuals retain a stochastic trend, causing the partial-sum operator to grow at a faster rate.

We first consider the unnormalized diagnostic
\begin{equation}\label{eqteststat_K} \widehat{\mathcal T}_{K}=T^{-1}\Lambda_{\max}(\widehat{\mathcal K})=\Lambda_{\max}(T^{-1}\widehat{\mathcal K}), \end{equation}
where $\Lambda_{\max}(A)$ denotes the largest eigenvalue of $A$. The statistic $\widehat{\mathcal T}_{K}$ depends on the scale of $\Upre_t$. For a covariance-normalized diagnostic, we use the residual covariance operator to select its dominant direction and normalize its scale. Let $\hat v_V$ be a unit eigenvector associated with the largest eigenvalue of $\widehat{\mathcal V}$. We define
\begin{equation}\label{eqteststat_V} \widehat{\mathcal T}_{V}=T^{-1}\frac{\langle\widehat{\mathcal K}\hat v_V,\hat v_V\rangle}{\langle\widehat{\mathcal V}\hat v_V,\hat v_V\rangle}. \end{equation}
This statistic evaluates residual persistence along the covariance-dominant residual direction and normalizes it by the contemporaneous residual variance in that direction. The statistic $\widehat{\mathcal T}_{V}$ is invariant to a common nonzero rescaling of the residual process: replacing $\Upre_t$ with $c\Upre_t$ for any $c\neq0$ leaves $\widehat{\mathcal T}_{V}$ unchanged. We report both statistics in the empirical analysis.

Throughout the remainder of this supplement, $X_T\Rightarrow X$ denotes weak convergence in the relevant space or product space. For stochastic-process-valued random elements, weak convergence is taken in the corresponding Skorohod space on $[0,1]$; for operator-valued random elements, it is taken under the topology induced by the operator norm. We use the following additional notation:
\begin{equation*} \mathcal A_T=\frac{1}{T}\sum_{t=1}^T\mathbf{x}_t\otimes U_t,\qquad J_T=\frac{1}{T^2}\sum_{t=1}^T\mathbf{x}_t\otimes\mathbf{x}_t,\qquad Q_T(r)=T^{-3/2}\sum_{s=1}^{\lfloor Tr\rfloor}\mathbf{x}_s. \end{equation*}
Here $\mathcal A_T\in\mathcal L_{\HX,\HY}$, $J_T\in\mathcal L_{\HX}$, and $Q_T(r)\in\HX$. Under between-cointegration, let $(W_{\mathbf{x}},W_U)$ denote the joint Brownian limit associated with $(\Delta\mathbf{x}_t,U_t)$ and define
\begin{equation*} \mathcal A=\int_0^1W_{\mathbf{x}}(r)\otimes dW_U(r)+\Omega_{U\mathbf{x}}^{+},\qquad J=\int_0^1W_{\mathbf{x}}(r)\otimes W_{\mathbf{x}}(r)\,dr,\qquad Q(r)=\int_0^rW_{\mathbf{x}}(u)\,du. \end{equation*}
Here $\Omega_{U\mathbf{x}}^{+}$ is the one-sided long-run covariance operator from $\HX$ to $\HY$ introduced in Section~\ref{sec_compest}; its kernel estimator $\widehat{\Omega}_{U\mathbf{x}}^{+}$ is given in \eqref{eqsample2}. Define the limiting residual process and its associated operator by
\begin{equation*} \mathcal R(r)=W_U(r)-\mathcal AJ^{-1}Q(r),\qquad \mathcal K_R=\int_0^1\mathcal R(r)\otimes\mathcal R(r)\,dr. \end{equation*}
For the covariance-normalized null limit, also define
\begin{equation*} C_{UU}=\mathbb E[U_t\otimes U_t], \end{equation*}
and let $v_U$ be a unit eigenvector associated with the largest eigenvalue of $C_{UU}$, under an arbitrary sign convention.

Under the alternative, let $(W_{\mathbf{x}},W_V)$ denote the joint Brownian limit associated with $(\Delta\mathbf{x}_t,V_t)$, where $V_t=\Delta U_t$, and define
\begin{equation*} W_{V|\mathbf{x}}(r)=W_V(r)-\left(\int_0^1W_{\mathbf{x}}(u)\otimes W_V(u)\,du\right)J^{-1}W_{\mathbf{x}}(r),\qquad W_{V|\mathbf{x}}^{(2)}(r)=\int_0^rW_{V|\mathbf{x}}(u)\,du. \end{equation*}
Finally, let
\begin{equation*} M_V=\int_0^1W_{V|\mathbf{x}}(r)\otimes W_{V|\mathbf{x}}(r)\,dr,\qquad M_K=\int_0^1W_{V|\mathbf{x}}^{(2)}(r)\otimes W_{V|\mathbf{x}}^{(2)}(r)\,dr. \end{equation*}

We impose the following conditions for the proposed tests.

\begin{assumption}\label{assum_test}
The following conditions hold:
\begin{enumerate}[label=(\roman*)]
\item\label{assum_test1} The model \eqref{eqmodel1add} holds, and the predictor satisfies $\mathbf{x}_t=\mathbf{x}_0+\sum_{s=1}^t\Delta\mathbf{x}_s$ with $T^{-1/2}\mathbf{x}_0=o_p(1)$. The increment process $\Delta\mathbf{x}_t$ is a mean-zero stationary $L^4$-$m$-approximable sequence satisfying
\begin{equation*} T^{-1/2}\sum_{s=1}^{\lfloor Tr\rfloor}\Delta\mathbf{x}_s\Rightarrow W_{\mathbf{x}}(r), \end{equation*}
where $W_{\mathbf{x}}$ is a $\KX$-dimensional Brownian motion with positive definite long-run covariance $\Omega_{\mathbf{x}\mathbf{x}}$. Consequently, $J$ is almost surely invertible.

\item\label{assum_test2} Under $H_0$ of between-cointegration, $U_t$ is a mean-zero stationary $\HY$-valued $L^4$-$m$-approximable sequence. Moreover, the joint sequence $Z_t^0=(\Delta\mathbf{x}_t,U_t)$ in $\HX\times\HY$ is mean-zero stationary and $L^4$-$m$-approximable, and satisfies
\begin{equation*} T^{-1/2}\sum_{s=1}^{\lfloor Tr\rfloor}Z_s^0=T^{-1/2}\sum_{s=1}^{\lfloor Tr\rfloor}\begin{pmatrix}\Delta\mathbf{x}_s\\ U_s\end{pmatrix}\Rightarrow\begin{pmatrix}W_{\mathbf{x}}(r)\\ W_U(r)\end{pmatrix}, \end{equation*}
where $W_U$ is an $\HY$-valued Brownian motion jointly distributed with $W_{\mathbf{x}}$. In addition,
\begin{equation*} \left(T^{-1/2}\sum_{s=1}^{\lfloor T\cdot\rfloor}U_s,\ \mathcal A_T,\ J_T,\ Q_T(\cdot)\right)\Rightarrow\left(W_U(\cdot),\ \mathcal A,\ J,\ Q(\cdot)\right). \end{equation*}
For the covariance-normalized statistic $\widehat{\mathcal T}_V$, the largest eigenvalue of $C_{UU}$ is assumed to be strictly positive and simple.

\item\label{assum_test3} Under $H_1$ of failure of between-cointegration, $U_t$ is nonstationary with mean-zero stationary first difference $V_t=\Delta U_t$. Specifically, $U_t=U_0+\sum_{s=1}^tV_s$, $T^{-1/2}U_0=o_p(1)$, and the joint sequence $Z_t^1=(\Delta\mathbf{x}_t,V_t)$ in $\HX\times\HY$ is mean-zero stationary and $L^4$-$m$-approximable, and satisfies
\begin{equation*} T^{-1/2}\sum_{s=1}^{\lfloor Tr\rfloor}Z_s^1=T^{-1/2}\sum_{s=1}^{\lfloor Tr\rfloor}\begin{pmatrix}\Delta\mathbf{x}_s\\ V_s\end{pmatrix}\Rightarrow\begin{pmatrix}W_{\mathbf{x}}(r)\\ W_V(r)\end{pmatrix}, \end{equation*}
where $W_V$ is an $\HY$-valued Brownian motion jointly distributed with $W_{\mathbf{x}}$. We restrict attention to the nondegenerate alternative satisfying $\mathbb P\{W_{V|\mathbf{x}}(\cdot)\equiv0\}=0$. For the covariance-normalized statistic $\widehat{\mathcal T}_V$, we additionally assume that the largest eigenvalue of $M_V$ is simple almost surely.
\end{enumerate}
\end{assumption}

Assumption~\ref{assum_test} collects the conditions needed for the residual-based diagnostics. Assumption~\ref{assum_test}\ref{assum_test1} imposes full nonstationary rank on the forcing vector and thereby rules out within-cointegration in $\mathbf{x}_t$. Assumptions~\ref{assum_test}\ref{assum_test2} and~\ref{assum_test}\ref{assum_test3} impose joint weak-dependence and functional central limit conditions under the null and the alternative, respectively. The additional joint convergence in Assumption~\ref{assum_test}\ref{assum_test2} is stated explicitly because the fitted residuals are constructed after projecting $Y_t$ on the integrated forcing vector; the resulting estimation effect contributes the adjustment term $\mathcal AJ^{-1}Q(r)$ to the null limit. The eigenvalue-separation conditions for $C_{UU}$ and $M_V$ ensure that the covariance-selected directions used in $\widehat{\mathcal T}_V$ have well-defined asymptotic limits. Finally, the nondegenerate alternative in Assumption~\ref{assum_test}\ref{assum_test3} implies that both $M_V$ and $M_K$ have strictly positive largest eigenvalues almost surely.

\begin{theorem}\label{thm1}
Suppose that Assumption~\ref{assum_test}\ref{assum_test1} holds. Under $H_0$, suppose also that Assumption~\ref{assum_test}\ref{assum_test2} holds. Then
\begin{equation*} \widehat{\mathcal T}_{K}\Rightarrow\mathcal T_{K,0}:=\Lambda_{\max}(\mathcal K_R),\qquad \widehat{\mathcal T}_{V}\Rightarrow\mathcal T_{V,0}:=\frac{\langle\mathcal K_Rv_U,v_U\rangle}{\langle C_{UU}v_U,v_U\rangle}. \end{equation*}
Under $H_1$, suppose instead that Assumption~\ref{assum_test}\ref{assum_test3} holds, and let $v_{V,1}$ be a unit eigenvector associated with $\Lambda_{\max}(M_V)$. Then
\begin{equation*} T^{-2}\widehat{\mathcal T}_{K}\Rightarrow\mathcal T_{K,1}:=\Lambda_{\max}(M_K),\qquad T^{-1}\widehat{\mathcal T}_{V}\Rightarrow\mathcal T_{V,1}:=\frac{\langle M_Kv_{V,1},v_{V,1}\rangle}{\langle M_Vv_{V,1},v_{V,1}\rangle}, \end{equation*}
where $\mathcal T_{K,1}>0$ and $\mathcal T_{V,1}>0$ almost surely. Consequently, $\widehat{\mathcal T}_{K}\to_p\infty$ and $\widehat{\mathcal T}_{V}\to_p\infty$ under $H_1$.
\end{theorem}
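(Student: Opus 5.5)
Under $H_0$, I would start from the exact residual identity $\Upre_t=U_t-(\Bpre-B)(\mathbf{x}_t)$. Since $\Bpre-B=\widehat C_{U\mathbf{x}}\widehat C_{\mathbf{x}\mathbf{x}}^{-1}=\mathcal A_T J_T^{-1}T^{-1}$, this gives
\[
T^{-1/2}\sum_{s=1}^{\lfloor Tr\rfloor}\Upre_s=T^{-1/2}\sum_{s=1}^{\lfloor Tr\rfloor}U_s-\mathcal A_TJ_T^{-1}Q_T(r).
\]
By the joint convergence in Assumption~\ref{assum_test}\ref{assum_test2}, together with a.s.\ invertibility of $J$ (Assumption~\ref{assum_test}\ref{assum_test1}) and the continuous mapping theorem, the partial-sum process converges weakly in the Skorohod space to $\mathcal R(\cdot)$. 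Here the inverse map is continuous at invertible operators.

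Next, $T^{-1}\widehat{\mathcal K}=\int_0^1 R_T(r)\otimes R_T(r)\,dr$, where $R_T$ is the step process above. The map $f\mapsto\int_0^1 f\otimes f$ is continuous into the operator-norm topology, so $T^{-1}\widehat{\mathcal K}\Rightarrow\mathcal K_R$. Because $\Lambda_{\max}$ is Lipschitz in operator norm for self-adjoint nonnegative compact operators, the limit for $\widehat{\mathcal T}_K$ follows.

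For $\widehat{\mathcal T}_V$, I would show $\widehat{\mathcal V}\to_p C_{UU}$ in operator norm. Expanding,
\[
\widehat{\mathcal V}=T^{-1}\sum U_t\otimes U_t+O_p(T^{-1}),
\]
where the cross terms involve $(\Bpre-B)\mathbf{x}_t=O_p(T^{-1/2})$ uniformly, so their averages are $O_p(T^{-1/2})$ at worst. The first term converges by the ergodic/LLN property of $L^4$-$m$-approximable sequences. Simplicity of the top eigenvalue then gives $\hat v_V\to_p v_U$ up to sign, via the standard eigenprojection perturbation bound ($\|\hat v_V-\mathrm{sgn}\,v_U\|\le C\|\widehat{\mathcal V}-C_{UU}\|_{\op}$). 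The ratio is invariant to the sign, so joint convergence of $(T^{-1}\widehat{\mathcal K},\widehat{\mathcal V},\hat v_V)$ and continuous mapping yield $\mathcal T_{V,0}$. The denominator limit is positive by assumption.

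Under $H_1$, $U_t$ is $I(1)$, so the normalization changes. Using the $H_1$ functional CLT, $T^{-1/2}U_{\lfloor Tr\rfloor}\Rightarrow W_V(r)$ jointly with $T^{-1/2}\mathbf{x}_{\lfloor Tr\rfloor}\Rightarrow W_{\mathbf{x}}$. The estimator is no longer superconsistent: instead
\[
\Bpre-B=\Bigl(T^{-2}\sum \mathbf{x}_t\otimes U_t\Bigr)J_T^{-1}\Rightarrow\Bigl(\int W_{\mathbf{x}}\otimes W_V\Bigr)J^{-1},
\]
which is $O_p(1)$. Hence $T^{-1/2}\Upre_{\lfloor Tr\rfloor}\Rightarrow W_{V|\mathbf{x}}(r)$, and integrating gives $T^{-3/2}\sum_{s\le Tr}\Upre_s\Rightarrow W^{(2)}_{V|\mathbf{x}}(r)$. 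Consequently $T^{-3}\widehat{\mathcal K}\Rightarrow M_K$ and $T^{-1}\widehat{\mathcal V}\Rightarrow M_V$, and these limits hold jointly.

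This yields $T^{-2}\widehat{\mathcal T}_K\Rightarrow\Lambda_{\max}(M_K)$. For $\widehat{\mathcal T}_V$, the numerator is $T^{-1}\cdot T^{3}$ times a quantity converging to $\langle M_K\hat v,\hat v\rangle$, and the denominator is $T$ times a quantity converging to $\langle M_V\hat v,\hat v\rangle$. Their ratio is therefore of order $T$, giving $T^{-1}\widehat{\mathcal T}_V\Rightarrow\mathcal T_{V,1}$.

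The \textbf{main obstacle} is convergence of the eigenvector $\hat v_V$ when the limit $M_V$ is random. For this I would use the Skorohod representation, so that $T^{-1}\widehat{\mathcal V}\to M_V$ almost surely. The a.s.\ simplicity of $\Lambda_{\max}(M_V)$ then makes the top eigenprojection a continuous function at the limit, and continuous mapping applies.

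Positivity of the limits comes from nondegeneracy. Since $W_{V|\mathbf{x}}\not\equiv0$ a.s.\ and paths are continuous, $M_V\neq0$, and $W^{(2)}_{V|\mathbf{x}}\not\equiv0$, so $M_K\neq0$. For $\mathcal T_{V,1}$ one also needs $\langle M_K v_{V,1},v_{V,1}\rangle>0$. This follows because $\langle W_{V|\mathbf{x}},v_{V,1}\rangle\not\equiv0$ (its $L^2$ norm equals $\Lambda_{\max}(M_V)>0$), so its integral is not identically zero. Divergence of both statistics then follows from the rates $T^2$ and $T$.
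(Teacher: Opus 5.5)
Your proposal is correct and follows the paper's proof essentially step for step. Under $H_0$ it uses the residual identity $\Upre_t=U_t-T^{-1}\mathcal A_TJ_T^{-1}\mathbf{x}_t$, continuous mapping for the partial sums, $T^{-1}\widehat{\mathcal K}$ and $\Lambda_{\max}$, and consistency of $\widehat{\mathcal V}$ with eigenvector perturbation; under $H_1$ it uses the rescaled joint limits $(T^{-1}\widehat{\mathcal V},T^{-3}\widehat{\mathcal K})\Rightarrow(M_V,M_K)$, with a.s.\ simplicity giving convergence of the top eigenprojection and nondegeneracy giving positivity. The only slips are cosmetic and do not affect the conclusion: $T^{-1}\widehat{\mathcal K}$ equals $\int_0^1R_T(r)\otimes R_T(r)\,dr$ only up to an $O_p(T^{-1})$ boundary term, and your cross-term argument gives $O_p(T^{-1/2})$ rather than the stated $O_p(T^{-1})$ (which does hold, since $\widehat{\mathcal V}=T^{-1}\sum_tU_t\otimes U_t-T^{-1}\mathcal A_TJ_T^{-1}\mathcal A_T^{*}$).
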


\begin{proof}[Proof of Theorem~\ref{thm1}]
Under $H_0$, since $Y_t=B(\mathbf{x}_t)+U_t$, we have $\widehat C_{Y\mathbf{x}}=B\widehat C_{\mathbf{x}\mathbf{x}}+\mathcal A_T$, $\Bpre=B+\mathcal A_T\widehat C_{\mathbf{x}\mathbf{x}}^{-1}$, and $\Upre_t=U_t-\mathcal A_T\widehat C_{\mathbf{x}\mathbf{x}}^{-1}\mathbf{x}_t$. Recall that $J_T=T^{-2}\sum_{t=1}^T\mathbf{x}_t\otimes\mathbf{x}_t$. Since $\widehat C_{\mathbf{x}\mathbf{x}}=TJ_T$, we have $\widehat C_{\mathbf{x}\mathbf{x}}^{-1}=T^{-1}J_T^{-1}$ with probability approaching one. Hence, for $r\in[0,1]$,
\begin{equation*} T^{-1/2}\sum_{s=1}^{\lfloor Tr\rfloor}\Upre_s=T^{-1/2}\sum_{s=1}^{\lfloor Tr\rfloor}U_s-\mathcal A_TJ_T^{-1}Q_T(r)\Rightarrow\mathcal R(r). \end{equation*}
By continuous mapping and a Riemann-sum approximation, $T^{-1}\widehat{\mathcal K}\Rightarrow\mathcal K_R$. Continuity of the largest-eigenvalue functional on the space of compact self-adjoint operators then gives $\widehat{\mathcal T}_{K}=\Lambda_{\max}(T^{-1}\widehat{\mathcal K})\Rightarrow\Lambda_{\max}(\mathcal K_R)$.

Moreover, $\Upre_t-U_t=-\mathcal A_TT^{-1}J_T^{-1}\mathbf{x}_t$. Under Assumption~\ref{assum_test}, $\mathcal A_T=O_p(1)$, $J_T^{-1}=O_p(1)$, and $\max_{1\leq t\leq T}\|\mathbf{x}_t\|=O_p(T^{1/2})$, and hence $\max_{1\leq t\leq T}\|\Upre_t-U_t\|=O_p(T^{-1/2})$. Together with the operator law of large numbers for $U_t\otimes U_t$, this yields $\widehat{\mathcal V}\to_p C_{UU}$. By the eigenvalue-separation condition for $C_{UU}$, $\widehat v_V\to_p v_U$ up to sign; see Lemma~4.3 of \citet{Bosq2000}. Since the relevant quadratic forms are invariant to this sign, $T^{-1}\langle\widehat{\mathcal K}\widehat v_V,\widehat v_V\rangle\Rightarrow\langle\mathcal K_Rv_U,v_U\rangle$ and $\langle\widehat{\mathcal V}\widehat v_V,\widehat v_V\rangle\to_p\langle C_{UU}v_U,v_U\rangle$. This establishes the two null limits.

Under $H_1$, standard continuous-mapping arguments applied to Assumption~\ref{assum_test}\ref{assum_test3} give $T^{-1}\mathcal A_T\Rightarrow\int_0^1W_{\mathbf{x}}(r)\otimes W_V(r)\,dr$. Since $\Upre_t=U_t-(T^{-1}\mathcal A_T)J_T^{-1}\mathbf{x}_t$, it follows that
\begin{equation*} T^{-1/2}\Upre_{\lfloor T\cdot\rfloor}\Rightarrow W_{V|\mathbf{x}}(\cdot),\qquad \left(T^{-1}\widehat{\mathcal V},T^{-3}\widehat{\mathcal K}\right)\Rightarrow(M_V,M_K). \end{equation*}
Consequently, $T^{-2}\widehat{\mathcal T}_{K}=\Lambda_{\max}(T^{-3}\widehat{\mathcal K})\Rightarrow\Lambda_{\max}(M_K)$. The nondegeneracy condition implies that $W_{V|\mathbf{x}}$ and $W_{V|\mathbf{x}}^{(2)}$ are not identically zero almost surely, so $\Lambda_{\max}(M_V)>0$ and $\Lambda_{\max}(M_K)>0$ almost surely.

Because the largest eigenvalue of $M_V$ is simple almost surely, the rank-one eigenprojection $\widehat v_V\otimes\widehat v_V$ converges weakly, jointly with the scaled operators above, to $v_{V,1}\otimes v_{V,1}$. Moreover, $\langle M_Vv_{V,1},v_{V,1}\rangle=\Lambda_{\max}(M_V)>0$ almost surely. The scalar process $\langle W_{V|\mathbf{x}}(\cdot),v_{V,1}\rangle$ is therefore not identically zero, which implies $\langle M_Kv_{V,1},v_{V,1}\rangle>0$ almost surely. It follows that $T^{-1}\widehat{\mathcal T}_{V}\Rightarrow\mathcal T_{V,1}>0$ almost surely. The two positive scaled limits imply $\widehat{\mathcal T}_{V}\to_p\infty$ and $\widehat{\mathcal T}_{K}\to_p\infty$. This completes the proof.
\end{proof}

Theorem~\ref{thm1} shows that both diagnostics are right-tailed. Their limiting null distributions are nonpivotal since the residual partial-sum limit $\mathcal R$ contains the estimation effect arising from the projection of $Y_t$ on the integrated forcing vector. We therefore approximate these null distributions by plug-in Monte Carlo simulation, as described in the next section.


\subsection{Implementation}\label{sec_app_test_implement}

We now formalize the plug-in Monte Carlo approximation to the null distributions in Theorem~\ref{thm1} for the zero-intercept specification. The corresponding centered construction for the model with an intercept is described in Section~\ref{sec_app_test_implementation} and justified in Section~\ref{sec_app_test_1} by an extension of the arguments below. The procedure uses the kernel estimators in \eqref{eqsample1}--\eqref{eqsample2}. The estimator $\widehat{\Omega}_{\mathbf{x}\mathbf{x}}$ is unchanged, whereas $\widehat{\Omega}_{U\mathbf{x}}$ and $\widehat{\Omega}_{U\mathbf{x}}^+$ are computed using the zero-intercept residuals $\Upre_t=Y_t-\Bpre(\mathbf{x}_t)$ in place of the demeaned residuals used there. Using the same residuals, we also define $\widehat{\Omega}_{UU}=\sum_{|j|\leq h}\mathrm{k}(j/h)\widehat{\Gamma}_{UU}^{(j)}$, where $\widehat{\Gamma}_{UU}^{(j)}$ is the lag-$j$ sample covariance operator defined in Section~\ref{sec_app_test_implementation}. Since the residuals used in each construction are clear from the specification under consideration, we use the same notation for the residual-based long-run covariance estimators in both cases.

\begin{assumption}\label{assum_test_kernel}
The bandwidth $h$ satisfies $h\to\infty$ and $h/\sqrt T\to0$. The kernel $\mathrm{k}:\mathbb R\to[-1,1]$ is an even, nonnegative-definite, twice continuously differentiable function such that $\mathrm{k}(x)=0$ for $|x|\geq1$, $\mathrm{k}(0)=1$, $\mathrm{k}'(0)=0$, $\mathrm{k}''(0)\neq0$, and $\lim_{|x|\uparrow1}\mathrm{k}(x)/(1-|x|)^2$ is finite.
\end{assumption}

Assumption~\ref{assum_test_kernel} provides the smoothness and bandwidth conditions required for consistent estimation of the long-run and one-sided covariance operators. The condition $h/\sqrt T\to0$ also ensures that replacing $U_t$ with $\Upre_t$ is asymptotically negligible under $H_0$. When all blocks are computed from $(\Delta\mathbf{x}_t,\Upre_t)$ using the same kernel and bandwidth, the nonnegative-definiteness of $\mathrm{k}$ ensures that the joint lag-window estimator is a nonnegative covariance operator. Using these estimators, form the block long-run covariance estimator on $\HX\times\HY$,
\begin{equation*} \widehat{\Omega}_{ZZ}=\begin{pmatrix}\widehat{\Omega}_{\mathbf{x}\mathbf{x}} & \widehat{\Omega}_{U\mathbf{x}}^{*}\\ \widehat{\Omega}_{U\mathbf{x}} & \widehat{\Omega}_{UU}\end{pmatrix}. \end{equation*}
Here $\widehat{\Omega}_{U\mathbf{x}}^{*}$ denotes the adjoint of $\widehat{\Omega}_{U\mathbf{x}}$. Under $H_0$, $\widehat{\Omega}_{ZZ}$ estimates the long-run covariance operator of $Z_t^0=(\Delta\mathbf{x}_t,U_t)$, with $U_t$ replaced by $\Upre_t$. Remark~\ref{rem_prac} describes the practical generation of Brownian paths from $\widehat{\Omega}_{ZZ}$.

In the present zero-intercept construction, no centering is applied to the simulated Brownian paths, and the resulting simulated quantities are written without the superscript $c$. For each $\ei=1,\ldots,R_{\mathrm{MC}}$, conditionally on the data and independently across replications, generate a jointly distributed pair of Brownian paths $(\widehat W_{\mathbf{x},(\ei)},\widehat W_{U,(\ei)})$ with joint covariance operator $\widehat{\Omega}_{ZZ}$. Define
\begin{equation*} \widehat{\mathcal A}_{(\ei)}=\int_0^1\widehat W_{\mathbf{x},(\ei)}(r)\otimes d\widehat W_{U,(\ei)}(r)+\widehat{\Omega}_{U\mathbf{x}}^{+},\qquad \widehat J_{(\ei)}=\int_0^1\widehat W_{\mathbf{x},(\ei)}(r)\otimes\widehat W_{\mathbf{x},(\ei)}(r)\,dr \end{equation*}
and
\begin{equation*} \widehat Q_{(\ei)}(r)=\int_0^r\widehat W_{\mathbf{x},(\ei)}(u)\,du,\qquad \widehat{\mathcal R}_{(\ei)}(r)=\widehat W_{U,(\ei)}(r)-\widehat{\mathcal A}_{(\ei)}\widehat J_{(\ei)}^{-1}\widehat Q_{(\ei)}(r). \end{equation*}
The simulated residual partial-sum operator is
\begin{equation*} \widehat{\mathcal K}_{R,(\ei)}=\int_0^1\widehat{\mathcal R}_{(\ei)}(r)\otimes\widehat{\mathcal R}_{(\ei)}(r)\,dr. \end{equation*}
The resulting simulated null statistics are
\begin{equation*} \widehat{\mathcal T}_{K,0,(\ei)}=\Lambda_{\max}\bigl(\widehat{\mathcal K}_{R,(\ei)}\bigr),\qquad \widehat{\mathcal T}_{V,0,(\ei)}=\frac{\langle\widehat{\mathcal K}_{R,(\ei)}\hat v_V,\hat v_V\rangle}{\langle\widehat{\mathcal V}\hat v_V,\hat v_V\rangle}. \end{equation*}
The sample quantities $\widehat{\mathcal V}$ and $\hat v_V$ are held fixed across Monte Carlo replications.

Let $\widehat q_{K,\alpha}$ and $\widehat q_{V,\alpha}$ be the empirical $(1-\alpha)$ quantiles of $\{\widehat{\mathcal T}_{K,0,(\ei)}\}_{\ei=1}^{R_{\mathrm{MC}}}$ and $\{\widehat{\mathcal T}_{V,0,(\ei)}\}_{\ei=1}^{R_{\mathrm{MC}}}$, respectively. We also let $q_{K,\alpha}$ and $q_{V,\alpha}$ denote the corresponding $(1-\alpha)$ quantiles of $\mathcal T_{K,0}$ and $\mathcal T_{V,0}$.

\begin{proposition}[Validity of the plug-in critical values]\label{prop_plugin_test}
Suppose that Assumptions~\ref{assum_test}\ref{assum_test1} and~\ref{assum_test_kernel} hold and that $R_{\mathrm{MC}}\to\infty$ as $T\to\infty$. Under $H_0$, suppose also that Assumption~\ref{assum_test}\ref{assum_test2} holds. If the $(1-\alpha)$ quantiles $q_{K,\alpha}$ and $q_{V,\alpha}$ are uniquely identified and the distribution functions of $\mathcal T_{K,0}$ and $\mathcal T_{V,0}$ are continuous at these quantiles, then
\begin{equation*} \widehat q_{K,\alpha}\to_p q_{K,\alpha},\qquad \widehat q_{V,\alpha}\to_p q_{V,\alpha}, \end{equation*}
and
\begin{equation*} \mathbb P\{\widehat{\mathcal T}_{K}>\widehat q_{K,\alpha}\}\to\alpha,\qquad \mathbb P\{\widehat{\mathcal T}_{V}>\widehat q_{V,\alpha}\}\to\alpha. \end{equation*}
Under $H_1$, suppose instead that Assumption~\ref{assum_test}\ref{assum_test3} holds. Then
\begin{equation*} \mathbb P\{\widehat{\mathcal T}_{K}>\widehat q_{K,\alpha}\}\to1,\qquad \mathbb P\{\widehat{\mathcal T}_{V}>\widehat q_{V,\alpha}\}\to1. \end{equation*}
\end{proposition}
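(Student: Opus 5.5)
The plan is to show that, conditionally on the data, the simulated null statistics converge weakly in probability to the laws of $\mathcal T_{K,0}$ and $\mathcal T_{V,0}$ from Theorem~\ref{thm1}. Quantile consistency then follows from the continuity and uniqueness conditions and from $R_{\mathrm{MC}}\to\infty$. Size and power follow from Theorem~\ref{thm1} together with Slutsky's lemma.

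\textbf{Step 1: consistency of the covariance inputs.} We show $\|\widehat\Omega_{ZZ}-\Omega_{ZZ}\|_{\op}\to_p0$ and $\|\widehat\Omega_{U\mathbf{x}}^+-\Omega_{U\mathbf{x}}^+\|_{\op}\to_p0$ under $H_0$, where $\Omega_{ZZ}$ is the long-run covariance of $Z_t^0$.

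For the infeasible versions computed from the true $U_t$, this is the standard kernel-estimator result for $L^4$-$m$-approximable sequences under Assumption~\ref{assum_test_kernel}; see \citet{horvath2013estimation} and \citet{BERKES2016150}.

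The feasible versions replace $U_t$ by $\Upre_t$. The difference is $\Upre_t-U_t=-T^{-1}\mathcal A_TJ_T^{-1}\mathbf{x}_t$, as in the proof of Theorem~\ref{thm1}. Each lag-$j$ autocovariance therefore changes by terms of three kinds:
\[
T^{-2}\sum_t\Delta\mathbf{x}_t\otimes\mathbf{x}_{t+j}=O_p(T^{-1/2}),\qquad
T^{-2}\sum_t U_t\otimes\mathbf{x}_{t+j}=O_p(T^{-1/2}),\qquad
T^{-3}\sum_t\mathbf{x}_t\otimes\mathbf{x}_{t+j}=O_p(T^{-1}),
\]
uniformly in $|j|\le h$. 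Summing over the $2h+1$ lags gives an error of order $O_p(h/\sqrt T)=o_p(1)$.

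\textbf{Step 2: conditional weak convergence of the simulated statistics.} Generate the paths as $(\widehat W_{\mathbf{x}},\widehat W_U)=\widehat\Omega_{ZZ}^{1/2}\mathcal W$ for a fixed cylindrical Brownian motion $\mathcal W$; alternatively, use the finite-rank construction of Remark~\ref{rem_prac}.

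The square-root map is continuous on nonnegative operators. Hence Step 1 gives $\sup_r\|(\widehat W_{\mathbf{x}},\widehat W_U)(r)-(W_{\mathbf{x}},W_U)(r)\|\to0$ in probability, coupled on the same space, in which $(W_{\mathbf{x}},W_U)$ has covariance $\Omega_{ZZ}$.

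The stochastic integral $\int\widehat W_{\mathbf{x}}\otimes d\widehat W_U$ converges by an $L^2$ isometry argument, since both integrand and integrator are linear images of $\mathcal W$. Together with Step 1 for $\widehat\Omega_{U\mathbf{x}}^+$, this yields $(\widehat{\mathcal A},\widehat J,\widehat Q)\to(\mathcal A,J,Q)$.

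$J$ is a.s.\ invertible because $\Omega_{\mathbf{x}\mathbf{x}}>0$ (Assumption~\ref{assum_test}\ref{assum_test1}), so inversion is continuous at $J$. It follows that $\widehat{\mathcal R}\to\mathcal R$ uniformly and $\widehat{\mathcal K}_R\to\mathcal K_R$ in operator norm. Continuity of $\Lambda_{\max}$ then gives the result for $\widehat{\mathcal T}_{K,0}$.

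For $\widehat{\mathcal T}_{V,0}$, the fixed sample quantities satisfy $\hat v_V\to_p v_U$ up to sign and $\langle\widehat{\mathcal V}\hat v_V,\hat v_V\rangle\to_p\langle C_{UU}v_U,v_U\rangle>0$, exactly as in the proof of Theorem~\ref{thm1}. The ratio statistic is sign-invariant, so $\widehat{\mathcal T}_{V,0}$ converges conditionally in law to $\mathcal T_{V,0}$.

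Combining these conditional limits with the Glivenko--Cantelli property of the Monte Carlo empirical distribution as $R_{\mathrm{MC}}\to\infty$, and using uniqueness and continuity at the quantile, gives $\widehat q_{\cdot,\alpha}\to_p q_{\cdot,\alpha}$. Size $\alpha$ then follows from Theorem~\ref{thm1} and continuity of the limit distribution at $q$.

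\textbf{Step 3: behaviour under $H_1$, the main obstacle.} Here the statistics diverge by Theorem~\ref{thm1}, so it suffices to show $\widehat q_{K,\alpha},\widehat q_{V,\alpha}=o_p(T^2)$ and $o_p(T)$, respectively.

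The difficulty is that $\Upre_t$ is now of order $T^{1/2}$, so $\widehat\Omega_{UU}$, $\widehat\Omega_{U\mathbf{x}}$ and $\widehat\Omega^+_{U\mathbf{x}}$ diverge. Using $T^{-1/2}\Upre_{\lfloor T\cdot\rfloor}\Rightarrow W_{V|\mathbf{x}}$, I would bound
\[
\|\widehat\Omega_{UU}\|_{\op}\le\sum_{|j|\le h}\|\widehat\Gamma^{(j)}_{UU}\|_{\op}=O_p(hT),\qquad
\|\widehat\Omega_{U\mathbf{x}}\|_{\op},\ \|\widehat\Omega^+_{U\mathbf{x}}\|_{\op}=O_p(hT^{1/2}).
\]
The simulated $\widehat W_U$ then scales like $(hT)^{1/2}$, and the adjustment term $\widehat{\mathcal A}\widehat J^{-1}\widehat Q$ like $hT^{1/2}$. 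Hence $\widehat{\mathcal K}_R=O_p(h^2T)$, uniformly over replications in the sense of quantiles.

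This gives $\widehat q_{K,\alpha}=O_p(h^2T)=o_p(T^2)$ because $h/\sqrt T\to0$. For the normalized statistic, the denominator $\langle\widehat{\mathcal V}\hat v_V,\hat v_V\rangle$ is of exact order $T$ by the proof of Theorem~\ref{thm1}, so $\widehat q_{V,\alpha}=O_p(h^2)$. Comparing with the $T^{1}$ rate of $\widehat{\mathcal T}_V$ again requires $h^2=o(T)$, which is precisely Assumption~\ref{assum_test_kernel}.

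Making these bounds rigorous, uniformly in the lag $j$ and for the conditional quantiles, is the step I expect to need the most care. The rest of the argument follows Theorem~\ref{thm1} through continuous mapping.
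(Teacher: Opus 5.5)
Your architecture is the paper's: the $O_p(h/\sqrt T)$ residual-replacement bound, a square-root coupling of the simulated paths with a common cylindrical Brownian motion $\mathcal W$, the It\^{o} isometry for $\widehat{\mathcal A}$, eigenprojection convergence for $\hat v_V$, and, under $H_1$, the same $O_p(h^2T)$ and $O_p(h^2)$ bounds on the simulated quantiles. There is, however, a genuine gap in Step~2. Step~1 only gives operator-norm consistency of $\widehat\Omega_{ZZ}$. You then infer $\sup_r\|(\widehat W_{\mathbf{x}},\widehat W_U)(r)-(W_{\mathbf{x}},W_U)(r)\|\to_p0$ from continuity of the square root. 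On the infinite-dimensional space $\HX\times\HY$ this implication fails. Doob's inequality bounds $\mathbb E\sup_r\|(\widehat\Omega_{ZZ}^{1/2}-\Omega_{ZZ}^{1/2})\mathcal W(r)\|^2$ by $4\|\widehat\Omega_{ZZ}^{1/2}-\Omega_{ZZ}^{1/2}\|_{\mathrm{HS}}^2$, and the Hilbert--Schmidt norm is not controlled by the operator norm. For instance, add $\epsilon_T$ times a rank-$N_T$ projection to $\Omega_{UU}$, with $\epsilon_T\to0$ but $\epsilon_TN_T\not\to0$. Operator-norm consistency is preserved, yet roughly $\epsilon_TN_T$ is added to $\mathbb E\|\widehat W_U(1)\|^2$, and $\Lambda_{\max}(\widehat{\mathcal K}_R)$ can shift by a non-vanishing amount. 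Since $\widehat\Omega_{UU}$ is built from $T$ residual curves, its rank may grow, so your argument does not exclude this. (Weak convergence of centered Gaussian laws on a Hilbert space is equivalent to trace-norm convergence of their covariances.) The same problem affects your isometry argument for $\int_0^1\widehat W_{\mathbf{x}}\otimes d\widehat W_U$, and hence the operator-norm convergence of $\widehat{\mathcal K}_R$ needed for $\widehat{\mathcal T}_{K,0}$. For $\widehat{\mathcal T}_{V,0}$ alone the conclusion could be rescued without this step. Only the finite-dimensional process $(\widehat W_{\mathbf{x}},\langle\widehat W_U,\hat v_V\rangle)$ enters that statistic, and its covariance is controlled in operator norm.

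The missing ingredient, which the paper's proof supplies, is trace convergence. The quantity $\operatorname{tr}(\widehat\Omega_{ZZ})=\operatorname{tr}(\widehat\Omega_{\mathbf{x}\mathbf{x}})+\operatorname{tr}(\widehat\Omega_{UU})$ is a scalar lag-window estimator built from $\langle\Delta\mathbf{x}_t,\Delta\mathbf{x}_{t+j}\rangle$ and $\langle\Upre_t,\Upre_{t+j}\rangle$, so it converges to $\operatorname{tr}(\Omega_{ZZ})$. Your residual-replacement bound carries over verbatim in the nuclear norm $\|\cdot\|_1$, because $\|a\otimes b\|_1=\|a\|\,\|b\|$. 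Both $\widehat\Omega_{ZZ}$ and $\Omega_{ZZ}$ are nonnegative; this is where the nonnegative-definite kernel in Assumption~\ref{assum_test_kernel} matters. Hence operator-norm convergence together with trace convergence gives nuclear-norm convergence. The Powers--St{\o}rmer inequality $\|A^{1/2}-B^{1/2}\|_{\mathrm{HS}}^2\le\|A-B\|_1$ then yields the Hilbert--Schmidt convergence your coupling actually needs. The same correction applies in Step~3. The claim that $\widehat W_U$ is of order $(hT)^{1/2}$ in the uniform path norm requires $\operatorname{tr}(\widehat\Omega_{UU})=O_p(hT)$, not merely $\|\widehat\Omega_{UU}\|_{\op}=O_p(hT)$. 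Your lag-by-lag bound, run in nuclear norm, gives this directly.
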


\begin{proof}
We first consider $H_0$. From the proof of Theorem~\ref{thm1}, $\max_{1\leq t\leq T}\|\Upre_t-U_t\|=O_p(T^{-1/2})$. Consequently, the difference between each residual-based kernel estimator and its infeasible counterpart based on $U_t$ is $O_p(h/\sqrt T)=o_p(1)$. The joint $L^4$-$m$-approximability of $Z_t^0=(\Delta\mathbf{x}_t,U_t)$ and standard long-run covariance consistency results for functional time series, including Theorem~2 of \citet{horvath2013estimation} and the analogous one-sided argument, therefore give consistency of $\widehat{\Omega}_{\mathbf{x}\mathbf{x}}$, $\widehat{\Omega}_{U\mathbf{x}}$, $\widehat{\Omega}_{UU}$, and $\widehat{\Omega}_{U\mathbf{x}}^+$ for their population counterparts.

These results yield operator-norm consistency of $\widehat{\Omega}_{ZZ}$. Moreover, the trace of the block estimator is the sum of the traces of its diagonal blocks, $\operatorname{tr}(\widehat{\Omega}_{ZZ})=\operatorname{tr}(\widehat{\Omega}_{\mathbf{x}\mathbf{x}})+\operatorname{tr}(\widehat{\Omega}_{UU})$. For each lag $j$, the traces of the corresponding sample covariance operators are sample averages of the scalar inner products $\langle\Delta\mathbf{x}_t,\Delta\mathbf{x}_{t+j}\rangle$ and $\langle\Upre_t,\Upre_{t+j}\rangle$. Hence, $\operatorname{tr}(\widehat{\Omega}_{ZZ})$ is a scalar lag-window estimator of the total long-run variance $\operatorname{tr}(\Omega_{ZZ})$. The same weak-dependence and bandwidth conditions, together with the asymptotic negligibility of replacing $U_t$ by $\Upre_t$, therefore give $\operatorname{tr}(\widehat{\Omega}_{ZZ})\to_p\operatorname{tr}(\Omega_{ZZ})$. Since both operators are nonnegative, operator-norm convergence together with trace convergence implies convergence in nuclear norm and hence $\bigl\|\widehat{\Omega}_{ZZ}^{1/2}-\Omega_{ZZ}^{1/2}\bigr\|_{\mathrm{HS}}\to_p0$. 
Under a coupling based on a common cylindrical Brownian motion, this convergence implies that the conditional law of the simulated joint Brownian paths converges weakly in probability to the law of $(W_{\mathbf{x}},W_U)$. The Itô isometry further yields joint convergence of the stochastic integral in $\widehat{\mathcal A}_{(\ei)}$ to $\int_0^1W_{\mathbf{x}}(r)\otimes dW_U(r)$.

Moreover, $\widehat{\mathcal V}\to_p C_{UU}$, and the eigenvalue-separation condition gives $\hat v_V\otimes\hat v_V\to_p v_U\otimes v_U$, avoiding dependence on the arbitrary signs of the eigenvectors. Together with $\widehat{\Omega}_{U\mathbf{x}}^+\to_p\Omega_{U\mathbf{x}}^+$ and the almost-sure invertibility of $J$, these results imply that the conditional joint distribution of $(\widehat{\mathcal T}_{K,0,(\ei)},\widehat{\mathcal T}_{V,0,(\ei)})$ converges weakly in probability to the distribution of $(\mathcal T_{K,0},\mathcal T_{V,0})$. The convergence of the empirical Monte Carlo quantiles follows from the uniqueness and continuity conditions and $R_{\mathrm{MC}}\to\infty$. The size statements then follow from Theorem~\ref{thm1}.

We next consider $H_1$. All stochastic orders involving simulated objects below are understood conditionally on the data. Under the nondegenerate alternative, $T^{-1/2}\Upre_{\lfloor T\cdot\rfloor}\Rightarrow W_{V|\mathbf{x}}(\cdot)$, and hence $\max_{1\leq t\leq T}\|\Upre_t\|=O_p(\sqrt T)$. The kernel covariance estimators constructed from $(\Delta\mathbf{x}_t,\Upre_t)$ no longer estimate stationary long-run covariance objects, but their orders remain controlled. Since $\Delta\mathbf{x}_t$ is stationary, $\|\widehat{\Omega}_{\mathbf{x}\mathbf{x}}\|_{\op}=O_p(1)$. For each $|j|\leq h$, $\|\widehat{\Gamma}_{U\mathbf{x}}^{(j)}\|_{\op}$ is bounded by $T^{-1}\sum_t\|\Upre_t\|\,\|\Delta\mathbf{x}_{t-j}\|=O_p(\sqrt T)$, uniformly over these lags. It follows that $\|\widehat{\Omega}_{U\mathbf{x}}\|_{\op}=O_p(h\sqrt T)$ and $\|\widehat{\Omega}_{U\mathbf{x}}^+\|_{\op}=O_p(h\sqrt T)$. Similarly, $\|\widehat{\Gamma}_{UU}^{(j)}\|_{\op}=O_p(T)$ uniformly for $|j|\leq h$, so $\|\widehat{\Omega}_{UU}\|_{\op}=O_p(hT)$. The same calculation in nuclear norm gives $\operatorname{tr}(\widehat{\Omega}_{UU})=O_p(hT)$. The fixed-dimensional $\mathbf{x}$-component of the simulated Brownian motion therefore satisfies $\sup_{0\leq r\leq1}\|\widehat W_{\mathbf{x},(\ei)}(r)\|=O_p(1)$, while the conditional maximal inequality and the trace bound give $\sup_{0\leq r\leq1}\|\widehat W_{U,(\ei)}(r)\|=O_p(\sqrt{hT})$. 
Because $\widehat{\Omega}_{\mathbf{x}\mathbf{x}}\to_p\Omega_{\mathbf{x}\mathbf{x}}$ and $\Omega_{\mathbf{x}\mathbf{x}}$ is positive definite, $\|\widehat J_{(\ei)}\|_{\op}=O_p(1)$, $\|\widehat J_{(\ei)}^{-1}\|_{\op}=O_p(1)$, and $\sup_{0\leq r\leq1}\|\widehat Q_{(\ei)}(r)\|=O_p(1)$. By the Itô isometry, the stochastic integral in $\widehat{\mathcal A}_{(\ei)}$ is $O_p(\sqrt{hT})$, whereas $\widehat{\Omega}_{U\mathbf{x}}^+=O_p(h\sqrt T)$. Hence, $\|\widehat{\mathcal A}_{(\ei)}\|_{\op}=O_p(h\sqrt T)$ and $\sup_{0\leq r\leq1}\|\widehat{\mathcal R}_{(\ei)}(r)\|=O_p(h\sqrt T)$. It follows that $\|\widehat{\mathcal K}_{R,(\ei)}\|_{\op}=O_p(h^2T)$. The corresponding conditional $(1-\alpha)$ quantile is therefore $O_p(h^2T)$, and $R_{\mathrm{MC}}\to\infty$ gives $\widehat q_{K,\alpha}=O_p(h^2T)$.

For the normalized statistic, the proof of Theorem~\ref{thm1} gives $T^{-1}\langle\widehat{\mathcal V}\hat v_V,\hat v_V\rangle\Rightarrow\Lambda_{\max}(M_V)>0$ almost surely. Combining this denominator rate with $\|\widehat{\mathcal K}_{R,(\ei)}\|_{\op}=O_p(h^2T)$ shows that the conditional $(1-\alpha)$ quantile of $\widehat{\mathcal T}_{V,0,(\ei)}$ is $O_p(h^2)$ and hence that $\widehat q_{V,\alpha}=O_p(h^2)$. Since $h/\sqrt T\to0$, it follows that $T^{-2}\widehat q_{K,\alpha}\to_p0$ and $T^{-1}\widehat q_{V,\alpha}\to_p0$. On the other hand, Theorem~\ref{thm1} gives $T^{-2}\widehat{\mathcal T}_{K}\Rightarrow\mathcal T_{K,1}$ and $T^{-1}\widehat{\mathcal T}_{V}\Rightarrow\mathcal T_{V,1}$, where $\mathcal T_{K,1}>0$ and $\mathcal T_{V,1}>0$ almost surely. The two rejection probabilities therefore converge to one.
\end{proof}

\begin{remark}\label{rem_prac}
At the operator level, $\widehat{\Omega}_{ZZ}$ is a finite-rank nonnegative self-adjoint covariance operator on the product Hilbert space $\HX\times\HY$. Let $\widehat{\Omega}_{ZZ}=\sum_{j\geq1}\widehat\mu_j\widehat\psi_j\otimes\widehat\psi_j$ be its spectral decomposition, where only finitely many $\widehat\mu_j$ are nonzero. For each Monte Carlo replication $\ei$, the increment of $\widehat W_{Z,(\ei)}=(\widehat W_{\mathbf{x},(\ei)},\widehat W_{U,(\ei)})$ over the $\ell$th interval of a grid with $m$ equal intervals may be generated as
\begin{equation*} \Delta\widehat W_{Z,\ell,(\ei)}=m^{-1/2}\sum_{j\geq1}\widehat\mu_j^{1/2}\xi_{\ell j,(\ei)}\widehat\psi_j,\qquad \ell=1,\ldots,m, \end{equation*}
where the $\xi_{\ell j,(\ei)}$ are independent standard normal variables across $\ell$, $j$, and $\ei$. Cumulative sums of these increments give the simulated Brownian paths. In practical computation, the functional objects are represented by finitely many coefficient vectors, so this construction reduces to forming the matrix representation of $\widehat{\Omega}_{ZZ}$, taking its symmetric square root, and generating Gaussian increments with that covariance. For the model with an intercept, the centered paths are formed from these initially generated paths as described in Section~\ref{sec_app_test_implementation}.
\end{remark}

\subsection{Testing between-cointegration in the model with an intercept}
\label{sec_app_test_1}

Theorem~\ref{thm1} and Proposition~\ref{prop_plugin_test} are stated for the zero-intercept specification. For the model with an intercept, the same arguments are applied to the demeaned variables $\widetilde Y_t=Y_t-\bar Y_T$ and $\widetilde{\mathbf{x}}_t=\mathbf{x}_t-\bar{\mathbf{x}}_T$. Under $H_0$, $\widetilde Y_t=B(\widetilde{\mathbf{x}}_t)+\widetilde U_t$, where $\widetilde U_t=U_t-\bar U_T$. Demeaning affects the regressor-level and disturbance partial-sum limits differently:
\begin{equation*}
T^{-1/2}\widetilde{\mathbf{x}}_{\lfloor Tr\rfloor}\Rightarrow W_{\mathbf{x}}^c(r),
\qquad
T^{-1/2}\sum_{s=1}^{\lfloor Tr\rfloor}\widetilde U_s\Rightarrow W_U^c(r),
\end{equation*}
where
\begin{equation*}
W_{\mathbf{x}}^c(r)=W_{\mathbf{x}}(r)-\int_0^1W_{\mathbf{x}}(u)\,du,
\qquad
W_U^c(r)=W_U(r)-rW_U(1).
\end{equation*}

Define $\mathcal A_T^c=T^{-1}\sum_{t=1}^T\widetilde{\mathbf{x}}_t\otimes\widetilde U_t$, $J_T^c=T^{-2}\sum_{t=1}^T\widetilde{\mathbf{x}}_t\otimes\widetilde{\mathbf{x}}_t$, and $Q_T^c(r)=T^{-3/2}\sum_{s=1}^{\lfloor Tr\rfloor}\widetilde{\mathbf{x}}_s$. Since $\sum_{t=1}^T\widetilde{\mathbf{x}}_t=0$, replacing $\widetilde U_t$ with $U_t$ in $\mathcal A_T^c$ has no effect, and
\begin{equation*}
\mathcal A_T^c=\mathcal A_T-\bar{\mathbf{x}}_T\otimes\bar U_T.
\end{equation*}
Let $\bar W_{\mathbf{x}}=\int_0^1W_{\mathbf{x}}(r)\,dr$. The joint convergence in Assumption~\ref{assum_test}\ref{assum_test2} implies that $T^{-1/2}\bar{\mathbf{x}}_T\Rightarrow\bar W_{\mathbf{x}}$ and $T^{1/2}\bar U_T\Rightarrow W_U(1)$. Together with the continuous mapping theorem, this yields $\mathcal A_T^c\Rightarrow\mathcal A^c$, $J_T^c\Rightarrow J^c$, and $Q_T^c(\cdot)\Rightarrow Q^c(\cdot)$, where
\begin{align*}
\mathcal A^c
&=\mathcal A-\bar W_{\mathbf{x}}\otimes W_U(1)
=\int_0^1W_{\mathbf{x}}(r)\otimes dW_U(r)
+\Omega_{U\mathbf{x}}^+
-\bar W_{\mathbf{x}}\otimes W_U(1),\\
J^c
&=\int_0^1W_{\mathbf{x}}^c(r)\otimes W_{\mathbf{x}}^c(r)\,dr,
\qquad
Q^c(r)=\int_0^rW_{\mathbf{x}}^c(u)\,du.
\end{align*}
Positive definiteness of $\Omega_{\mathbf{x}\mathbf{x}}$ implies that $J^c$ is almost surely invertible.

To connect this expression with the centered simulation in Section~\ref{sec_app_test_implementation}, interpret the centered stochastic integral through the corresponding left-endpoint-sum limit. Since $W_U^c(r)=W_U(r)-rW_U(1)$ and $\int_0^1W_{\mathbf{x}}^c(r)\,dr=0$,
\begin{equation*}
\int_0^1W_{\mathbf{x}}^c(r)\otimes dW_U^c(r)=\int_0^1W_{\mathbf{x}}(r)\otimes dW_U(r)-\bar W_{\mathbf{x}}\otimes W_U(1).
\end{equation*}
Hence, $\mathcal A^c=\int_0^1W_{\mathbf{x}}^c(r)\otimes dW_U^c(r)+\Omega_{U\mathbf{x}}^+$, and $\widehat{\mathcal A}_{(\ei)}^c$ in \eqref{eq_app_test_sim_AJ} is its direct plug-in counterpart.

The centered residual partial-sum limit and its associated operator are
\begin{equation*}
\mathcal R^c(r)=W_U^c(r)-\mathcal A^c(J^c)^{-1}Q^c(r),
\qquad
\mathcal K_R^c=\int_0^1\mathcal R^c(r)\otimes\mathcal R^c(r)\,dr.
\end{equation*}
In particular, $\mathcal R^c(1)=0$, consistent with the fact that the least-squares residuals from the model with an intercept sum to zero. Consequently, under $H_0$,
\begin{equation*}
\widehat{\mathcal T}_K\Rightarrow\Lambda_{\max}(\mathcal K_R^c),
\qquad
\widehat{\mathcal T}_V\Rightarrow
\frac{\langle\mathcal K_R^cv_U,v_U\rangle}
{\langle C_{UU}v_U,v_U\rangle}.
\end{equation*}

The centered simulation defined in \eqref{eq_app_test_centered_paths}--\eqref{eq_app_test_null_draws} provides direct plug-in counterparts of these limiting objects. Accordingly, the proof of Proposition~\ref{prop_plugin_test} applies with the two null limits replaced by their centered counterparts, provided that the corresponding quantiles satisfy the same uniqueness and continuity conditions. The simulated critical values are therefore consistent, and both tests have asymptotic size $\alpha$ under the model with an intercept.

Under $H_1$, let
\begin{equation*}
W_V^c(r)=W_V(r)-\int_0^1W_V(u)\,du
\end{equation*}
and define
\begin{equation*}
W_{V|\mathbf{x}}^c(r)
=
W_V^c(r)
-
\left(\int_0^1W_{\mathbf{x}}^c(u)\otimes W_V^c(u)\,du\right)
(J^c)^{-1}W_{\mathbf{x}}^c(r).
\end{equation*}
Define $M_V^c$ and $M_K^c$ from $W_{V|\mathbf{x}}^c$ in the same way that $M_V$ and $M_K$ are defined from $W_{V|\mathbf{x}}$. For the intercept specification, the corresponding conditions in Assumption~\ref{assum_test}\ref{assum_test3} are replaced by $\mathbb P\{W_{V|\mathbf{x}}^c(\cdot)\equiv0\}=0$ and the requirement that the largest eigenvalue of $M_V^c$ be simple almost surely. The divergence rates remain $T^2$ for $\widehat{\mathcal T}_K$ and $T$ for $\widehat{\mathcal T}_V$, so the power conclusions of Proposition~\ref{prop_plugin_test} continue to hold.

\subsection{Finite-sample performance}\label{sec_sim_between}
\noindent We use simulation experiments to examine the size and power of the between-cointegration tests $\widehat{\mathcal T}_K$ and $\widehat{\mathcal T}_V$ under different degrees of serial dependence and endogeneity. The functional response is represented by $J=20$ nonconstant orthonormal Fourier functions on $[0,1]$, with $\phi_{2m-1}(s)=\sqrt{2}\sin(2\pi ms)$ and $\phi_{2m}(s)=\sqrt{2}\cos(2\pi ms)$ for $m=1,\ldots,10$. These basis functions integrate to zero, as do the CLR responses in our empirical analysis. All calculations are performed directly in Fourier coordinates, whose Euclidean inner products coincide with the corresponding $L^2$ inner products.

\indent Let $\varepsilon_t\sim N(0,I_2)$, $z_t\sim N(0,I_J)$, and $\nu_t\sim N(0,1)$ be mutually independent sequences of independent innovations, where $I_k$ denotes the $k\times k$ identity matrix. Starting from $\mathbf{x}_0=0$, the predictor vector is generated by
\begin{equation*}
\mathbf{x}_t=\mathbf{x}_{t-1}+\Delta\mathbf{x}_t,\quad \Delta\mathbf{x}_t=\rho\Delta\mathbf{x}_{t-1}+\sqrt{1-\rho^2}L_x\varepsilon_t,\quad L_x=\begin{pmatrix}1&0\\0.3&\sqrt{1-0.3^2}\end{pmatrix}.
\end{equation*}
Under the stationary initialization specified below, $\Delta\mathbf{x}_t$ has unit marginal variances and contemporaneous correlation $0.3$. Its long-run covariance is positive definite, ensuring that the predictor vector carries two linearly independent stochastic trends. To introduce endogeneity, define $\zeta_{m,t}=\eta\varepsilon_{m,t}+\sqrt{1-\eta^2}z_{m,t}$ for $m=1,2$, and $\zeta_{m,t}=z_{m,t}$ for $m=3,\ldots,J$. The stationary component of the disturbance is
\begin{equation*}
U_t^S(s)=\sum_{m=1}^{J}\frac{a_{m,t}}{m}\phi_m(s),\quad a_{m,t}=\rho a_{m,t-1}+\sqrt{1-\rho^2}\zeta_{m,t}.
\end{equation*}
The increment and coefficient processes are initialized jointly from their stationary Gaussian distribution. Thus, $U_t^S$ has contemporaneous covariance eigenvalues $m^{-2}$ and a unique leading direction $\phi_1$. We consider three settings for $(\Delta\mathbf{x}_t,U_t^S)$: $(\rho,\eta)=(0,0)$, with i.i.d. observations (IID); $(0.3,0)$, with serial dependence (Serial); and $(0.3,0.3)$, with serial dependence and endogeneity (Serial + endogeneity). The response is generated as
\begin{equation*}
Y_t(s)=\beta_0(s)+\beta_1(s)x_{1,t}+\beta_2(s)x_{2,t}+U_t(s),\quad U_t(s)=U_t^S(s)+\delta q_t\phi_1(s),\quad q_t=\sum_{v=1}^{t}\nu_v,
\end{equation*}
where $\beta_0=0.2\phi_1-0.1\phi_2$, $\beta_1=\phi_1+0.5\phi_3$, and $\beta_2=0.8\phi_2-0.4\phi_4$. Setting $\delta=0$ gives the null of between-cointegration. For power, we set $\delta=0.3$, introducing an independent stochastic trend along $\phi_1$ that cannot be accounted for by the predictor vector.

\indent For each design and sample size $T\in\{200,400,800\}$, we conduct 1,000 Monte Carlo replications with independently generated samples. Both tests use residuals from the least-squares regression with an intercept and are calibrated as described in Section~\ref{sec_app_test_implementation}. The covariance operators are re-estimated for every sample using the Parzen kernel with $h$ chosen as the nearest integer to $T^{1/4}$, giving $h=4,4,5$ at the three sample sizes. All $J$ response directions are retained. Each calibration uses 2,000 auxiliary Brownian draws on a grid of 499 subintervals, incorporating the demeaning, projection, and one-sided covariance corrections. We use fewer auxiliary draws than those used for the empirical tests to reduce the computational cost of recalibrating the tests in every Monte Carlo replication. We generate standard Brownian paths independently of the data and reuse them across samples, using each sample's covariance estimates to simulate its null distribution. The leading eigenvector of the residual covariance operator and its eigenvalue are re-estimated for each sample and held fixed during these simulations.

\indent Tables~\ref{tab:bc_size} and~\ref{tab:bc_power} report size and power at the 5\% nominal level. The unnormalized statistic $\widehat{\mathcal T}_K$ is the primary diagnostic, with the covariance-normalized statistic $\widehat{\mathcal T}_V$ reported for comparison. Entries are rejection frequencies in percent, with Monte Carlo standard errors in parentheses, conditional on the common auxiliary draws. The tables assess null calibration and sensitivity to the additional stochastic trend across dependence designs and sample sizes. Power is reported without size adjustment and is interpreted alongside the size results.

\begin{table}[h!]
\centering
\renewcommand{\arraystretch}{0.7}
\caption{Empirical size of the between-cointegration tests. Rejection frequencies in percent at the 5\% nominal level.}
\label{tab:bc_size}
\begin{tabular}{lccccc}
\toprule
Design & $\rho$ & $\eta$ & $T$  & $\widehat{\mathcal T}_K$ & $\widehat{\mathcal T}_V$ \\
\midrule
IID & 0 & 0 & 200 &  3.1 (0.5) & 2.8 (0.5) \\
IID & 0 & 0 & 400 &  4.0 (0.6) & 4.0 (0.6) \\
IID & 0 & 0 & 800 &  5.3 (0.7) & 4.6 (0.7) \\
Serial & 0.3 & 0 & 200 &  7.2 (0.8) & 6.5 (0.8) \\
Serial & 0.3 & 0 & 400 &  8.9 (0.9) & 8.6 (0.9) \\
Serial & 0.3 & 0 & 800 &  5.3 (0.7) & 6.0 (0.8) \\
Serial + endogeneity & 0.3 & 0.3 & 200 &  6.8 (0.8) & 7.0 (0.8) \\
Serial + endogeneity & 0.3 & 0.3 & 400 &  6.9 (0.8) & 6.0 (0.8) \\
Serial + endogeneity & 0.3 & 0.3 & 800 &  6.1 (0.8) & 6.0 (0.8) \\
\bottomrule
\end{tabular}
\par\smallskip\begin{minipage}{\textwidth}\footnotesize Each cell uses 1,000 independent data replications. Critical values are reestimated for each sample using 2,000 auxiliary draws and the Parzen kernel with $h$ chosen as the nearest integer to $T^{1/4}$. Parentheses give Monte Carlo standard errors in percentage points, conditional on the common auxiliary draws. \end{minipage}
\end{table}

\begin{table}[h!]
\centering
\renewcommand{\arraystretch}{0.7}
\caption{Empirical power of the between-cointegration tests with $\delta=0.3$. Rejection frequencies in percent at the 5\% nominal level.}
\label{tab:bc_power}
\begin{tabular}{lccccc}
\toprule
Design & $\rho$ & $\eta$ & $T$  & $\widehat{\mathcal T}_K$ & $\widehat{\mathcal T}_V$ \\
\midrule
IID & 0 & 0 & 200 &  89.8 (1.0) & 90.7 (0.9) \\
IID & 0 & 0 & 400 & 98.9 (0.3) & 99.0 (0.3) \\
IID & 0 & 0 & 800 & 99.9 (0.1) & 99.9 (0.1) \\
Serial & 0.3 & 0 & 200 & 87.3 (1.1) & 88.0 (1.0) \\
Serial & 0.3 & 0 & 400 &  97.7 (0.5) & 97.9 (0.5) \\
Serial & 0.3 & 0 & 800 &  100.0 (0.0) & 100.0 (0.0) \\
Serial + endogeneity & 0.3 & 0.3 & 200 & 87.7 (1.0) & 88.1 (1.0) \\
Serial + endogeneity & 0.3 & 0.3 & 400 & 97.8 (0.5) & 97.8 (0.5) \\
Serial + endogeneity & 0.3 & 0.3 & 800 & 99.9 (0.1) & 99.9 (0.1) \\
\bottomrule
\end{tabular}
\par\smallskip\begin{minipage}{\textwidth}\footnotesize The Monte Carlo settings are as in Table~\ref{tab:bc_size}. Power is the rejection frequency under $U_t=U_t^S+0.3q_t\phi_1$.\end{minipage}
\end{table}

\indent Table~\ref{tab:bc_size} shows that both tests are conservative under the IID design at $T=200$, with rejection rates closer to the nominal level at larger sample sizes. Serial dependence produces some overrejection at $T=200$ and $T=400$, reaching 8.9\%. At $T=800$, rejection rates across all designs range from 4.6\% to 6.1\%.

\indent Table~\ref{tab:bc_power} shows high power against the additional stochastic trend. Rejection rates range from 87.3\% to 90.7\% at $T=200$, increase to 97.7\%--99.0\% at $T=400$, and reach at least 99.9\% at $T=800$. The covariance-normalized statistic performs similarly to the unnormalized statistic under this alternative. These power results are interpreted alongside the size distortions.

\section{Inference for the model without an intercept}\label{sec_just_1}

\noindent This section provides the formal theoretical justification for estimation and inference under the zero-intercept specification \eqref{eqmodel1add} and the zero-mean normalization \eqref{eqzeromean}. The corresponding results for the model with an intercept are developed in Section~\ref{sec_app_det}.

\subsection{Proposed estimator and asymptotic properties}\label{sec_app_est1}

Under between-cointegration, we proceed to estimate the long-run operator $B$. The standard least-squares estimator exhibits second-order biases induced by endogeneity and serial correlation. To remove these biases, we adapt the fully modified least-squares approach of \citet{phillips1995fully}.

We impose the following conditions throughout this section.

\begin{assumption}[Conditions for estimation under between-cointegration]\label{assum_est}
The following conditions hold:
\begin{enumerate}[label=(\roman*)]
\item\label{assum_est1} The model \eqref{eqmodel1add} holds, $\mathbf{x}_t=\mathbf{x}_0+\sum_{s=1}^t\Delta\mathbf{x}_s$, and $T^{-1/2}\mathbf{x}_0=o_p(1)$. The disturbance $U_t$ is stationary, so $Y_t$ and $\mathbf{x}_t$ are between-cointegrated. The long-run covariance matrix $\Omega_{\mathbf{x}\mathbf{x}}$ is positive definite, so $\mathbf{x}_t$ has full nonstationary rank and is not within-cointegrated.

\item\label{assum_est2} Let $\mathcal H_Z=\HX\times\HY$, equipped with the product inner product $\langle \cdot,\cdot \rangle_{\mathcal H_Z}$, and define $Z_t=(\Delta\mathbf{x}_t,U_t)'\in\mathcal H_Z$. The joint process $Z_t$ is $L^4$-$m$-approximable and satisfies
\begin{equation*}
T^{-1/2}\sum_{s=1}^{\lfloor Tr\rfloor}Z_s
=
T^{-1/2}\sum_{s=1}^{\lfloor Tr\rfloor}
\begin{pmatrix}
\Delta\mathbf{x}_s\\
U_s
\end{pmatrix}
\Rightarrow
W_Z(r):=
\begin{pmatrix}
W_{\mathbf{x}}(r)\\
W_U(r)
\end{pmatrix}.
\end{equation*}

\item\label{assum_est3} For every $v\in\HX$ and $z\in\mathcal H_Z$, the following convergence holds jointly over any finite collection of such pairs and jointly with the FCLT in part~\ref{assum_est2}:
\begin{equation*}
\frac{1}{T}\sum_{t=1}^T\langle\mathbf{x}_{t-1},v\rangle\langle Z_t,z\rangle_{\mathcal H_Z}\Rightarrow\int_0^1\langle W_{\mathbf{x}}(r),v\rangle\,d\langle W_Z(r),z\rangle_{\mathcal H_Z}+\sum_{j=1}^{\infty}\mathbb E\!\left[\langle\Delta\mathbf{x}_t,v\rangle\langle Z_{t+j},z\rangle_{\mathcal H_Z}\right].
\end{equation*}

\item\label{assum_est4} The bandwidth $h$ and kernel $\mathrm{k}$ satisfy Assumption~\ref{assum_test_kernel}.
\end{enumerate}
\end{assumption}

Assumptions~\ref{assum_est}\ref{assum_est1}, \ref{assum_est}\ref{assum_est2}, and~\ref{assum_est}\ref{assum_est4} parallel the conditions imposed under the null hypothesis for the between-cointegration tests. Assumption~\ref{assum_est}\ref{assum_est3} explicitly imposes the cross-moment convergence required to derive the limiting distribution of the fully modified estimator; conditions of this type are discussed in \citet{seong2021functional} and \citet{Nam2025}.

As discussed in Section~\ref{sec_compest}, constructing the fully modified estimator requires the long-run and one-sided long-run covariance operators $\Omega_{\mathbf{x}\mathbf{x}}$, $\Omega_{\mathbf{x}\mathbf{x}}^+$, $\Omega_{U\mathbf{x}}$, and $\Omega_{U\mathbf{x}}^+$. The following proposition establishes operator-norm consistency of their kernel estimators defined in \eqref{eqsample1}--\eqref{eqsample2}. For the subsequent simulation-based inference, it also establishes consistency of $\widehat{\Omega}_{UU}$ defined in \eqref{eq_lrv_UU}, computed here using the zero-intercept residuals $\Upre_t=Y_t-\Bpre(\mathbf{x}_t)$.

\begin{proposition}\label{prop1}
Suppose that Assumption~\ref{assum_est} holds. Let the kernel estimators $\widehat{\Omega}_{\mathbf{x}\mathbf{x}}$, $\widehat{\Omega}_{\mathbf{x}\mathbf{x}}^+$, $\widehat{\Omega}_{U\mathbf{x}}$, $\widehat{\Omega}_{U\mathbf{x}}^+$, and $\widehat{\Omega}_{UU}$ be defined as in \eqref{eqsample1}--\eqref{eqsample2} and \eqref{eq_lrv_UU}, with all residual-based terms computed using $\Upre_t=Y_t-\Bpre(\mathbf{x}_t)$. Then these estimators are consistent in operator norm for their population counterparts. Moreover, $\widehat{\Omega}_{U|\mathbf{x}}\to_p\Omega_{U|\mathbf{x}}$ in operator norm, where
\begin{equation*}
\widehat{\Omega}_{U|\mathbf{x}}=\widehat{\Omega}_{UU}-\widehat{\Omega}_{U\mathbf{x}}\widehat{\Omega}_{\mathbf{x}\mathbf{x}}^{-1}\widehat{\Omega}_{\mathbf{x}U},\qquad \Omega_{U|\mathbf{x}}=\Omega_{UU}-\Omega_{U\mathbf{x}}\Omega_{\mathbf{x}\mathbf{x}}^{-1}\Omega_{\mathbf{x}U}.
\end{equation*}
\end{proposition}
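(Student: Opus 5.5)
The plan is to split each estimator into an infeasible counterpart built from the true disturbance $U_t$ and a remainder caused by replacing $U_t$ with $\Upre_t$. We then show the infeasible counterparts are consistent, the remainders are negligible, and the conditional operator follows by continuity.

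The estimators $\widehat{\Omega}_{\mathbf{x}\mathbf{x}}$ and $\widehat{\Omega}_{\mathbf{x}\mathbf{x}}^+$ involve only $\Delta\mathbf{x}_t$, so they need no replacement step. For them, we invoke the lag-window consistency theory for $L^4$-$m$-approximable sequences, namely Theorem~2 of \citet{horvath2013estimation} and its one-sided analogue. This applies to the joint process $Z_t=(\Delta\mathbf{x}_t,U_t)$ under Assumption~\ref{assum_est}\ref{assum_est2} and the kernel and bandwidth conditions in Assumption~\ref{assum_test_kernel}. Applying it to $Z_t$ gives consistency of the full block long-run covariance operator and of its one-sided version. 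Reading off the relevant blocks then yields operator-norm consistency of the infeasible versions of $\widehat{\Omega}_{U\mathbf{x}}$, $\widehat{\Omega}_{U\mathbf{x}}^+$ and $\widehat{\Omega}_{UU}$.

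Next we control the replacement error, using the decomposition from the proof of Theorem~\ref{thm1}:
\[
\Upre_t-U_t=-T^{-1}\mathcal A_TJ_T^{-1}\mathbf{x}_t=:D_t .
\]
Under Assumption~\ref{assum_est}, the joint convergence in part~\ref{assum_est3} together with the FCLT gives $\mathcal A_T=O_p(1)$ and $J_T^{-1}=O_p(1)$. Combined with $\max_t\|\mathbf{x}_t\|=O_p(T^{1/2})$, this yields $\max_t\|D_t\|=O_p(T^{-1/2})$.

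For $\widehat\Gamma_{U\mathbf{x}}^{(j)}$, the error term is $T^{-1}\sum_t\Delta\mathbf{x}_t\otimes D_{t+j}$. Its operator norm is at most
\[
\max_t\|D_t\|\cdot T^{-1}\sum_t\|\Delta\mathbf{x}_t\|=O_p(T^{-1/2}),
\]
uniformly in $|j|\le h$. Since $|\mathrm{k}|\le1$ and there are at most $2h+1$ lags, the total error is $O_p(h/\sqrt T)$, which is $o_p(1)$ because $h/\sqrt T\to0$.

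For $\widehat\Gamma_{UU}^{(j)}$, expand $\Upre_t\otimes\Upre_{t+j}$ into three pieces. The cross terms $U_t\otimes D_{t+j}$ and $D_t\otimes U_{t+j}$ are $O_p(T^{-1/2})$ by the same bound, using $T^{-1}\sum\|U_t\|=O_p(1)$. The term $D_t\otimes D_{t+j}$ is $O_p(T^{-1})$. Summing over lags again gives $O_p(h/\sqrt T)=o_p(1)$.

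A sharper route exploits the structure $D_t=-\mathcal A_T\widehat C_{\mathbf{x}\mathbf{x}}^{-1}\mathbf{x}_t$, whose range is finite-dimensional. This reduces the error to $\mathcal A_T$ times lag-window sums of the form $T^{-1}\sum_t\mathbf{x}_t\otimes\Delta\mathbf{x}_{t+j}$, which are $O_p(1)$ per lag. The crude bound already suffices, so I will use it.

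Finally, $\Omega_{\mathbf{x}\mathbf{x}}$ is positive definite and finite-dimensional, so $\widehat{\Omega}_{\mathbf{x}\mathbf{x}}^{-1}\to_p\Omega_{\mathbf{x}\mathbf{x}}^{-1}$ by continuity of matrix inversion. The adjoint map preserves the operator norm, so $\widehat{\Omega}_{\mathbf{x}U}\to_p\Omega_{\mathbf{x}U}$. Products and sums of operator-norm convergent sequences converge, which gives $\widehat{\Omega}_{U|\mathbf{x}}\to_p\Omega_{U|\mathbf{x}}$.

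The main obstacle I expect is the infeasible consistency of the one-sided operators $\widehat\Omega^+$ in operator norm for a Hilbert-valued process. The cited results are stated for two-sided long-run covariances. I would handle this by redoing the bias and variance bounds lag by lag, using the $L^{4+\delta}$ approximation rate with $ma_m\to0$ and $\sum_m a_m<\infty$ to bound $\mathbb E\|\widehat\Gamma^{(j)}-\Gamma^{(j)}\|_{\mathrm{HS}}^2=O(T^{-1})$ uniformly in $j$. Summing over $j\le h$ then gives an $O_p(h/\sqrt T)$ variance term, and the kernel conditions ($\mathrm{k}(0)=1$, continuity, and summable autocovariances) control the bias.
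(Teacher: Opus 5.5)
Your proposal is correct and follows the paper's proof essentially step for step. Both arguments treat the infeasible estimators through Theorem~2 of \citet{horvath2013estimation} and its one-sided analogue. Both then bound $\max_t\|\Upre_t-U_t\|=O_p(T^{-1/2})$, using $\Bpre-B=O_p(T^{-1})$, which is your $D_t=-T^{-1}\mathcal A_TJ_T^{-1}\mathbf{x}_t$. The lag-by-lag replacement error is $O_p(h/\sqrt T)$, with the additional $D_t\otimes D_{t+j}$ term for $\widehat\Omega_{UU}$, and continuity of inversion and products gives $\widehat\Omega_{U|\mathbf{x}}$.

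One justification needs tightening. Part~(iii) of Assumption~\ref{assum_est} only gives convergence of the scalar projections $\langle\mathcal A_Tv,w\rangle$ for fixed $w$. When $\HY$ is infinite-dimensional, that does not imply $\|\mathcal A_T\|_{\op}=O_p(1)$. As the paper does, you should instead obtain $\|T^{-1}\sum_t x_{k,t}U_t\|=O_p(1)$ for each coordinate $k$ from second-moment bounds for the jointly $L^4$-$m$-approximable process. Then use the finite dimensionality of $\HX$ to pass to the operator norm.
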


\begin{proof}[Proof of Proposition~\ref{prop1}]
The consistency of $\widehat{\Omega}_{\mathbf{x}\mathbf{x}}$ follows from the $L^4$-$m$-approximability of $\Delta\mathbf{x}_t$, the kernel and bandwidth conditions in Assumption~\ref{assum_est}\ref{assum_est4}, and standard long-run covariance consistency results for functional time series, such as Theorem~2 of \citet{horvath2013estimation}. The consistency of $\widehat{\Omega}_{\mathbf{x}\mathbf{x}}^+$ follows from the analogous one-sided argument.

It remains to justify using the fitted residuals to estimate covariance operators involving $U_t$. Observe that
\begin{equation*}
\Bpre=\widehat C_{Y\mathbf{x}}\widehat C_{\mathbf{x}\mathbf{x}}^{-1}=B+\left(\frac{1}{T}\sum_{t=1}^T\mathbf{x}_t\otimes U_t\right)\widehat C_{\mathbf{x}\mathbf{x}}^{-1}.
\end{equation*}
Under Assumption~\ref{assum_est}, $T\widehat C_{\mathbf{x}\mathbf{x}}^{-1}=O_p(1)$. Standard moment bounds for the jointly $L^4$-$m$-approximable process, together with the finite dimensionality of $\HX$, also give $T^{-1}\sum_{t=1}^T\mathbf{x}_t\otimes U_t=O_p(1)$ in operator norm. Hence, $\Bpre-B=O_p(T^{-1})$. Since $\max_{1\leq t\leq T}\|\mathbf{x}_t\|=O_p(T^{1/2})$,
\begin{equation*}
\max_{1\leq t\leq T}\|\Upre_t-U_t\|\leq\|\Bpre-B\|_{\op}\max_{1\leq t\leq T}\|\mathbf{x}_t\|=O_p(T^{-1/2}).
\end{equation*}

Let $\widehat{\Omega}_{0,U\mathbf{x}}$ denote the infeasible estimator obtained by replacing $\Upre_t$ with $U_t$ in $\widehat{\Omega}_{U\mathbf{x}}$. The preceding bound, the stationarity of $\Delta\mathbf{x}_t$, and $|\mathrm{k}(x)|\leq1$ give
\begin{equation*}
\|\widehat{\Omega}_{U\mathbf{x}}-\widehat{\Omega}_{0,U\mathbf{x}}\|_{\op}=O_p(hT^{-1/2})=o_p(1).
\end{equation*}
The infeasible estimator $\widehat{\Omega}_{0,U\mathbf{x}}$ is consistent for $\Omega_{U\mathbf{x}}$ by applying the same long-run covariance argument to $Z_t=(\Delta\mathbf{x}_t,U_t)'$. The analogous one-sided argument yields $\widehat{\Omega}_{U\mathbf{x}}^+\to_p\Omega_{U\mathbf{x}}^+$.

Let $\widehat{\Omega}_{0,UU}$ denote the infeasible estimator obtained by replacing $\Upre_t$ with $U_t$ in $\widehat{\Omega}_{UU}$. Writing $D_t=\Upre_t-U_t$ and expanding $\Upre_t\otimes\Upre_{t+j}-U_t\otimes U_{t+j}$ into its two cross terms and the term $D_t\otimes D_{t+j}$ gives
\begin{equation*}
\|\widehat{\Omega}_{UU}-\widehat{\Omega}_{0,UU}\|_{\op}=O_p(hT^{-1/2})+O_p(hT^{-1})=o_p(1).
\end{equation*}
Standard long-run covariance consistency gives $\widehat{\Omega}_{0,UU}\to_p\Omega_{UU}$, and therefore $\widehat{\Omega}_{UU}\to_p\Omega_{UU}$.

Finally, positive definiteness of $\Omega_{\mathbf{x}\mathbf{x}}$ on the finite-dimensional space $\HX$ gives $\widehat{\Omega}_{\mathbf{x}\mathbf{x}}^{-1}\to_p\Omega_{\mathbf{x}\mathbf{x}}^{-1}$. Consistency of the adjoint estimators and the continuous mapping theorem then yield $\widehat{\Omega}_{U|\mathbf{x}}\to_p\Omega_{U|\mathbf{x}}$.
\end{proof}

Using the consistent long-run covariance estimators from Proposition~\ref{prop1}, define
\begin{equation}\label{eqest01}
Z_{1,t}=Y_t-\widehat{\Omega}_{U\mathbf{x}}\widehat{\Omega}_{\mathbf{x}\mathbf{x}}^{-1}\Delta\mathbf{x}_t,\qquad
\widehat{\Upsilon}=\widehat{\Omega}_{U\mathbf{x}}^+-\widehat{\Omega}_{U\mathbf{x}}\widehat{\Omega}_{\mathbf{x}\mathbf{x}}^{-1}\widehat{\Omega}_{\mathbf{x}\mathbf{x}}^+.
\end{equation}
The resulting functional fully modified least-squares estimator, extending the approach of \citet{phillips1995fully}, is
\begin{equation*}
\widehat B=\bigl(\widehat C_{Z_1\mathbf{x}}-\widehat\Upsilon\bigr)\widehat C_{\mathbf{x}\mathbf{x}}^{-1},\qquad
\widehat C_{Z_1\mathbf{x}}=\frac{1}{T}\sum_{t=1}^T\mathbf{x}_t\otimes Z_{1,t}.
\end{equation*}

\begin{theorem}\label{thm2} 
Suppose that Assumption~\ref{assum_est} holds. Then $\widehat B$ is superconsistent, with $\|\widehat B-B\|_{\op}=O_p(T^{-1})$. Moreover, for every fixed $v\in\HX$ and $w\in\HY$, 
\begin{equation}\label{eqconv1} 
\langle T(\widehat B-B)v,w\rangle\Rightarrow\left\langle\left(\int_0^1W_{\mathbf{x}}(r)\otimes dW_{U|\mathbf{x}}(r)\right)\left(\int_0^1W_{\mathbf{x}}(r)\otimes W_{\mathbf{x}}(r)\,dr\right)^{-1}v,w\right\rangle. 
\end{equation} 
Here $W_{U|\mathbf{x}}=W_U-\Omega_{U\mathbf{x}}\Omega_{\mathbf{x}\mathbf{x}}^{-1}W_{\mathbf{x}}$ is an $\HY$-valued Brownian motion independent of $W_{\mathbf{x}}$, with covariance operator $\Omega_{U|\mathbf{x}}=\Omega_{UU}-\Omega_{U\mathbf{x}}\Omega_{\mathbf{x}\mathbf{x}}^{-1}\Omega_{\mathbf{x}U}$, where $\Omega_{\mathbf{x}U}=\Omega_{U\mathbf{x}}^*$. 
\end{theorem}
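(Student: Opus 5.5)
The plan is to write $\widehat B-B$ as an explicit sum of sample cross-moments and correction terms. Every term with a nonzero limit will then be handled by Assumption~\ref{assum_est}\ref{assum_est3}, and the kernel corrections by Proposition~\ref{prop1}. Substituting $Y_t=B(\mathbf{x}_t)+U_t$ into $Z_{1,t}$ gives $Z_{1,t}=B(\mathbf{x}_t)+U_t-\widehat\Omega_{U\mathbf{x}}\widehat\Omega_{\mathbf{x}\mathbf{x}}^{-1}\Delta\mathbf{x}_t$. Hence
\begin{equation*}
T(\widehat B-B)=\left\{\frac1T\sum_{t=1}^T\mathbf{x}_t\otimes\bigl(U_t-\widehat\Omega_{U\mathbf{x}}\widehat\Omega_{\mathbf{x}\mathbf{x}}^{-1}\Delta\mathbf{x}_t\bigr)-\widehat\Upsilon\right\}J_T^{-1},
\end{equation*}
where $J_T=T^{-2}\sum_t\mathbf{x}_t\otimes\mathbf{x}_t\Rightarrow J=\int_0^1W_{\mathbf{x}}\otimes W_{\mathbf{x}}\,dr$. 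The limit $J$ is a.s.\ invertible because $\Omega_{\mathbf{x}\mathbf{x}}$ is positive definite, so $J_T^{-1}=O_p(1)$.

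Next I will replace $\mathbf{x}_t$ by $\mathbf{x}_{t-1}+\Delta\mathbf{x}_t$ so that Assumption~\ref{assum_est}\ref{assum_est3} applies. Write $\Pi=\Omega_{U\mathbf{x}}\Omega_{\mathbf{x}\mathbf{x}}^{-1}$ and take $z=(-\Pi^*w,w)\in\mathcal H_Z$; then $\langle Z_t,z\rangle=\langle U_t-\Pi\Delta\mathbf{x}_t,w\rangle$. Assumption~\ref{assum_est}\ref{assum_est3} gives, for fixed $v,w$,
\begin{equation*}
\frac1T\sum_t\langle\mathbf{x}_{t-1},v\rangle\langle U_t-\Pi\Delta\mathbf{x}_t,w\rangle\Rightarrow\int_0^1\langle W_{\mathbf{x}},v\rangle\,d\langle W_{U|\mathbf{x}},w\rangle+\sum_{j\ge1}\mathbb E[\cdots].
\end{equation*}
The contemporaneous terms $T^{-1}\sum_t\langle\Delta\mathbf{x}_t,v\rangle\langle U_t-\Pi\Delta\mathbf{x}_t,w\rangle$ converge by the ergodic theorem for $L^4$-$m$-approximable sequences to the lag-zero covariance. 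Adding the lag-zero term to $\sum_{j\ge1}$ yields exactly $\langle(\Omega^+_{U\mathbf{x}}-\Pi\Omega^+_{\mathbf{x}\mathbf{x}})v,w\rangle=\langle\Upsilon v,w\rangle$, with the one-sided operators as defined in Section~\ref{sec_compest}. This identity is where the convention $\Omega^+=\sum_{j\ge0}$ and the orientation of $\otimes$ must be checked carefully.

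It remains to account for estimation error. The difference $(\widehat\Pi-\Pi)\,T^{-1}\sum_t\mathbf{x}_t\otimes\Delta\mathbf{x}_t$ is $o_p(1)\cdot O_p(1)$, since $T^{-1}\sum_t\mathbf{x}_t\otimes\Delta\mathbf{x}_t=O_p(1)$ by the same convergence with $z=(u,0)$ and $\widehat\Pi\to_p\Pi$ by Proposition~\ref{prop1}. Likewise $\widehat\Upsilon-\Upsilon=o_p(1)$ by Proposition~\ref{prop1} and continuous mapping. Because $\HX$ is finite-dimensional, $v$ ranges over a basis, and the convergences hold jointly with the FCLT and with $J_T$. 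This gives the joint limit of $\langle T(\widehat B-B)v,w\rangle$ as stated in \eqref{eqconv1}.

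Superconsistency in operator norm is the main obstacle, because $w$ ranges over the infinite-dimensional space $\HY$ and pointwise convergence does not suffice. Here I would bound $\|T^{-1}\sum_t\mathbf{x}_{t-1}\otimes U_t\|_{\op}$ directly instead of through \ref{assum_est3}. The plan is to use $\|\cdot\|_{\op}\le\|\cdot\|_{\mathrm{HS}}$ and to bound $\mathbb E\|T^{-1}\sum_t\mathbf{x}_{t-1}\otimes U_t\|_{\mathrm{HS}}^2$ through $L^4$-$m$-approximability moment inequalities, as in the proof of Proposition~\ref{prop1}, where this is called a standard bound. The remaining pieces $\widehat\Pi$, $\widehat\Upsilon$ and $J_T^{-1}$ are already $O_p(1)$ in operator norm, so together these give $\|\widehat B-B\|_{\op}=O_p(T^{-1})$.

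Finally, I need independence of $W_{U|\mathbf{x}}$ and $W_{\mathbf{x}}$ and the form of its covariance. The cross-covariance is $\Omega_{U\mathbf{x}}-\Pi\Omega_{\mathbf{x}\mathbf{x}}=0$, and joint Gaussianity turns zero cross-covariance into independence. The covariance of $W_{U|\mathbf{x}}$ then follows by direct computation as $\Omega_{UU}-\Omega_{U\mathbf{x}}\Omega_{\mathbf{x}\mathbf{x}}^{-1}\Omega_{\mathbf{x}U}=\Omega_{U|\mathbf{x}}$.
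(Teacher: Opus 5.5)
Your proposal is correct. The distributional part follows the paper's proof. You use the same decomposition $T(\widehat B-B)=\{T^{-1}\sum_t\mathbf{x}_t\otimes(U_t-\widehat\Pi\Delta\mathbf{x}_t)-\widehat\Upsilon\}J_T^{-1}$, and Proposition~\ref{prop1} to replace $\widehat\Pi$ and $\widehat\Upsilon$ by their limits. You then apply Assumption~\ref{assum_est}\ref{assum_est3} with the lag-zero term absorbed into the one-sided operators, use joint convergence over a basis of $\HX$ together with $J_T$, and get independence from zero cross-covariance. Your single choice $z=(-\Pi^*w,w)$ is simply the linear combination of the paper's two choices $z=(0,w)$ and $z=(\Omega_L^*w,0)$. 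Your check that the lag-zero term plus the $\sum_{j\ge1}$ terms give exactly $\langle\Upsilon v,w\rangle$ is right.

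The genuine difference is the operator-norm rate.
\begin{itemize}
\item The paper upgrades joint convergence of the scalar projections $\langle T(\widehat B-B)v,w\rangle$ to stochastic boundedness in operator norm by citing Lemma~S5.1(i) of \citet{seo2020functional}.
\item You use $\dim\HX<\infty$ to reduce $\|\cdot\|_{\op}$ to the finitely many $\HY$-norms $\|T^{-1}\sum_t x_{k,t}U_t\|$ and bound these by second moments. Every other piece is already $O_p(1)$ in operator norm: $\widehat\Pi$, $\widehat\Upsilon$, the matrix $T^{-1}\sum_t\mathbf{x}_t\otimes\Delta\mathbf{x}_t$, and $J_T^{-1}$.
\end{itemize}

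The paper's route is shorter and reuses what was just proved. However, it rests entirely on the structure of the cited lemma. In an infinite-dimensional space, convergence in distribution of every scalar projection does not by itself make the norm tight. For example, $\sqrt{T}f_{J_T}$, with $J_T$ uniform on $\{1,\ldots,T\}$ and $\{f_j\}$ orthonormal, has all projections tending to zero while its norm is $\sqrt{T}$.

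Your route is self-contained. It is exactly the ``standard bound'' the paper already invokes for $\Bpre-B=O_p(T^{-1})$ in the proof of Proposition~\ref{prop1}, so you could even cite that step directly. Its cost is the moment calculation itself. You must control the cross-covariance between $\Delta\mathbf{x}_t$ and $U_t$, which gives a nonvanishing mean term, and the fourth-order cumulants, which are summable under $L^4$-$m$-approximability. Also, only $T^{-1/2}\mathbf{x}_0=o_p(1)$ is assumed, with no moments. So the initial-value contribution $T^{-1}\|\mathbf{x}_0\|\,\|\sum_tU_t\|=o_p(1)$ must be handled in probability rather than through $\mathbb E\|\cdot\|_{\mathrm{HS}}^2$.
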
 
 
\begin{proof}[Proof of Theorem~\ref{thm2}] 
Write $\widehat{\Omega}_L=\widehat{\Omega}_{U\mathbf{x}}\widehat{\Omega}_{\mathbf{x}\mathbf{x}}^{-1}$ and $\Omega_L=\Omega_{U\mathbf{x}}\Omega_{\mathbf{x}\mathbf{x}}^{-1}$. From the definition of $\widehat B$, 
\begin{equation}\label{eqproof01} 
T(\widehat B-B)=\left[\frac{1}{T}\sum_{t=1}^T\mathbf{x}_t\otimes U_t-\widehat\Upsilon-\frac{1}{T}\sum_{t=1}^T\mathbf{x}_t\otimes\widehat{\Omega}_L\Delta\mathbf{x}_t\right]T\widehat C_{\mathbf{x}\mathbf{x}}^{-1}. 
\end{equation} 
By Proposition~\ref{prop1}, $\widehat{\Omega}_L\to_p\Omega_L$ and $\widehat\Upsilon-\Upsilon=o_p(1)$, where $\Upsilon=\Omega_{U\mathbf{x}}^+-\Omega_L\Omega_{\mathbf{x}\mathbf{x}}^+$. Since $\HX$ is finite-dimensional, Assumption~\ref{assum_est}\ref{assum_est3} also gives $T^{-1}\sum_{t=1}^T\mathbf{x}_t\otimes\Delta\mathbf{x}_t=O_p(1)$ in operator norm. Therefore, replacing $\widehat{\Omega}_L$ with $\Omega_L$ in \eqref{eqproof01} contributes only $o_p(1)$, and 
\begin{equation}\label{eqestexpan} 
T(\widehat B-B)=\left[\left(\frac{1}{T}\sum_{t=1}^T\mathbf{x}_t\otimes U_t-\Omega_{U\mathbf{x}}^+\right)-\Omega_L\left(\frac{1}{T}\sum_{t=1}^T\mathbf{x}_t\otimes\Delta\mathbf{x}_t-\Omega_{\mathbf{x}\mathbf{x}}^+\right)+o_p(1)\right]T\widehat C_{\mathbf{x}\mathbf{x}}^{-1}. 
\end{equation} 
 
For fixed $v\in\HX$ and $w\in\HY$, take $z=(0,w)'\in\mathcal H_Z$ in Assumption~\ref{assum_est}\ref{assum_est3}. Since $\mathbf{x}_t=\mathbf{x}_{t-1}+\Delta\mathbf{x}_t$, 
\begin{align} 
\frac{1}{T}\sum_{t=1}^T\langle\mathbf{x}_t,v\rangle\langle U_t,w\rangle 
&=\frac{1}{T}\sum_{t=1}^T\langle\mathbf{x}_{t-1},v\rangle\langle U_t,w\rangle+\frac{1}{T}\sum_{t=1}^T\langle\Delta\mathbf{x}_t,v\rangle\langle U_t,w\rangle\notag\\ 
&\Rightarrow\left\langle\left(\int_0^1W_{\mathbf{x}}(r)\otimes dW_U(r)\right)v,w\right\rangle+\langle\Omega_{U\mathbf{x}}^+v,w\rangle. 
\label{eqpf04} 
\end{align} 
Likewise, taking $z=(\Omega_L^*w,0)'\in\mathcal H_Z$ gives 
\begin{equation}\label{eqpf05} 
\frac{1}{T}\sum_{t=1}^T\langle\mathbf{x}_t,v\rangle\langle\Omega_L\Delta\mathbf{x}_t,w\rangle\Rightarrow\left\langle\left(\int_0^1W_{\mathbf{x}}(r)\otimes d(\Omega_LW_{\mathbf{x}}(r))\right)v,w\right\rangle+\langle\Omega_L\Omega_{\mathbf{x}\mathbf{x}}^+v,w\rangle. 
\end{equation} 
Jointly with these limits, the FCLT and continuous mapping theorem give 
\begin{equation*} 
T\widehat C_{\mathbf{x}\mathbf{x}}^{-1}v\Rightarrow\left(\int_0^1W_{\mathbf{x}}(r)\otimes W_{\mathbf{x}}(r)\,dr\right)^{-1}v. 
\end{equation*} 
Because $\HX$ is finite-dimensional, the preceding limits may be applied jointly to a basis of $\HX$ and combined with this random inverse. Combining \eqref{eqestexpan}--\eqref{eqpf05} and using $dW_{U|\mathbf{x}}=dW_U-d(\Omega_LW_{\mathbf{x}})$ yields \eqref{eqconv1}. 
 
Because the preceding convergence holds jointly over arbitrary finite collections of $v\in\HX$ and $w\in\HY$, Lemma~S5.1(i) of \citet{seo2020functional}, which follows from Theorem~3.1 in Chapter~1 of \citet{skorohod2001}, implies that $T(\widehat B-B)$ is stochastically bounded in operator norm. Hence, $\|\widehat B-B\|_{\op}=O_p(T^{-1})$. 
 
Finally, for all $r,s\in[0,1]$, the cross-covariance operator between $W_{U|\mathbf{x}}(r)$ and $W_{\mathbf{x}}(s)$ is 
\begin{equation*} 
(r\wedge s)(\Omega_{U\mathbf{x}}-\Omega_L\Omega_{\mathbf{x}\mathbf{x}})=0. 
\end{equation*} 
Since the processes are jointly Gaussian, $W_{U|\mathbf{x}}$ and $W_{\mathbf{x}}$ are independent. A direct covariance calculation gives $\Omega_{U|\mathbf{x}}=\Omega_{UU}-\Omega_{U\mathbf{x}}\Omega_{\mathbf{x}\mathbf{x}}^{-1}\Omega_{\mathbf{x}U}$. This completes the proof. 
\end{proof} 
 
\begin{remark}[Standard least-squares estimator]\label{rem_ls_limit} 
The same expansion used in the proof of Theorem~\ref{thm2} shows that, for fixed $v\in\HX$ and $w\in\HY$, the limiting distribution of the least-squares estimator $\Bpre$ contains nuisance-dependent terms. Let $\Omega_L=\Omega_{U\mathbf{x}}\Omega_{\mathbf{x}\mathbf{x}}^{-1}$, $\Upsilon=\Omega_{U\mathbf{x}}^+-\Omega_L\Omega_{\mathbf{x}\mathbf{x}}^+$, and $J=\int_0^1W_{\mathbf{x}}(r)\otimes W_{\mathbf{x}}(r)\,dr$. Then 
\begin{equation}\label{eqconv1add} 
\langle T(\Bpre-B)v,w\rangle\Rightarrow\left\langle\left(\int_0^1W_{\mathbf{x}}(r)\otimes dW_{U|\mathbf{x}}(r)\right)J^{-1}v,w\right\rangle+\mathcal G_1+\mathcal G_2, 
\end{equation} 
where 
\begin{equation*} 
\mathcal G_1=\langle\Upsilon J^{-1}v,w\rangle,\qquad \mathcal G_2=\left\langle\left(\int_0^1W_{\mathbf{x}}(r)\otimes d(\Omega_LW_{\mathbf{x}}(r))+\Omega_L\Omega_{\mathbf{x}\mathbf{x}}^+\right)J^{-1}v,w\right\rangle. 
\end{equation*} 
The subtraction of $\widehat\Upsilon$ removes the one-sided bias term $\mathcal G_1$, while the transformation from $Y_t$ to $Z_{1,t}$ removes the long-run endogeneity component $\mathcal G_2$. Thus, following the approach of \citet{phillips1995fully}, the fully modified estimator $\widehat B$ eliminates both nuisance-dependent terms and retains the centered limit in \eqref{eqconv1}. 
\end{remark} 
 
The limit in \eqref{eqconv1} is driven by the independent Brownian motions $W_{\mathbf{x}}$ and $W_{U|\mathbf{x}}$. Feasible inference therefore approximates these processes using consistent estimates of the spectral decompositions of their covariance operators $\Omega_{\mathbf{x}\mathbf{x}}$ and $\Omega_{U|\mathbf{x}}$.

\subsection{Asymptotic approximation for inference}\label{sec_app_est2}
To implement inference based on Theorem~\ref{thm2}, we approximate the distribution of the limit in \eqref{eqconv1}. As a covariance operator on $\HY$, $\Omega_{U|\mathbf{x}}$ is self-adjoint, nonnegative, and trace class, and hence compact. The finite-dimensional covariance operator $\Omega_{\mathbf{x}\mathbf{x}}$ has the same properties and is positive definite. Therefore, these operators admit the spectral decompositions \citep[p.~34]{Bosq2000} $\Omega_{U|\mathbf{x}}=\sum_{j=1}^{\infty}\lambda_jv_j\otimes v_j$ and $\Omega_{\mathbf{x}\mathbf{x}}=\sum_{j=1}^{\KX}\mu_jw_j\otimes w_j$, where $\lambda_1\geq\lambda_2\geq\cdots\geq0$, $\mu_1\geq\cdots\geq\mu_{\KX}>0$, and $\{v_j\}$ and $\{w_j\}$ are the corresponding orthonormal eigenvectors in $\HY$ and $\HX$, respectively.

Let $\{W_{1,j}\}_{j\geq1}$ and $\{W_{2,j}\}_{j=1}^{\KX}$ be two mutually independent families of independent standard scalar Brownian motions. Then
\begin{equation}\label{eqpopbm}
W_{U|\mathbf{x}}(r)=\sum_{j=1}^{\infty}\lambda_j^{1/2}v_jW_{1,j}(r),\qquad W_{\mathbf{x}}(r)=\sum_{j=1}^{\KX}\mu_j^{1/2}w_jW_{2,j}(r).
\end{equation}
For a positive integer $M$, define the truncated response Brownian motion by retaining the first $M$ covariance eigencomponents:
\begin{equation*}
W_{U|\mathbf{x}}^{(M)}(r)=\sum_{j=1}^{M}\lambda_j^{1/2}v_jW_{1,j}(r),\qquad r\in[0,1].
\end{equation*}
Thus, $M$ controls the number of response directions used in the approximation. Because $\Omega_{U|\mathbf{x}}$ is trace class, $\sum_{j\geq1}\lambda_j=\operatorname{tr}(\Omega_{U|\mathbf{x}})<\infty$. Doob's maximal inequality therefore implies that $W_{U|\mathbf{x}}^{(M)}$ converges to $W_{U|\mathbf{x}}$ in mean square under the uniform path norm as $M\to\infty$. No truncation is needed for $W_{\mathbf{x}}$, whose representation contains only $\KX$ components. If the population eigenelements were known, the distribution in \eqref{eqconv1} could therefore be approximated by replacing $W_{U|\mathbf{x}}$ with $W_{U|\mathbf{x}}^{(M)}$ and simulating its $M$ scalar Brownian components together with all $\KX$ components of $W_{\mathbf{x}}$.

For feasible implementation, we replace the population eigenelements with their sample counterparts. To control the resulting approximation error as $M$ increases with $T$, we impose the following convergence-rate conditions on the long-run covariance estimators.
\begin{assumption}\label{assumomega2}
The kernel-based long-run covariance estimators satisfy $\|\widehat{\Omega}_{\mathbf{x}\mathbf{x}}-\Omega_{\mathbf{x}\mathbf{x}}\|_{\op}=O_p(\sqrt{h/T})$, $\|\widehat{\Omega}_{U\mathbf{x}}-\Omega_{U\mathbf{x}}\|_{\op}=O_p(\sqrt{h/T})$, and $\|\widehat{\Omega}_{UU}-\Omega_{UU}\|_{\op}=O_p(\sqrt{h/T})$.
\end{assumption}
Under Assumption~\ref{assumomega2} and the positive definiteness of $\Omega_{\mathbf{x}\mathbf{x}}$, standard inverse and product perturbation arguments also give $\|\widehat{\Omega}_{U|\mathbf{x}}-\Omega_{U|\mathbf{x}}\|_{\op}=O_p(\sqrt{h/T})$. Assumption~\ref{assumomega2} is a high-level rate condition that incorporates both sampling variation and lag-window bias. Rates of this order are available under suitable weak-dependence, smoothness, and bandwidth conditions in finite-dimensional and functional time series settings \citep[see, e.g.,][]{BERKES2016150,NSS2}. 

To control estimation of the retained eigenelements and the truncation error, we impose the following spectral condition.
\begin{assumption}\label{assum4}
There exist constants $c,C>0$ and $\rho>1$ such that $\lambda_j\leq Cj^{-\rho}$ and $\lambda_j-\lambda_{j+1}\geq cj^{-\rho-1}$ for every $j\geq1$.
\end{assumption}
Let $\widehat{\Omega}_{U|\mathbf{x}}=\sum_{j=1}^{\infty}\widehat\lambda_j\widehat v_j\otimes\widehat v_j$, where $\widehat\lambda_1\geq\widehat\lambda_2\geq\cdots\geq0$. For the theoretical coupling, orient $\widehat v_j$ so that $\langle\widehat v_j,v_j\rangle\geq0$ and use the same auxiliary Brownian motions as in \eqref{eqpopbm}. With a truncation level $M=M_T$, define
\begin{equation}\label{eqwun}
\widehat W_{U|\mathbf{x}}(r)=\sum_{j=1}^{M}\widehat\lambda_j^{1/2}\widehat v_jW_{1,j}(r).
\end{equation}
The use of the same Brownian motions is only a coupling device for the consistency argument; independent copies are drawn across Monte Carlo replications in implementation. Similarly, using $\widehat{\Omega}_{\mathbf{x}\mathbf{x}}=\sum_{j=1}^{\KX}\widehat\mu_j\widehat w_j\otimes\widehat w_j$, define
\begin{equation}\label{eqwn}
\widehat W_{\mathbf{x}}(r)=\sum_{j=1}^{\KX}\widehat\mu_j^{1/2}\widehat w_jW_{2,j}(r).
\end{equation}
No truncation is required for $\widehat W_{\mathbf{x}}$ because $\HX$ is finite-dimensional. Moreover, consistency of this component follows directly from convergence of the finite-dimensional covariance operators and does not require the eigenvalues of $\Omega_{\mathbf{x}\mathbf{x}}$ to be distinct. In what follows, write $\|W\|_\infty=\sup_{0\leq r\leq1}\|W(r)\|$.

\begin{theorem}\label{thm3}
Suppose that Assumptions~\ref{assum_est}, \ref{assumomega2}, and~\ref{assum4} hold. Let $M=M_T\to\infty$. The population and plug-in Brownian paths can be constructed on a common probability space such that $\|\widehat W_{\mathbf{x}}-W_{\mathbf{x}}\|_\infty=o_p(1)$. If $M^{(\rho+3)/2}\sqrt{h/T}\to0$, then $\|\widehat W_{U|\mathbf{x}}-W_{U|\mathbf{x}}\|_\infty=o_p(1)$. Consequently, for fixed $v\in\HX$ and $w\in\HY$, define
\begin{equation}\label{eqinference}
\widehat{\mathcal Z}_{v,w}:=\left\langle\left(\int_0^1\widehat W_{\mathbf{x}}(r)\otimes d\widehat W_{U|\mathbf{x}}(r)\right)\left(\int_0^1\widehat W_{\mathbf{x}}(r)\otimes\widehat W_{\mathbf{x}}(r)\,dr\right)^{-1}v,w\right\rangle.
\end{equation}
Conditionally on the sample, the distribution of $\widehat{\mathcal Z}_{v,w}$ converges weakly in probability to the distribution of the right-hand side of \eqref{eqconv1}. Hence, it consistently approximates the limiting distribution of $\langle T(\widehat B-B)v,w\rangle$.
\end{theorem}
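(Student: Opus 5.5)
The argument runs in three stages. First, on a common probability space carrying the auxiliary families $\{W_{1,j}\}$ and $\{W_{2,j}\}$ used in \eqref{eqpopbm}, \eqref{eqwun} and \eqref{eqwn}, I establish the two path approximations. Second, I transfer them to the functional $\widehat{\mathcal Z}_{v,w}$ by continuity of the map defining the limit in \eqref{eqconv1}. Third, I convert the coupling into convergence in probability of the conditional law.

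For the predictor component there is no truncation. I write $\widehat W_{\mathbf{x}}(r)=\widehat{\Omega}_{\mathbf{x}\mathbf{x}}^{1/2}G(r)$ and $W_{\mathbf{x}}(r)=\Omega_{\mathbf{x}\mathbf{x}}^{1/2}G(r)$, where $G$ is a standard $\KX$-dimensional Brownian motion. To avoid requiring distinct eigenvalues, I build $G$ from the $W_{2,j}$ using the population eigenbasis. I also replace $\widehat w_j$ by a rotated eigenbasis of $\widehat{\Omega}_{\mathbf{x}\mathbf{x}}$; this preserves the conditional law, since only $\widehat{\Omega}_{\mathbf{x}\mathbf{x}}^{1/2}$ enters. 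Continuity of the matrix square root and Proposition~\ref{prop1} then give
\[
\|\widehat W_{\mathbf{x}}-W_{\mathbf{x}}\|_\infty\le\|\widehat{\Omega}_{\mathbf{x}\mathbf{x}}^{1/2}-\Omega_{\mathbf{x}\mathbf{x}}^{1/2}\|_{\op}\sup_r\|G(r)\|=o_p(1).
\]

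For the response component I split the error as
\[
\widehat W_{U|\mathbf{x}}-W_{U|\mathbf{x}}=\sum_{j\le M}\bigl(\widehat\lambda_j^{1/2}\widehat v_j-\lambda_j^{1/2}v_j\bigr)W_{1,j}-\sum_{j>M}\lambda_j^{1/2}v_jW_{1,j}.
\]
The tail term has mean-square sup norm bounded by $4\sum_{j>M}\lambda_j\lesssim M^{1-\rho}\to0$, by Doob's inequality and Assumption~\ref{assum4}. For the head term I first use $|\widehat\lambda_j-\lambda_j|\le\|\widehat{\Omega}_{U|\mathbf{x}}-\Omega_{U|\mathbf{x}}\|_{\op}=O_p(\sqrt{h/T})$, which follows from Assumption~\ref{assumomega2} and perturbation. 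I then use the eigenvector bound of Lemma~4.3 of \citet{Bosq2000}, $\|\widehat v_j-v_j\|\le 2\sqrt2\,\delta_j^{-1}\|\widehat{\Omega}_{U|\mathbf{x}}-\Omega_{U|\mathbf{x}}\|_{\op}$, where $\delta_j\ge cj^{-\rho-1}$ is the spectral gap. With $\lambda_j^{1/2}\lesssim j^{-\rho/2}$ this gives
\[
\bigl\|\widehat\lambda_j^{1/2}\widehat v_j-\lambda_j^{1/2}v_j\bigr\|\le|\widehat\lambda_j^{1/2}-\lambda_j^{1/2}|+\lambda_j^{1/2}\|\widehat v_j-v_j\|\lesssim\bigl(h/T\bigr)^{1/4}+j^{1+\rho/2}\sqrt{h/T}.
\]
I bound the sum crudely by $\sum_{j\le M}\|\cdot\|\,\sup_r|W_{1,j}(r)|$. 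The first contribution is $O_p(M(h/T)^{1/4})$, or can be handled via $|\widehat\lambda_j^{1/2}-\lambda_j^{1/2}|\le|\widehat\lambda_j-\lambda_j|/\lambda_j^{1/2}$. The second contribution is $O_p(M^{2+\rho/2}\sqrt{h/T})$. This is the main obstacle: the rate must match the stated condition $M^{(\rho+3)/2}\sqrt{h/T}\to0$. A cruder triangle inequality overshoots, so I would instead bound the second moment using independence of the $W_{1,j}$. By Doob's inequality, $\mathbb E\|\sum_{j\le M}a_jW_{1,j}\|_\infty^2\le4\sum_{j\le M}\|a_j\|^2$ for deterministic $a_j$, applied conditionally because the estimates are independent of the auxiliary Brownian motions. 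This gives $\sum_{j\le M}j^{2+\rho}(h/T)\lesssim M^{\rho+3}h/T$, exactly the square of the stated condition. The eigenvalue part is handled likewise via $\sum_{j\le M}|\widehat\lambda_j-\lambda_j|^2/\lambda_j$, choosing the gap-based bound $\lambda_j\gtrsim$ partial sums of gaps, $\lambda_j\ge c' j^{-\rho}$. The condition then also covers this part, since $M^{\rho+1}h/T\le M^{\rho+3}h/T$.

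Finally, the map $(W_{\mathbf{x}},W_{U|\mathbf{x}})\mapsto$ the right-hand side of \eqref{eqconv1} is not continuous in the sup norm because of the stochastic integral. I therefore use the independence structure. Conditionally on the $\mathbf{x}$-paths, $\langle\int\widehat W_{\mathbf{x}}\otimes d\widehat W_{U|\mathbf{x}}\,a,w\rangle$ is a Gaussian Itô integral. By the Itô isometry its $L^2$ distance from the population analogue is controlled by three quantities: $\|\widehat W_{\mathbf{x}}-W_{\mathbf{x}}\|_\infty$, the trace-norm error $\sum_{j\le M}\|\widehat\lambda_j^{1/2}\widehat v_j-\lambda_j^{1/2}v_j\|^2$, and the truncation tail $\sum_{j>M}\lambda_j$. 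All three were shown to vanish in the previous step. The inverse $(\int\widehat W_{\mathbf{x}}\otimes\widehat W_{\mathbf{x}})^{-1}$ converges by continuity and the almost sure invertibility of $J$, taking the finite-dimensional $a=J^{-1}v$ coordinatewise. This yields $\widehat{\mathcal Z}_{v,w}-\mathcal Z_{v,w}=o_p(1)$ on the coupling space. This in turn gives weak convergence in probability of the conditional law, via bounded-Lipschitz test functions and dominated convergence. Combining this with Theorem~\ref{thm2} completes the argument.
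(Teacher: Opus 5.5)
Your proposal is correct and follows essentially the same route as the paper's proof: the square-root coupling $\widehat{\Omega}_{\mathbf{x}\mathbf{x}}^{1/2}G$ versus $\Omega_{\mathbf{x}\mathbf{x}}^{1/2}G$ for the predictor path, and the coupled eigen-expansion of $W_{U|\mathbf{x}}$ controlled through Bosq's eigenvector perturbation lemma and the gap-based lower bound $\lambda_j\gtrsim j^{-\rho}$. The remaining steps also match: the conditional Doob bound yields $\delta_T^2M^{\rho+3}+\delta_T^2M^{\rho+1}+M^{1-\rho}$, and the It\^{o}-isometry split handles the stochastic integral. The differences are minor. You treat the random inverse coordinatewise in a fixed basis of $\HX$, whereas the paper bounds $\|\widehat S\|_{\op}=O_p(1)$, and you spell out the bounded-Lipschitz argument that turns the coupling into conditional weak convergence in probability, which the paper only asserts.
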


\begin{proof}[Proof of Theorem~\ref{thm3}]
Let $\delta_T=\sqrt{h/T}$. Assumption~\ref{assumomega2}, the positive definiteness of $\Omega_{\mathbf{x}\mathbf{x}}$, and standard inverse and product perturbation bounds give
\begin{equation*}
\|\widehat{\Omega}_{\mathbf{x}\mathbf{x}}^{-1}-\Omega_{\mathbf{x}\mathbf{x}}^{-1}\|_{\op}=O_p(\delta_T),\qquad \|\widehat{\Omega}_{U|\mathbf{x}}-\Omega_{U|\mathbf{x}}\|_{\op}=O_p(\delta_T).
\end{equation*}
We first consider the $\HX$-valued Brownian motion. Suppose initially that the eigenvalues of $\Omega_{\mathbf{x}\mathbf{x}}$ are distinct. After choosing the signs of $\widehat w_j$ appropriately, Lemmas~4.2 and~4.3 of \citet{Bosq2000} give $|\widehat\mu_j-\mu_j|=O_p(\delta_T)$ and $\|\widehat w_j-w_j\|=O_p(\delta_T)$ for each $j$. Coupling \eqref{eqpopbm} and \eqref{eqwn} using the same scalar Brownian motions gives
\begin{equation*}
W_{\mathbf{x}}(r)-\widehat W_{\mathbf{x}}(r)=\sum_{j=1}^{\KX}\mu_j^{1/2}(w_j-\widehat w_j)W_{2,j}(r)+\sum_{j=1}^{\KX}(\mu_j^{1/2}-\widehat\mu_j^{1/2})\widehat w_jW_{2,j}(r).
\end{equation*}
Since $\KX$ is fixed and $\mu_{\KX}>0$, both sums are $O_p(\delta_T)$ under the uniform path norm. Hence, $\|\widehat W_{\mathbf{x}}-W_{\mathbf{x}}\|_\infty=o_p(1)$. If $\Omega_{\mathbf{x}\mathbf{x}}$ has repeated eigenvalues, its individual eigenvectors within the corresponding eigenspaces need not be consistently identified. The conditional distribution of the process in \eqref{eqwn}, however, depends only on $\widehat{\Omega}_{\mathbf{x}\mathbf{x}}$. We may therefore use the distributionally equivalent square-root coupling $W_{\mathbf{x}}=\Omega_{\mathbf{x}\mathbf{x}}^{1/2}B_{\mathbf{x}}$ and $\widehat W_{\mathbf{x}}=\widehat{\Omega}_{\mathbf{x}\mathbf{x}}^{1/2}B_{\mathbf{x}}$, 
where $B_{\mathbf{x}}$ is a common standard $\KX$-dimensional Brownian motion independent of the sample and of the auxiliary Brownian motions used to construct $W_{U|\mathbf{x}}$. Positive definiteness of $\Omega_{\mathbf{x}\mathbf{x}}$ and operator-norm consistency of $\widehat{\Omega}_{\mathbf{x}\mathbf{x}}$ imply
\begin{equation*}
\|\widehat{\Omega}_{\mathbf{x}\mathbf{x}}^{1/2}-\Omega_{\mathbf{x}\mathbf{x}}^{1/2}\|_{\op}=O_p(\delta_T).
\end{equation*}
It follows that
\begin{equation*}
\|\widehat W_{\mathbf{x}}-W_{\mathbf{x}}\|_\infty\leq\|\widehat{\Omega}_{\mathbf{x}\mathbf{x}}^{1/2}-\Omega_{\mathbf{x}\mathbf{x}}^{1/2}\|_{\op}\|B_{\mathbf{x}}\|_\infty=o_p(1).
\end{equation*}
Because this coupling preserves the conditional distribution of \eqref{eqwn}, the first conclusion holds without requiring distinct eigenvalues.

We next consider $W_{U|\mathbf{x}}$. Let $\mathcal F_T$ be the $\sigma$-field generated by the observed sample and write $\mathbb E^*(\cdot)=\mathbb E(\cdot\mid\mathcal F_T)$ for expectation over the auxiliary Brownian motions. For $j\leq M$, define
\begin{equation*}
c_{j,T}=\widehat\lambda_j^{1/2}\widehat v_j-\lambda_j^{1/2}v_j,
\end{equation*}
and set $c_{j,T}=-\lambda_j^{1/2}v_j$ for $j>M$. Then
\begin{equation*}
\widehat W_{U|\mathbf{x}}-W_{U|\mathbf{x}}=\sum_{j=1}^{\infty}c_{j,T}W_{1,j}.
\end{equation*}
Choose the signs of $\widehat v_j$ so that $\langle\widehat v_j,v_j\rangle\geq0$. Assumption~\ref{assum4} and Lemma~4.3 of \citet{Bosq2000} give, uniformly over $1\leq j\leq M$,
\begin{equation*}
\|\widehat v_j-v_j\|\leq Cj^{\rho+1}\|\widehat{\Omega}_{U|\mathbf{x}}-\Omega_{U|\mathbf{x}}\|_{\op}=O_p(\delta_Tj^{\rho+1}).
\end{equation*}
It follows from $\lambda_j\leq Cj^{-\rho}$ that
\begin{equation*}
\sum_{j=1}^{M}\lambda_j\|\widehat v_j-v_j\|^2=O_p(\delta_T^2M^{\rho+3}).
\end{equation*}
Moreover, since $\lambda_j\to0$ and $\lambda_j-\lambda_{j+1}\geq cj^{-\rho-1}$,
\begin{equation*}
\lambda_j=\sum_{k=j}^{\infty}(\lambda_k-\lambda_{k+1})\geq c_0j^{-\rho}
\end{equation*}
for some $c_0>0$. Lemma~4.2 of \citet{Bosq2000} gives $|\widehat\lambda_j-\lambda_j|\leq\|\widehat{\Omega}_{U|\mathbf{x}}-\Omega_{U|\mathbf{x}}\|_{\op}=O_p(\delta_T)$ uniformly in $j$. Therefore,
\begin{equation*}
|\widehat\lambda_j^{1/2}-\lambda_j^{1/2}|^2\leq\frac{|\widehat\lambda_j-\lambda_j|^2}{\lambda_j}=O_p(\delta_T^2j^\rho),
\end{equation*}
and hence
\begin{equation*}
\sum_{j=1}^{M}|\widehat\lambda_j^{1/2}-\lambda_j^{1/2}|^2=O_p(\delta_T^2M^{\rho+1}).
\end{equation*}
Finally, the eigenvalue decay condition gives $
\sum_{j>M}\lambda_j=O(M^{1-\rho}).$ 
Combining these bounds yields
\begin{equation*}
\sum_{j=1}^{\infty}\|c_{j,T}\|^2=O_p\!\left(\delta_T^2M^{\rho+3}+\delta_T^2M^{\rho+1}+M^{1-\rho}\right)=o_p(1),
\end{equation*}
where the final equality follows from $M\to\infty$, $\rho>1$, and $M^{(\rho+3)/2}\delta_T\to0$. The conditional maximal inequality now gives
\begin{equation*}
\mathbb E^*\|\widehat W_{U|\mathbf{x}}-W_{U|\mathbf{x}}\|_\infty^2\leq4\sum_{j=1}^{\infty}\|c_{j,T}\|^2=o_p(1).
\end{equation*}
The conditional Markov inequality therefore implies $\|\widehat W_{U|\mathbf{x}}-W_{U|\mathbf{x}}\|_\infty=o_p(1)$.

It remains to establish convergence of the stochastic-integral functional. Define
\begin{equation*}
\widehat S=\int_0^1\widehat W_{\mathbf{x}}(r)\otimes d\widehat W_{U|\mathbf{x}}(r),\qquad S=\int_0^1W_{\mathbf{x}}(r)\otimes dW_{U|\mathbf{x}}(r),
\end{equation*}
and
\begin{equation*}
\widehat J=\int_0^1\widehat W_{\mathbf{x}}(r)\otimes\widehat W_{\mathbf{x}}(r)\,dr,\qquad J=\int_0^1W_{\mathbf{x}}(r)\otimes W_{\mathbf{x}}(r)\,dr.
\end{equation*}
Uniform path convergence gives $\|\widehat J-J\|_{\op}=o_p(1)$. Since $J$ is almost surely invertible, $\|\widehat J^{-1}-J^{-1}\|_{\op}=o_p(1)$.
Set $a=J^{-1}v$. We decompose
\begin{align*}
\langle(\widehat S-S)a,w\rangle&=\int_0^1\langle\widehat W_{\mathbf{x}}(r)-W_{\mathbf{x}}(r),a\rangle\,d\langle W_{U|\mathbf{x}}(r),w\rangle+\int_0^1\langle\widehat W_{\mathbf{x}}(r),a\rangle\,d\langle\widehat W_{U|\mathbf{x}}(r)-W_{U|\mathbf{x}}(r),w\rangle \\&=:I_{1,T}+I_{2,T}.
\end{align*}
Conditionally on the sample and the auxiliary $\HX$-valued Brownian paths, $a$ and the scalar integrands are fixed, while $\{W_{1,j}\}_{j\geq1}$ remain independent standard Brownian motions. The \Ito{} isometry gives $\mathbb E^*(I_{1,T}^2)\leq\langle\Omega_{U|\mathbf{x}}w,w\rangle\|a\|^2\|\widehat W_{\mathbf{x}}-W_{\mathbf{x}}\|_\infty^2=o_p(1)$ 
and
 $\mathbb E^*(I_{2,T}^2)\leq\left(\int_0^1\langle\widehat W_{\mathbf{x}}(r),a\rangle^2\,dr\right)\sum_{j=1}^{\infty}\langle c_{j,T},w\rangle^2\leq\|a\|^2\|\widehat W_{\mathbf{x}}\|_\infty^2\|w\|^2\sum_{j=1}^{\infty}\|c_{j,T}\|^2=o_p(1).$ 
Since $\|a\|$ and $\|\widehat W_{\mathbf{x}}\|_\infty$ are stochastically bounded, the conditional Markov inequality yields $I_{1,T}=o_p(1)$ and $I_{2,T}=o_p(1)$. Thus, $\langle(\widehat S-S)J^{-1}v,w\rangle=o_p(1).$ 
The stated rate condition also implies $M\delta_T=o(1)$. Consequently,
\begin{equation*}
\sum_{j=1}^{M}\widehat\lambda_j\leq\operatorname{tr}(\Omega_{U|\mathbf{x}})+M\|\widehat{\Omega}_{U|\mathbf{x}}-\Omega_{U|\mathbf{x}}\|_{\op}=O_p(1).
\end{equation*}
The \Ito{} isometry and the finite dimensionality of $\HX$ therefore give $\|\widehat S\|_{\op}=O_p(1)$. It follows that
\begin{equation*}
\langle\widehat S\widehat J^{-1}v-SJ^{-1}v,w\rangle=\langle(\widehat S-S)J^{-1}v,w\rangle+\langle\widehat S(\widehat J^{-1}-J^{-1})v,w\rangle=o_p(1).
\end{equation*}
The couplings used above preserve the conditional joint distribution of the plug-in Brownian paths. Hence, conditionally on the sample, the distribution of $\widehat{\mathcal Z}_{v,w}$ converges weakly in probability to that of $\langle SJ^{-1}v,w\rangle$. By Theorem~\ref{thm2}, the latter is the limiting distribution of $\langle T(\widehat B-B)v,w\rangle$. This completes the proof.
\end{proof}
Theorem~\ref{thm3} establishes that, for fixed $v\in\HX$ and $w\in\HY$, the conditional distribution of the feasible quantity in \eqref{eqinference} consistently approximates the limiting distribution in \eqref{eqconv1}. The approximation replaces the population eigenelements of $\Omega_{U|\mathbf{x}}$ and $\Omega_{\mathbf{x}\mathbf{x}}$ with their sample counterparts $\{(\widehat\lambda_j,\widehat v_j)\}_{j=1}^M$ and $\{(\widehat\mu_j,\widehat w_j)\}_{j=1}^{\KX}$. For each Monte Carlo replication $\ei$, we simulate independent standard scalar Brownian motions $\{W_{1,j,(\ei)}\}_{j=1}^M$ and $\{W_{2,j,(\ei)}\}_{j=1}^{\KX}$ and evaluate \eqref{eqinference}. Let $\widehat q_\tau(v,w)$ denote the empirical $\tau$ quantile of the resulting Monte Carlo draws. The simulation error can be made small by using sufficiently many replications. This procedure provides inference for fixed projections of the functional coefficients. In particular, taking $v=e_j$ and $w=w_k$, the quantiles $\widehat q_{\alpha/2}(e_j,w_k)$ and $\widehat q_{1-\alpha/2}(e_j,w_k)$ yield the $(1-\alpha)$ local-average confidence interval in \eqref{eqlocalci}.

\section{Inference for the model with an intercept}\label{sec_app_det}
For empirical applications, we allow the model to include an intercept. We briefly describe how the preceding estimation and inference results extend to this specification. Let $\widetilde Y_t=Y_t-\bar Y_T$ and $\widetilde{\mathbf{x}}_t=\mathbf{x}_t-\bar{\mathbf{x}}_T$. Demeaning removes the intercept while leaving the slope operator $B$ unchanged. The procedures developed in Section~\ref{sec_just_1} can therefore be applied to $(\widetilde Y_t,\widetilde{\mathbf{x}}_t)$. The sample moment operators used to estimate $B$ are formed from these demeaned variables, while the corresponding asymptotic representations account for the centering of the integrated regressor process, as detailed below.

\subsection{Proposed estimator and asymptotic properties}\label{sec_app_det2}
The preliminary least-squares residuals are $\Upre_t=\widetilde Y_t-\Bpre(\widetilde{\mathbf{x}}_t)$, where $\Bpre=\widehat C_{Y\mathbf{x}}\widehat C_{\mathbf{x}\mathbf{x}}^{-1}$ is computed from the demeaned variables. The long-run covariance estimators $\widehat{\Omega}_{\mathbf{x}\mathbf{x}}$, $\widehat{\Omega}_{\mathbf{x}\mathbf{x}}^+$, $\widehat{\Omega}_{U\mathbf{x}}$, $\widehat{\Omega}_{U\mathbf{x}}^+$, and $\widehat{\Omega}_{UU}$ are computed as in \eqref{eqsample1}--\eqref{eqsample2} and \eqref{eq_lrv_UU}, using $\Upre_t$ in place of $U_t$ wherever $U_t$ enters the corresponding formula. Since $\Delta\widetilde{\mathbf{x}}_t=\Delta\mathbf{x}_t$, the differenced regressor is unaffected by demeaning.
Define $Z_{1,t}=\widetilde Y_t-\widehat{\Omega}_{U\mathbf{x}}\widehat{\Omega}_{\mathbf{x}\mathbf{x}}^{-1}\Delta\mathbf{x}_t$ and $\widehat\Upsilon=\widehat{\Omega}_{U\mathbf{x}}^+-\widehat{\Omega}_{U\mathbf{x}}\widehat{\Omega}_{\mathbf{x}\mathbf{x}}^{-1}\widehat{\Omega}_{\mathbf{x}\mathbf{x}}^+$. The fully modified estimator of $B$ under the intercept specification is
\begin{equation*}
\widehat B=\bigl(\widehat C_{Z_1\mathbf{x}}-\widehat\Upsilon\bigr)\widehat C_{\mathbf{x}\mathbf{x}}^{-1},\qquad \widehat C_{Z_1\mathbf{x}}=\frac{1}{T}\sum_{t=1}^T\widetilde{\mathbf{x}}_t\otimes Z_{1,t}.
\end{equation*}
We impose the following conditions for the model with an intercept. Let $\mathcal H_Z=\HX\times\HY$, equipped with the product inner product $\langle\cdot,\cdot\rangle_{\mathcal H_Z}$, and write $Z_t=(\Delta\mathbf{x}_t,U_t)'$ and $W_Z=(W_{\mathbf{x}},W_U)'$. Define $\bar W_{\mathbf{x}}=\int_0^1W_{\mathbf{x}}(r)\,dr$ and $W_{\mathbf{x}}^c(r)=W_{\mathbf{x}}(r)-\bar W_{\mathbf{x}}$. In part~\ref{assum_app_int3} below, the centered stochastic integral is understood as the corresponding left-endpoint-sum limit, equivalently,
\begin{equation*}
\int_0^1\langle W_{\mathbf{x}}^c(r),v\rangle\,d\langle W_Z(r),z\rangle_{\mathcal H_Z}:=\int_0^1\langle W_{\mathbf{x}}(r),v\rangle\,d\langle W_Z(r),z\rangle_{\mathcal H_Z}-\langle\bar W_{\mathbf{x}},v\rangle\langle W_Z(1),z\rangle_{\mathcal H_Z}.
\end{equation*}
\begin{assumption}[Conditions for estimation with an intercept]\label{assum_app_int}
The following conditions hold:
\begin{enumerate}[label=(\roman*)]
\item\label{assum_app_int1} The model $Y_t=\beta_0+B(\mathbf{x}_t)+U_t$ holds for some $\beta_0\in\HY$, where $\mathbf{x}_t=\mathbf{x}_0+\sum_{s=1}^t\Delta\mathbf{x}_s$, $T^{-1/2}\mathbf{x}_0=o_p(1)$, and $U_t$ is stationary. The long-run covariance matrix $\Omega_{\mathbf{x}\mathbf{x}}$ is positive definite, so $\mathbf{x}_t$ has full nonstationary rank and is not within-cointegrated. Thus, $Y_t$ and $\mathbf{x}_t$ are between-cointegrated.
\item\label{assum_app_int2} The joint process $Z_t=(\Delta\mathbf{x}_t,U_t)'$ satisfies the conditions in Assumption~\ref{assum_est}\ref{assum_est2}, with Brownian limit $W_Z=(W_{\mathbf{x}},W_U)'$.
\item\label{assum_app_int3} For any finite collection of pairs $(v,z)\in\HX\times\mathcal H_Z$, the following convergence holds jointly with the FCLT in part~\ref{assum_app_int2}:
\begin{equation*}
\frac{1}{T}\sum_{t=1}^T\langle\mathbf{x}_{t-1}-\bar{\mathbf{x}}_T,v\rangle\langle Z_t,z\rangle_{\mathcal H_Z}\Rightarrow\int_0^1\langle W_{\mathbf{x}}^c(r),v\rangle\,d\langle W_Z(r),z\rangle_{\mathcal H_Z}+\sum_{j=1}^{\infty}\mathbb E\!\left[\langle\Delta\mathbf{x}_t,v\rangle\langle Z_{t+j},z\rangle_{\mathcal H_Z}\right].
\end{equation*}
\item\label{assum_app_int4} The bandwidth $h$ and kernel $\mathrm{k}$ satisfy Assumption~\ref{assum_test_kernel}.
\end{enumerate}
\end{assumption}
Assumption~\ref{assum_app_int} is the intercept analogue of Assumption~\ref{assum_est}. The principal change is that the integrated regressor path is replaced by its centered limit $W_{\mathbf{x}}^c$. Under these conditions, the argument used to prove Proposition~\ref{prop1} applies to residuals computed from the demeaned variables and gives the same operator-norm consistency results for the long-run covariance estimators.
Although demeaning replaces $U_t$ with $\widetilde U_t=U_t-\bar U_T$, it does not replace the error Brownian motion in the slope limit with its centered counterpart. Indeed, $\sum_{t=1}^T\widetilde{\mathbf{x}}_t=0$ implies
\begin{equation*}
\sum_{t=1}^T\widetilde{\mathbf{x}}_t\otimes(U_t-\bar U_T)=\sum_{t=1}^T\widetilde{\mathbf{x}}_t\otimes U_t,
\end{equation*}
so demeaning $U_t$ has no effect on the relevant cross-product. Consequently, the fully modified limit is driven by $W_{U|\mathbf{x}}$ rather than a centered version of it.
\begin{theorem}\label{thmapp2}
Suppose that Assumption~\ref{assum_app_int} holds. Then $\widehat B$ is superconsistent, with $\|\widehat B-B\|_{\op}=O_p(T^{-1})$. Moreover, for every fixed $v\in\HX$ and $w\in\HY$,
\begin{equation}\label{eqconv1app}
\langle T(\widehat B-B)v,w\rangle\Rightarrow\left\langle\left(\int_0^1W_{\mathbf{x}}^c(r)\otimes dW_{U|\mathbf{x}}(r)\right)\left(\int_0^1W_{\mathbf{x}}^c(r)\otimes W_{\mathbf{x}}^c(r)\,dr\right)^{-1}v,w\right\rangle.
\end{equation}
Here $W_{U|\mathbf{x}}=W_U-\Omega_{U\mathbf{x}}\Omega_{\mathbf{x}\mathbf{x}}^{-1}W_{\mathbf{x}}$ is independent of $W_{\mathbf{x}}$ and hence also of $W_{\mathbf{x}}^c$.
\end{theorem}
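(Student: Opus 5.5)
The plan mirrors the proof of Theorem~\ref{thm2}, with the demeaned regressor $\widetilde{\mathbf{x}}_t=\mathbf{x}_t-\bar{\mathbf{x}}_T$ in place of $\mathbf{x}_t$. First I substitute the demeaned model $\widetilde Y_t=B(\widetilde{\mathbf{x}}_t)+\widetilde U_t$ into $\widehat B$. Using $\sum_t\widetilde{\mathbf{x}}_t=0$, which replaces $\widetilde U_t$ by $U_t$ in the cross-product, this gives
\begin{equation*}
T(\widehat B-B)=\Bigl[\frac{1}{T}\sum_{t=1}^T\widetilde{\mathbf{x}}_t\otimes U_t-\widehat\Upsilon-\frac{1}{T}\sum_{t=1}^T\widetilde{\mathbf{x}}_t\otimes\widehat\Omega_L\Delta\mathbf{x}_t\Bigr]T\widehat C_{\mathbf{x}\mathbf{x}}^{-1},
\end{equation*}
with $\widehat\Omega_L=\widehat\Omega_{U\mathbf{x}}\widehat\Omega_{\mathbf{x}\mathbf{x}}^{-1}$.

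Next I need consistency of the covariance estimators. The remark after Assumption~\ref{assum_app_int} says the proof of Proposition~\ref{prop1} carries over to demeaned residuals. The key bound is $\max_t\|\Upre_t-\widetilde U_t\|=O_p(T^{-1/2})$. Beyond that, one only needs that $\bar U_T=O_p(T^{-1/2})$ shifts each sample autocovariance by $O_p(T^{-1/2})$, so the total change over $h$ lags is $O_p(h/\sqrt T)=o_p(1)$. Then $\widehat\Omega_L\to_p\Omega_L$ and $\widehat\Upsilon\to_p\Upsilon$. Since $T^{-1}\sum_t\widetilde{\mathbf{x}}_t\otimes\Delta\mathbf{x}_t=O_p(1)$ by part~\ref{assum_app_int3}, I can replace $\widehat\Omega_L$ by $\Omega_L$ at cost $o_p(1)$. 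This gives the analogue of \eqref{eqestexpan} with centered regressors.

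For the limits, I write $\widetilde{\mathbf{x}}_t=(\mathbf{x}_{t-1}-\bar{\mathbf{x}}_T)+\Delta\mathbf{x}_t$. For fixed $v,w$, part~\ref{assum_app_int3} with $z=(0,w)'$ and $z=(\Omega_L^*w,0)'$, together with a law of large numbers for $T^{-1}\sum_t\Delta\mathbf{x}_t\otimes Z_t$ (which adds the lag-zero term and completes the one-sided sums $\Omega^+$), gives
\begin{equation*}
\Bigl\langle\Bigl(\int_0^1 W^c_{\mathbf{x}}\otimes dW_U\Bigr)v,w\Bigr\rangle+\langle\Omega^+_{U\mathbf{x}}v,w\rangle
\end{equation*}
and the corresponding expression with $\Omega_LW_{\mathbf{x}}$ and $\Omega_L\Omega^+_{\mathbf{x}\mathbf{x}}$. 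The $\Omega^+$ terms cancel exactly against $\Upsilon=\Omega^+_{U\mathbf{x}}-\Omega_L\Omega^+_{\mathbf{x}\mathbf{x}}$, and the Brownian parts combine into $\int W^c_{\mathbf{x}}\otimes dW_{U|\mathbf{x}}$. Here I use the convention that the centered integral is linear in the integrator, so it passes through $W_U-\Omega_LW_{\mathbf{x}}$. Jointly, the FCLT and continuous mapping give $T^{-2}\sum_t\widetilde{\mathbf{x}}_t\otimes\widetilde{\mathbf{x}}_t\Rightarrow J^c=\int_0^1 W^c_{\mathbf{x}}\otimes W^c_{\mathbf{x}}\,dr$. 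This is a.s.\ invertible because $\Omega_{\mathbf{x}\mathbf{x}}>0$, as noted in Section~\ref{sec_app_test_1}, so $T\widehat C_{\mathbf{x}\mathbf{x}}^{-1}v\Rightarrow (J^c)^{-1}v$. Applying this on a basis of the finite-dimensional $\HX$ and combining the pieces yields \eqref{eqconv1app}.

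Superconsistency then follows as in Theorem~\ref{thm2}. The finite-collection joint convergence, via Lemma~S5.1(i) of \citet{seo2020functional}, gives tightness of $T(\widehat B-B)$ in operator norm. Independence of $W_{U|\mathbf{x}}$ and $W_{\mathbf{x}}$ is unchanged: their cross-covariance is $(r\wedge s)(\Omega_{U\mathbf{x}}-\Omega_L\Omega_{\mathbf{x}\mathbf{x}})=0$, and the processes are jointly Gaussian. Since $W^c_{\mathbf{x}}$ is a measurable functional of $W_{\mathbf{x}}$, independence from $W^c_{\mathbf{x}}$ follows.

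The main obstacle is the bookkeeping of the centering term. One must check that $-\langle\bar W_{\mathbf{x}},v\rangle\langle W_Z(1),z\rangle$, which is built into the centered integral in part~\ref{assum_app_int3}, is consistent with using uncentered $U_t$ and $\Delta\mathbf{x}_t$ in the cross-products. This is the point that produces an uncentered $W_{U|\mathbf{x}}$ rather than a bridge. It also requires that the lag-zero contribution $T^{-1}\sum_t\Delta\mathbf{x}_t\otimes Z_t$ is unaffected by demeaning.
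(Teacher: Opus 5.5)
Your proposal is correct and follows the paper's proof. In both, the fully modified expansion from Theorem~\ref{thm2} is applied to the demeaned data, with $\sum_t\widetilde{\mathbf{x}}_t=0$ keeping $U_t$ uncentered in the cross-products. Assumption~\ref{assum_app_int}\ref{assum_app_int3} then supplies the centered cross-moment limits, whose $\Omega^+$ terms cancel against $\Upsilon$, and the tightness and independence arguments are reused. You spell out details the paper leaves implicit, such as the effect of the $\bar U_T$ shift on the residual-based covariance estimators and the lag-zero term.
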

\begin{proof}[Proof of Theorem~\ref{thmapp2}]
The argument follows the proof of Theorem~\ref{thm2}, applied to the demeaned variables $\widetilde Y_t=Y_t-\bar Y_T$ and $\widetilde{\mathbf{x}}_t=\mathbf{x}_t-\bar{\mathbf{x}}_T$. Demeaning removes the intercept without changing the slope operator $B$, and
\begin{equation*}
T^{-1/2}\widetilde{\mathbf{x}}_{\lfloor T\cdot\rfloor}\Rightarrow W_{\mathbf{x}}^c(\cdot),\qquad T^{-1}\widehat C_{\mathbf{x}\mathbf{x}}\Rightarrow\int_0^1W_{\mathbf{x}}^c(r)\otimes W_{\mathbf{x}}^c(r)\,dr.
\end{equation*}
The limiting operator on the right is almost surely invertible by the positive definiteness of $\Omega_{\mathbf{x}\mathbf{x}}$. Moreover, $\sum_{t=1}^T\widetilde{\mathbf{x}}_t=0$, so replacing $U_t$ with $U_t-\bar U_T$ has no effect on the relevant cross-products. Assumption~\ref{assum_app_int}\ref{assum_app_int3} therefore gives the centered analogues of the cross-moment limits in \eqref{eqpf04}--\eqref{eqpf05}. Substituting these limits into the fully modified expansion \eqref{eqestexpan} yields \eqref{eqconv1app}, with $W_{\mathbf{x}}$ replaced by $W_{\mathbf{x}}^c$ while the stochastic integral remains driven by $dW_{U|\mathbf{x}}$. The same stochastic-boundedness argument used in the proof of Theorem~\ref{thm2} gives $\|\widehat B-B\|_{\op}=O_p(T^{-1})$.
\end{proof}

\subsection{Asymptotic approximation for inference}\label{sec_app_det3}
The feasible approximation in Theorem~\ref{thm3} extends directly to the intercept specification. Generate $\widehat W_{\mathbf{x}}$ and $\widehat W_{U|\mathbf{x}}$ using the estimated eigenelements of $\widehat{\Omega}_{\mathbf{x}\mathbf{x}}$ and $\widehat{\Omega}_{U|\mathbf{x}}$, as in \eqref{eqwun}--\eqref{eqwn}, and center the simulated regressor path according to
\begin{equation*}
\widehat W_{\mathbf{x}}^c(r)=\widehat W_{\mathbf{x}}(r)-\int_0^1\widehat W_{\mathbf{x}}(u)\,du.
\end{equation*}
The simulated error path $\widehat W_{U|\mathbf{x}}$ is left unchanged. The feasible counterpart of the limiting distribution in \eqref{eqconv1app} is obtained by replacing $W_{\mathbf{x}}^c$ and $W_{U|\mathbf{x}}$ with $\widehat W_{\mathbf{x}}^c$ and $\widehat W_{U|\mathbf{x}}$, respectively.
\begin{theorem}\label{thmapp3}
Suppose that Assumptions~\ref{assum_app_int} and~\ref{assum4} hold and that Assumption~\ref{assumomega2} holds for the covariance estimators computed from the demeaned variables. Let $M=M_T\to\infty$. Then $\|\widehat W_{\mathbf{x}}^c-W_{\mathbf{x}}^c\|_\infty=o_p(1)$. If $M^{(\rho+3)/2}\sqrt{h/T}\to0$, then $\|\widehat W_{U|\mathbf{x}}-W_{U|\mathbf{x}}\|_\infty=o_p(1)$. Consequently, for fixed $v\in\HX$ and $w\in\HY$, define
\begin{equation}\label{eqinferenceadd}
\widehat{\mathcal Z}_{v,w}^c:=\left\langle\left(\int_0^1\widehat W_{\mathbf{x}}^c(r)\otimes d\widehat W_{U|\mathbf{x}}(r)\right)\left(\int_0^1\widehat W_{\mathbf{x}}^c(r)\otimes\widehat W_{\mathbf{x}}^c(r)\,dr\right)^{-1}v,w\right\rangle.
\end{equation}
Conditionally on the sample, the distribution of $\widehat{\mathcal Z}_{v,w}^c$ converges weakly in probability to the distribution on the right-hand side of \eqref{eqconv1app}. Hence, it consistently approximates the limiting distribution of $\langle T(\widehat B-B)v,w\rangle$.
\end{theorem}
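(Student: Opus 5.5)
The plan is to reduce Theorem~\ref{thmapp3} to the argument already given for Theorem~\ref{thm3}. The only new ingredient is the centering map $\mathcal C:f\mapsto f-\int_0^1f(u)\,du$, which is applied to the regressor path alone.

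First, the covariance estimators computed from the demeaned residuals satisfy Assumption~\ref{assumomega2} by hypothesis. Hence the first half of the proof of Theorem~\ref{thm3} carries over verbatim. We use the square-root coupling $W_{\mathbf{x}}=\Omega_{\mathbf{x}\mathbf{x}}^{1/2}B_{\mathbf{x}}$ and $\widehat W_{\mathbf{x}}=\widehat\Omega_{\mathbf{x}\mathbf{x}}^{1/2}B_{\mathbf{x}}$, together with the eigen-coupling for $\widehat W_{U|\mathbf{x}}$ built from the same scalar Brownian motions $W_{1,j}$. Under Assumption~\ref{assum4} and $M^{(\rho+3)/2}\sqrt{h/T}\to0$, this gives $\|\widehat W_{\mathbf{x}}-W_{\mathbf{x}}\|_\infty=o_p(1)$ and $\|\widehat W_{U|\mathbf{x}}-W_{U|\mathbf{x}}\|_\infty=o_p(1)$. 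Theorem~\ref{thmapp2} identifies the target covariance operators: the limit is driven by the uncentered $W_{U|\mathbf{x}}$, with covariance $\Omega_{U|\mathbf{x}}$. Since $\mathcal C$ is linear and $\|\mathcal C f\|_\infty\le 2\|f\|_\infty$, we get $\|\widehat W^c_{\mathbf{x}}-W^c_{\mathbf{x}}\|_\infty\le 2\|\widehat W_{\mathbf{x}}-W_{\mathbf{x}}\|_\infty=o_p(1)$, which is the first claim.

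Next, set $\widehat J^c=\int_0^1\widehat W^c_{\mathbf{x}}\otimes\widehat W^c_{\mathbf{x}}\,dr$ and let $J^c$ be its population analogue. Uniform path convergence gives $\|\widehat J^c-J^c\|_{\op}=o_p(1)$. As noted after Assumption~\ref{assum_test}, $J^c$ is almost surely invertible because $\Omega_{\mathbf{x}\mathbf{x}}$ is positive definite. Since $\HX$ is finite-dimensional, continuity of matrix inversion then yields $\|(\widehat J^c)^{-1}-(J^c)^{-1}\|_{\op}=o_p(1)$.

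For the stochastic integral, $W_{U|\mathbf{x}}$ is independent of $W_{\mathbf{x}}$ and hence of $W^c_{\mathbf{x}}$. The integrand $W^c_{\mathbf{x}}$ is not adapted, because it depends on $\int_0^1 W_{\mathbf{x}}$. This is the point needing care. The plan is to condition on the sample and on the entire $\HX$-valued auxiliary path $B_{\mathbf{x}}$. Conditionally, $\widehat W^c_{\mathbf{x}}$ and $W^c_{\mathbf{x}}$ are deterministic continuous paths, while the $W_{1,j}$ remain independent standard Brownian motions. The integral $\int\langle W^c_{\mathbf{x}},a\rangle\,d\langle W_{U|\mathbf{x}},w\rangle$ is therefore a Wiener integral of a deterministic integrand. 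It coincides with the left-endpoint-sum definition, $\int W_{\mathbf{x}}\otimes dW_{U|\mathbf{x}}-\bar W_{\mathbf{x}}\otimes W_{U|\mathbf{x}}(1)$, and the Itô isometry applies to it. Set $a=(J^c)^{-1}v$ and split $\langle(\widehat S^c-S^c)a,w\rangle=I_{1,T}+I_{2,T}$ exactly as before. This gives $\mathbb E^*(I_{1,T}^2)\le\langle\Omega_{U|\mathbf{x}}w,w\rangle\|a\|^2\|\widehat W^c_{\mathbf{x}}-W^c_{\mathbf{x}}\|_\infty^2$ and $\mathbb E^*(I_{2,T}^2)\le\|a\|^2\|\widehat W^c_{\mathbf{x}}\|_\infty^2\|w\|^2\sum_j\|c_{j,T}\|^2$, and both bounds are $o_p(1)$. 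The conditional Markov inequality then makes $I_{1,T}$ and $I_{2,T}$ negligible.

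Finally, $\sum_{j\le M}\widehat\lambda_j=O_p(1)$ gives $\|\widehat S^c\|_{\op}=O_p(1)$. Combining this with the inverse convergence, $\widehat{\mathcal Z}^c_{v,w}-\langle S^c(J^c)^{-1}v,w\rangle=o_p(1)$ on the coupling space. The couplings preserve the conditional law of the simulated paths, so the conditional distribution of $\widehat{\mathcal Z}^c_{v,w}$ converges weakly in probability to the right-hand side of \eqref{eqconv1app}. By Theorem~\ref{thmapp2}, that is the limit law of $\langle T(\widehat B-B)v,w\rangle$. The main obstacle is the non-adaptedness of the centered integrand; the plan handles it by conditioning on the whole regressor path and using independence.
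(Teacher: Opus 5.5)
Your proposal is correct and follows the paper's proof. Continuity of the centering map under the sup norm transfers the coupling bounds from Theorem~\ref{thm3} to $\widehat W^c_{\mathbf{x}}$, and the stochastic-integral argument is rerun with the centered regressor paths while $\widehat W_{U|\mathbf{x}}$ is left uncentered. Your explicit treatment of the non-adapted centered integrand is also right: conditioning on the sample and on the whole auxiliary regressor path turns the integrals into Wiener integrals, which is what the paper's one-line remark about centering preserving conditional independence from the simulated error path relies on.
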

\begin{proof}[Proof of Theorem~\ref{thmapp3}]
Let $\mathcal C$ denote the centering map $\mathcal C f(r)=f(r)-\int_0^1f(u)\,du$. This map is continuous under the uniform norm because $\|\mathcal C f-\mathcal C g\|_\infty\leq2\|f-g\|_\infty$. The coupling argument in the proof of Theorem~\ref{thm3} gives $\|\widehat W_{\mathbf{x}}-W_{\mathbf{x}}\|_\infty=o_p(1)$ and therefore $\|\widehat W_{\mathbf{x}}^c-W_{\mathbf{x}}^c\|_\infty=o_p(1)$. Under the stated rate condition, the same argument gives $\|\widehat W_{U|\mathbf{x}}-W_{U|\mathbf{x}}\|_\infty=o_p(1)$. Moreover, positive definiteness of $\Omega_{\mathbf{x}\mathbf{x}}$ ensures that $\int_0^1W_{\mathbf{x}}^c(r)\otimes W_{\mathbf{x}}^c(r)\,dr$ is almost surely invertible. Centering the regressor path preserves its conditional independence from the simulated error path. Applying the stochastic-integral argument from the proof of Theorem~\ref{thm3} with $W_{\mathbf{x}}$ and $\widehat W_{\mathbf{x}}$ replaced by $W_{\mathbf{x}}^c$ and $\widehat W_{\mathbf{x}}^c$ establishes the stated conditional convergence.
\end{proof}
Thus, relative to the zero-intercept procedure, the only additional Monte Carlo step is to center each simulated regressor path; $\widehat W_{U|\mathbf{x}}$ remains uncentered. Local-average confidence intervals are obtained from \eqref{eqlocalci} by taking $v=e_j$ and $w=w_k$.
\begin{remark}[Joint inference for a fixed collection of projections]\label{rem_joint_projection}
Let $\mathcal Z_{v,w}^c$ denote the limiting random variable on the right-hand side of \eqref{eqconv1app}. Fix $v\in\HX$ and a finite collection $w_1,\ldots,w_J\in\HY$, where $J$ does not increase with $T$. The scalar convergence in Theorem~\ref{thmapp2} extends jointly to the vector of projected estimation errors. Indeed, for every fixed $c=(c_1,\ldots,c_J)'\in\mathbb R^J$,
\begin{equation*}
\sum_{m=1}^Jc_m\langle T(\widehat B-B)v,w_m\rangle=\left\langle T(\widehat B-B)v,\sum_{m=1}^Jc_mw_m\right\rangle\Rightarrow\mathcal Z_{v,\sum_{m=1}^Jc_mw_m}^c=\sum_{m=1}^Jc_m\mathcal Z_{v,w_m}^c.
\end{equation*}
Because $\sum_{m=1}^Jc_mw_m$ is a fixed element of $\HY$, the Cramér--Wold device gives joint convergence of $(\langle T(\widehat B-B)v,w_m\rangle)_{m=1}^J$. If the same auxiliary Brownian paths are used for all $J$ coordinates within each Monte Carlo replication, applying Theorem~\ref{thmapp3} to every fixed linear combination and using the conditional Cramér--Wold device likewise give weak convergence in probability of the joint conditional distribution of $(\widehat{\mathcal Z}_{v,w_m}^c)_{m=1}^J$. Consequently, any continuous function of this finite-dimensional vector, including its Euclidean norm, can be calibrated using the joint Monte Carlo draws, provided that its limiting distribution is continuous at the relevant quantile. This result applies to fixed $J$ and does not assert weak convergence of $T(\widehat B-B)v$ in the full space $\HY$ or justify an increasing-dimensional norm statistic.
\end{remark}

\section{Additional empirical implementation and diagnostics}
\label{sec_app_empirical}
\subsection{Joint calibration of the common-response comparison}
\label{sec_app_common_response}
\noindent We provide implementation details for the fixed-basis common-response comparison reported in Section~\ref{subsec:responses_and_margins}. Let $a=z_1<\cdots<z_G=b$ denote the common anomaly grid. For grid functions $u$ and $v$, each inner product is evaluated by composite trapezoidal quadrature,
\[
\langle u,v\rangle\approx\sum_{g=1}^{G-1}\frac{z_{g+1}-z_g}{2}\{u(z_g)v(z_g)+u(z_{g+1})v(z_{g+1})\}.
\]
This rule is used to evaluate the projected coordinates $\langle\widehat B\mathbf c,\phi_m\rangle$ entering the projected norm.
Let $\mathbf c=e_1-e_2$ denote the common-response contrast. Under $H_{0,J}:\langle B\mathbf c,\phi_m\rangle=0$ for $m=1,\ldots,J$, define the coefficient-error coordinate vector
\[\mathbf Z_{T,J}:=\bigl(T\langle(\widehat B-B)\mathbf c,\phi_1\rangle,\ldots,T\langle(\widehat B-B)\mathbf c,\phi_J\rangle\bigr)'.\]
Under the null, the empirical coordinate vector $\bigl(T\langle\widehat B\mathbf c,\phi_1\rangle,\ldots,T\langle\widehat B\mathbf c,\phi_J\rangle\bigr)'$ equals $\mathbf Z_{T,J}$. For every fixed $\widetilde c=(c_1,\ldots,c_J)'\in\mathbb R^J$,
\[
\widetilde c'\mathbf Z_{T,J}=\sum_{m=1}^{J}c_m\langle T(\widehat B-B)\mathbf c,\phi_m\rangle=\left\langle T(\widehat B-B)\mathbf c,\sum_{m=1}^{J}c_m\phi_m\right\rangle.
\]
Theorems~\ref{thmapp2} and~\ref{thmapp3}, Slutsky's theorem, and the Cramér--Wold device give the joint limit of $\mathbf Z_{T,J}$; see Remark~\ref{rem_joint_projection}. In each Monte Carlo replication, the same simulated Brownian paths are used across all $J$ directions, and the Euclidean norm of the resulting joint coordinate vector provides the simulated counterpart of the projected norm.
The baseline response space uses $J=20$. Its retained share of the temporally centered sample CLR $L^2$ variation, ${\sum_{t=1}^T\|P_J(Y_t-\bar Y_T)\|^2}/{\sum_{t=1}^T\|Y_t-\bar Y_T\|^2}$, is 90.7\%. For the finite-sample calibration, we use $M=7$, the nearest integer to $\sqrt{T/h}$, where $h$ is chosen as the nearest integer to $T^{1/4}$. This choice of $M$ is heuristic and is not intended as an asymptotic selection rule. The first seven eigencomponents account for ${\sum_{j=1}^{7}\widehat\lambda_j}/{\operatorname{tr}(\widehat\Omega_{U|\mathbf{x}})}=94.4\%$ of the estimated conditional response long-run covariance trace.
All sensitivity calculations use 50{,}000 joint draws. Re-estimating the fully modified model over $J\in\{8,12,16,20,24,28,32,40\}$ with $M=7$ gives add-one Monte Carlo $p$-values ranging from 0.0174 to 0.0218. Holding $J=20$ and recalibrating over $M\in\{3,5,7,10,15,20\}$ gives $p$-values ranging from 0.0172 to 0.0221. Every configuration rejects the fixed-basis equality restriction at the 5\% level.

\subsection{Sensitivity to changing spatial coverage}
\label{sec_app_coverage}

\noindent
The number of observed cells rises from 955 of 2{,}592 in 1850 to 2{,}093 in
2024, so the annual densities are constructed from a changing cross-section of
locations. We evaluate sensitivity to spatial coverage under two alternative
constructions, repeating kernel density estimation, the CLR transformation,
the basis representation, fully modified estimation, the margin summaries,
and the between-cointegration test. The forcing vector is unchanged.

The first construction follows \citet{chang2020evaluating}: Northern- and
Southern-Hemisphere densities are estimated separately and averaged with equal
weight. The second restricts the cross-section to the 190 cells that report at
least one monthly observation in every year. These cells account for 7.3\% of
the grid and are concentrated in the Northern Hemisphere; the median annual
number of grid-cell-month observations falls from 15{,}600 to 2{,}203.

\begin{table}[h]
\centering
\caption{Coverage diagnostics. Temperature margins are in $^\circ$C; WarmMass
and Extreme are warm- and combined-tail probability changes.}
\label{Tab:Balanced}
\renewcommand{\arraystretch}{1.15}
\setlength{\tabcolsep}{6pt}
\begin{tabular}{llrrrrrr}
\toprule
Cross-section & Forcing & $\Delta\mu$ & $\delta^{\mathrm{loc}}$
& $\Delta\mathrm{IQR}$ & $\Delta\mathrm{WarmMass}$
& $\Delta\mathrm{Extreme}$ & $p(\widehat{\mathcal T}_V)$ \\
\midrule
Baseline & F1 & 0.3125 & 0.2646 & $-$0.1120 & 0.0223 & 0.0018 & \multirow{2}{*}{0.0748} \\
         & F2 & 0.0789 & 0.0883 & 0.1500 & 0.0080 & 0.0083 & \\
\midrule
Hemispheric & F1 & 0.2955 & 0.2546 & $-$0.1235 & 0.0186 & $-$0.0030 & \multirow{2}{*}{0.1170} \\
equal weight & F2 & 0.0816 & 0.0902 & 0.1601 & 0.0108 & 0.0128 & \\
\midrule
Fixed grid & F1 & 0.2956 & 0.2650 & $-$0.1642 & 0.0090 & $-$0.0230 & \multirow{2}{*}{0.7317} \\
(190 cells) & F2 & 0.1584 & 0.1635 & 0.2118 & 0.0297 & 0.0323 & \\
\bottomrule
\end{tabular}
\end{table}

\noindent
Hemispheric weighting produces mean, location, and IQR estimates close to the
baseline values. The fixed-grid construction concerns a smaller, time-invariant
set of locations, and its estimates differ more in magnitude, especially for
the F2 mean response. Both alternatives preserve the negative F1 and positive F2 IQR point
estimates, and the between-cointegration test does not reject at the 5\%
level under either construction.

\subsection{Pointwise uncertainty for implied densities and local mass responses}\label{sec_app_density_ci}
\noindent We describe uncertainty in the implied densities and full local probability-mass responses by propagating joint draws of the CLR-response estimation errors through the corresponding transformations. Both calculations use the feasible approximation developed in Section~\ref{sec_app_det3} and proceed in four steps.
\begin{description}[style=nextline, leftmargin=1em, font=\bfseries,topsep=-2pt, itemsep=0pt]
\item[Step 1: Fixing the response representation and reporting inputs] \hfill
Let $\{\phi_m\}_{m=1}^J$, with $J=20$, denote the fixed orthonormal Fourier directions used in Section~\ref{sec_app_common_response}, and let $P_J$ denote the projection onto their span. For $k=1,2$, define $\widehat\beta_{k,J}=P_J\widehat B(e_k)$. As in Section~\ref{subsec:responses_and_margins}, $f_0$ is the arithmetic average of the annual densities, and $\Delta x_k$ is the chosen reporting increment, set equal to the sample standard deviation of forcing portfolio $k$. These inputs are held fixed throughout the simulation.
\item[Step 2: Simulating joint response-error draws] \hfill
Using the feasible intercept construction in Section~\ref{sec_app_det3}, generate $R_{\mathrm{MC}}=50{,}000$ joint draws of $\{\widehat{\mathcal Z}_{e_k,\phi_m}^c:k=1,2,\ m=1,\ldots,J\}$ in \eqref{eqinferenceadd}. We use the Parzen kernel with $h=4$ and the rank-$M$ covariance approximation with $M=7$, as described and assessed in Section~\ref{sec_app_common_response}. Within each replication $\ei$, the same simulated Brownian paths are used across all $J$ response directions and both forcing coordinates, with only the forcing path centered. Reconstruct the projected response-error draw as
\begin{equation*} E_{k,J}^{(\ei)}(s)=\frac{1}{T}\sum_{m=1}^J\widehat{\mathcal Z}_{e_k,\phi_m}^{c,(\ei)}\phi_m(s),\qquad \ei=1,\ldots,R_{\mathrm{MC}}. \end{equation*}
The factor $T^{-1}$ converts the simulated limiting draw to the response-error scale (see Theorem~\ref{thmapp2}).
\item[Step 3: Mapping the draws to densities] \hfill
For each replication, apply the normalized inverse-CLR map to obtain
\begin{equation*} f_k^{(\ei)}(s)=\frac{f_0(s)\exp\!\left\{\Delta x_k[\widehat\beta_{k,J}(s)-E_{k,J}^{(\ei)}(s)]\right\}}{\displaystyle\int_a^bf_0(u)\exp\!\left\{\Delta x_k[\widehat\beta_{k,J}(u)-E_{k,J}^{(\ei)}(u)]\right\}\,du}. \end{equation*}
The fitted Fourier response is used without the local averaging applied in the left panels of Figure~\ref{Fig:RF_Response}. Each $f_k^{(\ei)}$ is positive and integrates to one, with normalization performed separately for every draw.
\item[Step 4: Computing the pointwise simulation intervals] \hfill
Let $\widehat q_{k,\tau}(s)$ denote the empirical $\tau$ quantile of $\{f_k^{(\ei)}(s):\ei=1,\ldots,R_{\mathrm{MC}}\}$. At each fixed anomaly value $s$, the central 95\% pointwise simulation interval for the implied density is
\begin{equation*} \bigl[\widehat q_{k,0.025}(s),\widehat q_{k,0.975}(s)\bigr]. \end{equation*}
Because $f_0$ is held fixed, the corresponding interval for the implied density change $f_k(s)-f_0(s)$ is obtained by subtracting $f_0(s)$ from both endpoints. Figure~\ref{Fig:RF_Response} displays these intervals on the density scale. Their pointwise boundaries need not integrate to one.
For the full local probability-mass responses in Figure~\ref{Fig:BinLossDecomp}, use the same joint density draws from Step~3 and define
\begin{equation*} D_{k,\ell}^{(\ei)}(r)=\int_{I_\ell(r)}\{f_k^{(\ei)}(s)-f_0(s)\}\,ds,\qquad \ei=1,\ldots,R_{\mathrm{MC}}. \end{equation*}
The integrals are evaluated numerically on a fine grid. At each fixed window center $r$, the empirical 2.5th and 97.5th percentiles of $\{D_{k,\ell}^{(\ei)}(r):\ei=1,\ldots,R_{\mathrm{MC}}\}$ form the pointwise simulation interval for $\Delta L_{k,\ell}^{\mathrm{full}}(r)$. These intervals are reported only for the full local responses, not for the location-only or residual reshaping components. Taking quantiles after transforming, normalizing, and integrating the joint draws preserves the dependence represented by the covariance approximation; transforming marginal CLR interval endpoints separately would not generally reproduce these quantiles.
\end{description}
For fixed $f_0$, $\Delta x_k$, response space, rank-$M$ covariance approximation, and window center, the normalized inverse-CLR map is smooth in the retained Fourier coordinates, while integration over a fixed window is a continuous linear functional of the resulting density. Remark~\ref{rem_joint_projection} and the finite-dimensional delta method therefore justify the first-order propagation of joint CLR-response estimation uncertainty to density values and local probability masses under this calibration. The resulting intervals are pointwise rather than simultaneous and condition on the reference density, reporting increments, response-space choice, and covariance truncation.
\bibliography{VtD_biblio}
\end{document}